\newtheorem{Theorem}{Theorem}
\newtheorem{Remark}{Remark}
\newtheorem{Corollary}{Corollary}
\begin{document}
\title{Achievable Rates for Discrete Memoryless Multicast Networks With and Without Feedback}
\author{
  \IEEEauthorblockN{Youlong Wu\\}
\thanks{This paper was in part presented at the \emph{IEEE Information Theory Workshop}, in  Jeju, Korea, Oct 2015, and at the \emph{53rd Annual Allerton Conference on Communication, Control, and Computing }, in  Monticello, IL, USA, Sep 2015. }

}
\maketitle

 \begin{abstract}
Coding schemes for  discrete memoryless multicast networks (DM-MN) with rate-limited  feedback from the receivers and relays to the transmitter are proposed.  The schemes improve over the noisy network coding proposed by Lim \emph{et al.}.   For the single relay channel with relay-transmitter feedback, our coding schemes  recover Gabbai and Bross's results, and strictly improve on  noisy network coding,  distributed decode-forward coding proposed by Lim \emph{et al.},  and   all  known lower bounds on the achievable rates proposed for the setup without feedback. 

The coding schemes are based on block-Markov coding, superposition coding, sliding-window/backward decoding and hybrid relaying strategies.  In our Scheme 1A,   the relays and receivers use compress-forward strategy   and send the compression indices to the transmitter through the feedback pipes.  After obtaining the compression indices through feedback, the transmitter sends them together with the source message. Each receiver uses backward decoding to jointly decode the source message and all compression indices. Our Scheme 1B is similar, except that  here each relay not only uses compress forward to compress its channel outputs, but also performs partial decode forward to decode a common part of source message.  Our Scheme 1C allows different relays to decode different parts of the source message (in Scheme 1A, no relay decodes any part of the source message; in Scheme 1B, all relays decode the same part of source message), which can  achieve higher rates than Scheme 1A and Scheme 1B.

Motivated by the feedback coding scheme, we propose  a new coding scheme for DM-MN \emph{without}  feedback, where the transmitter decodes the compression indices instead of obtaining them directly through feedback. It is shown that the scheme still  improves  noisy network coding and  distributed decode-forward coding for some channels.

 \end{abstract}

% \begin{IEEEkeywords}
% \end{IEEEkeywords}$(\set{X}_1\times\ldots\times\set{X}_N,P_{Y^N_1|X_1^N}, \set{Y}_1\times\ldots\times\set{Y}_N)$
\section{Introduction}
The relay channel \cite{Meulen'74}  describes a 3-node communication channel where the transmitter sends a message to the receiver with the assistance of a relay.  Cover and El Gamal \cite{Cover'79} proposed two basic coding strategies: compress-forward and decode-forward that are based on block-Markov coding. The  compress-forward strategy has the relay compress its outputs and send the compression index to the receiver.  The  decode-forward strategy has the relay first decode all or part of the message  and then send the decoded message to the receiver. Both strategies have been generalized to   multiple-relay channels in \cite{Kramer'05,Xie'05, Yassaee'11, Yassaee'08,Li'15}. The compress-forward strategy was later extended to  multi-message multicast and  multi-messages networks, the so called \emph{noisy network coding} (NNC) \cite{Lim'11, Hou'13}. Recently, a \emph{distributed decode-forward coding} (DDF) scheme was proposed  for   multicast  \cite{Lim'14ITW} and   broadcast relay networks \cite{Lim'14ISIT}, which  uses  the partial decode-forward strategy at the relays and backward encoding at the transmitter.  For $N$-node Gaussian relay networks, both NNC and DDF achieve within constant gap from the capacity. 

Decode-forward and compress-forward require sophisticated operations. A much simpler strategy, called amplify-forward, was introduced by Schein and Gallager \cite{Schein'00} for the 4-node Gaussian diamond network.  When using amplify-forward, the relay scales its received signal and forwards it to the receiver. The  amplify-forward stragegy was generalized to  multihop relay networks in \cite{Agnihotri'11,Maric'12,Niesen'13}. A hybrid coding scheme   that  unifies both amplify-forward and NNC for  general noisy relay networks was proposed in \cite{Minero'11}.

 %The general relay channel with  feedback were studied in \cite{Cover, Yacov, }. 
 A different line of works  concentrated on the relay channel with feedback \cite{Cover'79, Gabbai'06}. In \cite{Cover'79} it was shown that perfect feedback from the receiver to the relay makes the relay channel   \emph{physically degraded}, and therefore decode-forward achieves   capacity. For the case of  feedback from the receiver or relay to the transmitter, the capacity is  unknown in general.  In \cite{Gabbai'06} Gabbai and Bross studied this problem and  proposed  inner bounds by using   restricted decoding and deterministic partitioning \cite{Willems'83}. It was shown that  feedback  can strictly improve the no-feedback inner bounds achieved by the compress-forward and decode-forward strategies for some relay channel, for example, the general Gaussian relay channel and $Z$ relay channel. 
 
 In this paper, we  consider the general discrete memoryless multicast network (DM-MN) with   feedback. This network consists of  $N\geq 3$ nodes  where the transmitter sends a source message to different  receivers with the assistance of multiple relays, and each receiver or relay can send feedback signals through a noiseless but rate-limited feedback pipe to the transmitter.  We propose new coding schemes  based on block-Markov coding, superposition coding, {sliding-window/backward decoding} and hybrid relaying strategies. Specifically, in our Scheme 1A,   the relays and receivers use compress-forward strategy   and send the compression indices to the transmitter through the feedback pipes.  After obtaining the compression indices through feedback, the transmitter sends them together with the source message. Each receiver uses backward decoding to jointly decode the source message and all compression indices. Our Scheme 1B is similar, except that  here each relay not only uses compress forward to compress its channel outputs, but also performs partial decode forward to decode a common part of source message.  Our Scheme 1C allows different relays to decode different parts of the source message (in Scheme 1A, no relay decodes any part of the source message; in Scheme 1B, all relays decode the same part of source message), which can  achieve higher rates than Scheme 1A and Scheme 1B.
 
 Our coding schemes (Scheme 1A-1C) are reminiscent of the NNC scheme for general networks \cite{Lim'11, Hou'13} in the sense that the relays and receivers compress their channel outputs and broadcast the compression messages. However,  we introduce combined compress-forward and partial decode-forward strategy  into the relay networks. Moreover,  our  schemes have the transmitter \emph{forward} the receivers' and relays' compression messages, instead of creating a new compression message. This is similar to the schemes proposed in  \cite{Youlong'14} for the broadcast channel with feedback, where the transmitter forwards the receivers' compression messages. Finally, since the transmitter knows the source message and can obtain the compression messages through feedback,  we are able to superpose the transmitter's input on the receivers' and relays' inputs, establishing cooperation between the transmitter and the receivers\&relays.  
 It is shown that our coding schemes generalize Gabbai and Bross's results \cite{Gabbai'06} for the relay channel with relay-transmitter feedback.  For some channels, such as the  Gaussian relay channel and $Z$ relay channels,  our coding schemes  strictly improve over the NNC,  the DDF, and   all  known lower bounds on the achievable rate in the absence of feedback.  % Motivated by our feedback  schemes, we  propose  a new coding scheme for discrete memoryless multicast networks \emph{without}  feedback. 
 %Our coding schemes generalize Gabbai and Bross's results \cite{Gabbai'06} for the relay channel with relay-transmitter feedback.  For some channels, such as the  Gaussian relay channel and $Z$ relay channel,  our coding schemes  improve over  NNC  \cite{Lim'11},   DDF  \cite{Lim'14ITW,Lim'14ISIT} and   all  known lower bounds on the achievable rate in the absence of feedback. 
  
  %Note that in the schemes  above, the relays and receivers create compression messages and sent them as part of message the channel inputs (in Scheme 1B the relay also sends part of the source message) and  in the feedback pipes as well. Note that the transmitter knows the source message it generated and obtain the compression messages through feedback, we are able to superpose the transmitter's inputs on the receivers' and relays' inputs, which attains cooperation between the transmitter and the receivers/relays. 
  
   Motivated by our feedback coding schemes, we  propose  a new  scheme for  DM-MN \emph{without}  feedback. The key idea is that in each block, instead of obtaining compression messages directly through the feedback pipes,   the transmitter  \emph{decodes} the compression messages  based on its observed channel outputs.  Note that in absence of feedback, the transmitter's input cannot be superposed on  the receivers' and relays' inputs like the feedback case. This is because %it has to wait one block to observe the channel outputs, and by which decodes the compression messages, which means that 
   at the beginning of each block $b\in[1:B]$, the transmitter can only recover the relays' and receivers' inputs of block $b-1$. To ensure the cooperation between the transmitter and the receiver\&relays, in each block $b$, we let the relays and receivers resends some messages that were sent in block $b-1$, which introduces dependence between the inputs of the transmitter and  receivers\&relays. It is shown that our non-feedback coding scheme strictly improves the  NNC and DDF lower bound for some channels.

This paper is organized as follows.  Section II describes the system model:  multicast networks with and without  feedback. Section III gives our main results and  Section IV presents three examples  comparing our lower bounds with  known lower bounds. Section V contains the proofs of  Theorems \ref{Them:unicast1}, \ref{Them:unicast2} and \ref{Them:unicast2.1}.  Finally, Sections  VI gives the proof of  Theorem \ref{Them:unicast3}.

Notation: We use capital letters to denote  random variables and small letters for their realizations, e.g. $X$ and $x$. For nonnegative integers $k,j$,  let $X^j_k:= (X_k,\ldots,X_j)$ and $x^j_k:= (x_{k,1},\ldots,x_{k,j})$. 

For a  set of integers $\set{A}\subseteq[1:N]$, we denote by $|\set{A}|$ its cardinality and its complement  by $\set{A}^c:=[1:N]\backslash \set{A}$. A tuple of random variables is denoted as $X(\set{A}):=[X_k:k\in\set{A}]$. %Given a set $\set{A}\subseteq[1:N]$ and $k\in\set{A}$, let 
%\begin{IEEEeqnarray*}{rCl}
%  \set{A}_{k}:=\set{A}\cap[1:k-1], \set{A}^c_{k}:=\set{A}^c\cap[1:k-1].
%\end{IEEEeqnarray*}

%and $\set{B}_{\bar{k}}:=[2:N]\backslash {k}$. 

Given a distribution $P_A$ over some alphabet $\set{A}$, a positive real number $\epsilon>0$, and a positive integer $n$, $\set{T}_{\epsilon}^{(n)}(P_A)$ is the typical set in \cite{Cover'book}.  Given a positive integer $n$, let $\mathbf{1}_{[n]}$ denote the all-one tuple of length $n$, e.g. $\mathbf{1}_{[3]}=(1,1,1)$. Define a function $\mathcal{C}(x):=\frac{1}{2}\log_2(1+x)$. %, $\mathcal{A}_k:=\mathcal{A}\cap [2:k-1]$ with $\set{A}_2=\emptyset$

\section{System model}
\subsection{Discrete  memoryless multicast networks}\label{Model:nofb}
Consider an $N$-node DM-MN, see Fig. \ref{fig:Mrelays}. Let Node 1 be the transmitter, and $\set{R}$ and $\set{D}$ denotes the set of relays and receivers, respectively, 
where $\set{R}\subset[2:N]$ and $\set{D}=[2:N]\backslash \set{R}$.  This setup is characterized by $2N$  finite alphabets $\set{X}_1,\ldots,\set{X}_N, \set{Y}_1,\ldots,\set{Y}_N$ and  a channel law $P_{Y_1\cdots Y_N|X_1,\ldots,X_N}(y_1,\ldots,y_N|x_1,\ldots,x_N)$, where input $x_j\in\set{X}_j$ and output $y_j\in\set{Y}_j$, for $j\in[1:N]$. At  discrete-time $i\in[1:n]$, Node $j\in[1:N]$  sends  input $x_{j,i}\in \set{X}_j$ and  observes    $y_{j,i}\in\set{Y}_j$, where $n$ denotes the total blocklength used in the transmission.

\begin{figure}[ht]
\centering
\includegraphics[width=0.45\textwidth]{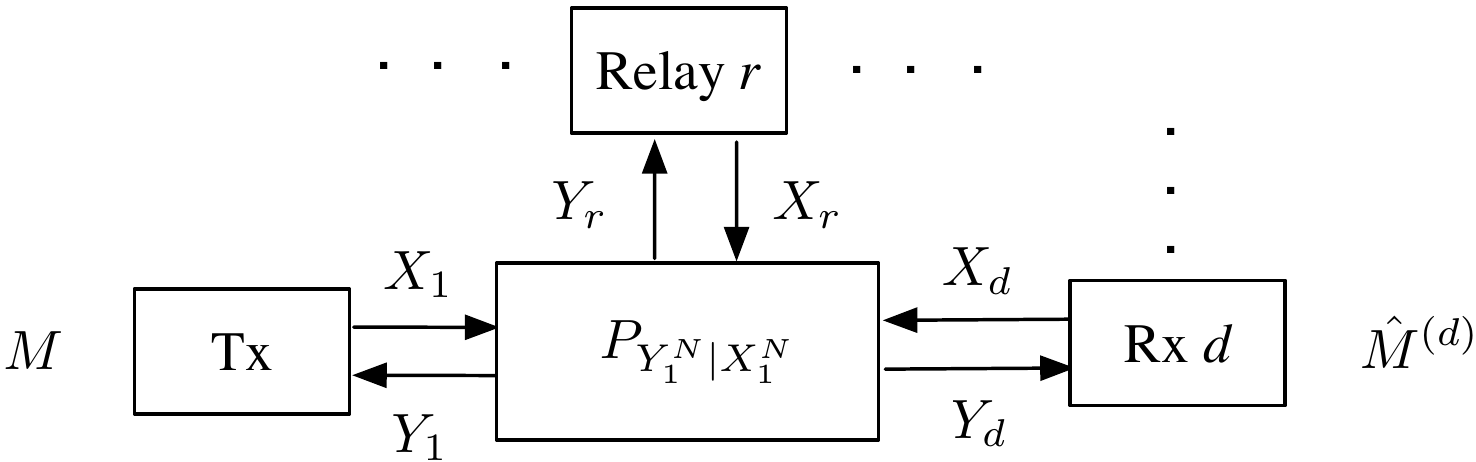}
\caption{$N$-node discrete  memoryless  multicast network } \label{fig:Mrelays}
\end{figure}

The transmitter  communicates a message $m\in[1:2^{nR}]$ to the set of receivers $\set{D}$ with the assistance of   relays  $\set{R}$. A $(2^{nR},n)$  code for this channel has
\begin{itemize}
\item a message set $[1:2^{nR}]$,
\item a source encoder that maps $(M,Y^{i-1}_1)$ to the channel input $X_{1,i}(M,Y^{i-1}_1)$, for each time $i\in[1:n]$,
\item relay and receiver encoders that  maps $Y^{i-1}_k$ to a sequence $X_{k,i}(Y^{i-1}_k)$, for each $k\in[2:N]$ and $i\in[1:n]$,
%\item  feedback-encoders that   produce feedback symbols $F_{k,i}(y^i_{k})$, for  each $k\in[2:N]$ and  $i\in[1:n]$,
\item  decoders that   estimates $\hat{M}^{(d)}$ based on $Y^n_d$, for $d\in\mathcal{D}$.
\end{itemize}
Suppose $m$ is uniformly distributed over the message set.  A rate $R$  is called achievable if for every blocklength $n$, there exists  a $(2^{nR},n)$ code such that the average probability of error \[P^{(n)}_e=\text{Pr}{[\hat{M}^{(d)}\neq M,~\text{for some}~d\in\mathcal{D}}]\] tends to 0 as the $n$ tends to infinity. The capacity ${C}_{\text{NoFb}}$  is the supremum of the set of achievable rates $R$ such that  $\lim_{n\to \infty }P_e^{(n)}=0$. 

\subsection{Discrete  memoryless multicast networks with feedback}\label{Model:fb}
\begin{figure}[ht]
\centering
\includegraphics[width=0.45\textwidth]{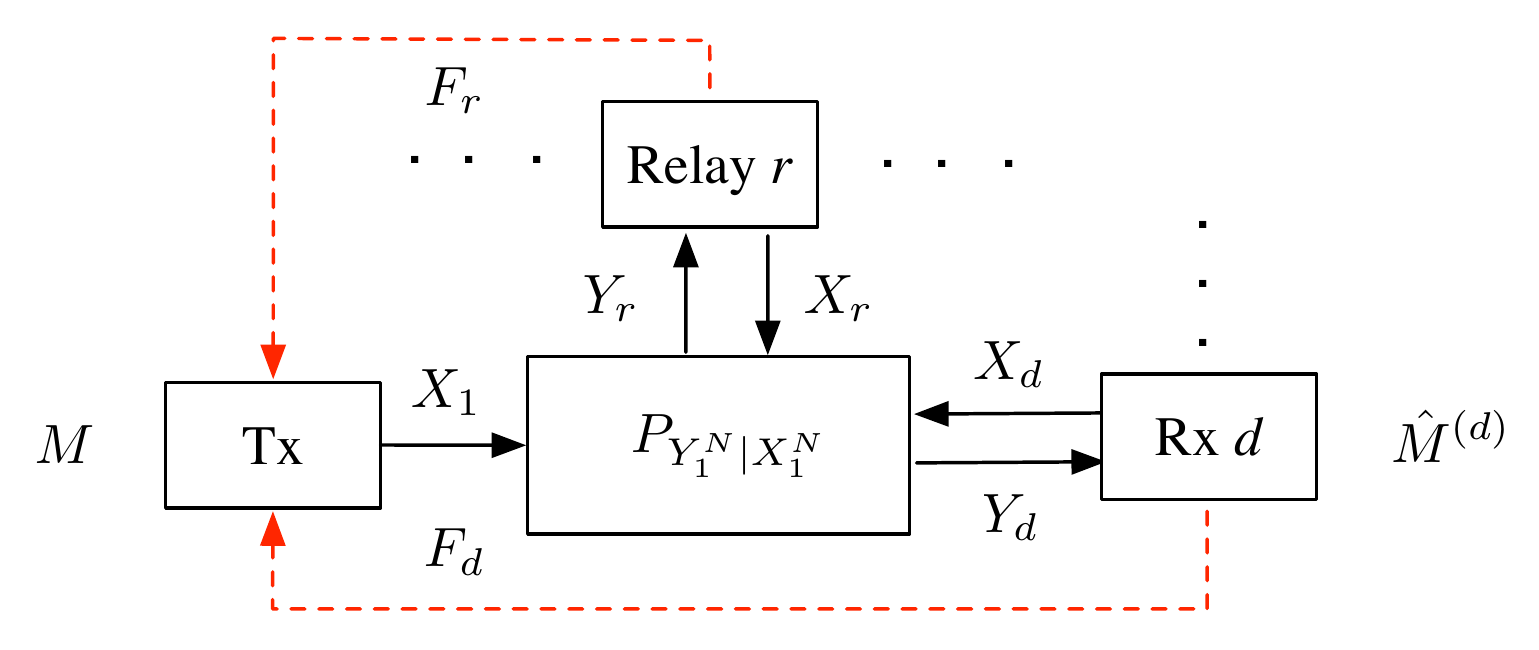}
\caption{$N$-node discrete  memoryless  multicast network with feedback} \label{fig:MrelaysFb}
\end{figure}

Consider an $N$-node DM-MN \emph{with feedback}, see Fig. \ref{fig:MrelaysFb}. %This new model is in fact included as a special case in the model in Section \ref{Model:nofb}. This can be seen as follows. 
%1. $Y_{1,i}$ represent the feedback signals sent by the receivers and relays, i.e., 
 %$Y_{1,i}=(F_{2,i},\ldots,F_{N,i})$,  where  $F_{k,i}\in \set{F}_{k,i}$ is the feedback signal sent by Node $k\in[2:N]$, and $\set{F}_{k,i}$ denotes feedback finite alphabets
% 2. $X_i=(X_{k,i},F_{k,i})$       
This model can be modified from the model without feedback in  Section \ref{Model:nofb} if in the later setup $Y_{1,i}$ represents the feedback signals sent by the receivers and relays, i.e.,  $Y_{1,i}=(F_{2,i},\ldots,F_{N,i})$,  where  $F_{k,i}\in \set{F}_{k,i}$ is the feedback signal sent by Node $k\in[2:N]$, and $\set{F}_{k,i}$ denotes feedback finite alphabets. 

More specifically,  after observing $Y_{k,i}$,  each Node $k\in[2:N]$ produces the \emph{channel input} $X_{k,i}$ and \emph{feedback signal} $F_{k,i}$ based on $(Y_{k,1},\ldots,Y_{k,i-1})$,  then it broadcasts $X_{k,i}$ into the network and sends $F_{k,i}$ to the transmitter through a feedback pipe. Let ${C}_{\textnormal{Fb}}$ denote the  capacity  in the presence of  feedback.

%\begin{itemize}
%\item after observing $Y_{k,i}$,  each Node $k\in[2:N]$ produces feedback signal $F_{k,i}(Y_{k,1},\ldots,Y_{k,i-1})$  and sends it to the transmitter through a feedback pipe, where  $F_{k,i}\in \set{F}_{k,i}$ and $\set{F}_{k,i}$ denotes feedback finite alphabets;
%\item $Y_{1,i}$ represents the feedback signal, i.e., $Y_{1,i}=(F_{2,i},\ldots,F_{N,i})$.
%\end{itemize}

 Suppose the feedback pipe  between  the transmitter and Node $k$ is instant, noiseless and \emph{rate-limited} to $R_{\textnormal{Fb},k}$ bits per channel use, then
\begin{IEEEeqnarray}{rCl}\label{consFB0}
|\set{F}_{k,1}|&\times&\cdots\times|\set{F}_{k,n}|\leq 2^{nR_{\textnormal{Fb},k}}, \quad k\in[2:N].
%\\
%|\set{F}_{2,1}|&\times&\cdots\times|\set{F}_{2,N}|\leq 2^{NR_{\textnormal{Fb},2}}
\end{IEEEeqnarray}

When $R_{\text{Fb,k}}\geq \log_2{|\set{Y}_k|}$, the network is equivalent to perfect-feedback setup where after each channel use the receivers and relays send   their channel outputs back to the transmitter, i.e., $ Y_{1}=(Y_{2},\ldots,Y_{N})$.

\section{Previous lower bounds and motivations}
We  recall some previous lower bounds on the capacity of DM-MN and present  interesting observations that inspire our work.

\subsection{NNC and DDF lower bounds}
The NNC  was proposed  by Lim \emph{et al.} \cite{Lim'11}, where each Node $k\in[1:N]$ compresses its received signal $Y_k$ to $\hat{Y}_k$ and  broadcast the compression message into the network. For DM-MN,   the NNC leads to the lower bound which satisfies 
\begin{subequations}\label{eq:NNCorginal}
\begin{IEEEeqnarray}{rCl}\label{eq:NNCorginalRate}
R&\leq & I(X(\set{S});  \hat{Y}(\set{S}^c),Y_d|X(\set{S}^c))\nonumber\\
&&\quad-   I(\hat{Y}(\set{S});Y(\set{S})|X_1^N,Y_d,\hat{Y}(\set{S}^c))
\end{IEEEeqnarray}
for  all $d\in\mathcal{D}$, $\mathcal{S}\subset [1:N]$ with $\mathcal{S}\cap\{1\}\neq \emptyset$, $\mathcal{S}^c\cap\set{D}\neq \emptyset$ and for some pmf 
\begin{IEEEeqnarray}{rCl}\label{eq:pmfNNC}
 \left[\prod^N_{k=1}\! P_{X_k}P_{\hat{Y}_k|X_kY_k}\right]\!P_{Y_1^N|X_1^N}.
\end{IEEEeqnarray}
\end{subequations}

%\subsection{Distributed decode-forward coding}
The DDF was proposed by  Lim \emph{et al.} \cite{Lim'14ITW}, where the transmitter uses \emph{backward} encoding to generate auxiliary indices that control the transmission over the entire network. For DM multicast network, the DDF lower bound satisfies 
\begin{IEEEeqnarray}{rCl}
R&\leq& I(X(\set{S});U(\set{S}^c),Y_d|X(\set{S}^c))\nonumber\\
&&\quad - \sum_{k\in\set{S}^c} I(U_k;U(\set{S}_k^c),X^N|X_k,Y_k)
\end{IEEEeqnarray}
for all $d\in\mathcal{D}$, $\mathcal{S}\subset [1:N]$ with $\mathcal{S}\cap\{1\}\neq \emptyset$, $\mathcal{S}^c\cap\set{D}\neq \emptyset$ and for some pmf  $\left[ \prod_{k=2}^N P_{X_k}\right]P_{X_1|X^N_2}P_{Y_1^N|X_1^N} P_{U_2^N|X^N_1}\! $,
where  $\set{S}_k^c=\set{S}^c\cap [1:k-1]$ for $k\in[1:N]$. 

%Due to the backward encoding, the transmitter has to encode offline before receiving $Y_1$, which make it impossible to ultilize $Y_1$, even $Y_1$ is perfect feedback signals from all receivers and relays. 

\subsection{Motivation}

%In this subsection we describe some interesting observations that inspire our work. 

Given the NNC lower bound \eqref{eq:NNCorginal}, rewrite the right term of \eqref{eq:NNCorginalRate}: %the second term in the righthand of \eqref{eq:NNCorginal} can be written as:
\begin{IEEEeqnarray}{rCl}\label{eq:MotiNNC}
&&I(X(\set{S});  \hat{Y}(\set{S}^c),Y_d|X(\set{S}^c))-   I(\hat{Y}(\set{S});Y(\set{S})|X_1^N,Y_d,\hat{Y}(\set{S}^c))\nonumber\\
&&\quad  = I(X_1,X(\set{T});  \hat{Y}(\set{T}^c),Y_d|X(\set{T}^c))\nonumber\\&&\qquad- I(\hat{Y}_1,\hat{Y}(\set{T});Y_1,Y(\set{T})|X_1^N,Y_d,\hat{Y}(\set{T}^c))\nonumber\\
%&&\quad = I(X_1,X(\set{T});  \hat{Y}(\set{T}^c),Y_d|X(\set{T}^c))\nonumber\\ 
%&&\qquad-~ I(\hat{Y}(\set{T});Y_1,Y(\set{T})|X_1^N,Y_d,\hat{Y}(\set{T}^c))\nonumber\\
%&&\qquad -~I(\hat{Y}_1;Y_1,Y(\set{T})|X_1^N,Y_d,\hat{Y}_2^N)\nonumber\\
&&\quad  \stackrel{(a)}= I(X_1,\!X(\set{T});  \hat{Y}(\set{T}^c),\!Y_d|X(\set{T}^c))\!-\!I(\hat{Y}_1;Y_1|X_1^N\!,Y_d,\hat{Y}_2^N)\nonumber\\ 
&&\quad\quad- I(\hat{Y}(\set{T});Y(\set{T})|X_1^N,Y_d,\hat{Y}(\set{T}^c))\nonumber\\
&& \quad\stackrel{(b)}\leq  I(X_1,X(\set{T});  \hat{Y}(\set{T}^c),Y_d|X(\set{T}^c))\nonumber\\ 
&&\quad\quad - I(\hat{Y}(\set{T});Y(\set{T})|X_1^N,Y_d,\hat{Y}(\set{T}^c))
\end{IEEEeqnarray}
where $\set{T}=\set{S}\backslash\{1\}$ and  $\mathcal{T}^c$ is the complement of $\mathcal{T}$ in $[2:N]$. The  equality $(a)$ holds because $\big(X([1:N]\backslash \{k\}),\hat{Y}([1:N]\backslash \{k\})\big)-(X_k,Y_k)-\hat{Y}_k$ forms Markov chain in view of pmf \eqref{eq:pmfNNC}, for all $k\in[1:N]$. 
%$\hat{Y}_k$, for all $k\in[1:N]$, is conditionally independent with all other variables given $(X_k,Y_k)$.
%\begin{Remark}
%The inequality in \eqref{eq:MotiNNC} can be achieved by  letting $\hat{Y}_1=\emptyset$, and   the NNC lower bound is equivalent to
%\begin{IEEEeqnarray}{rCl}\label{eq:NNCrate}
%&&R\leq I(X_1,X(\set{T});  \hat{Y}(\set{T}^c),Y_d|X(\set{T}^c))\nonumber\\ 
%&&\qquad-~ I(\hat{Y}(\set{T});Y(\set{T})|X_1^N,Y_d,\hat{Y}(\set{T}^c))
%\end{IEEEeqnarray}
%for  all $d\in\mathcal{D}$, $\mathcal{T}\subset [2:N]$ with $\mathcal{T}^c\cap\set{D}\neq \emptyset$ and for some pmf $\left[\prod^N_{k=1}\! P_{X_k}P_{\hat{Y}_k|X_kY_k}\right]\!P_{Y_1^N|X_1^N}.$ It can be seen that  although the transmitter observes $Y_1$ and can generate compression  $\hat{Y}_1$, the optimal way is that transmitter doesn't  compress $Y_1$ , which means that channel output   $Y_1$ observed by the transmitter  is totally useless when using NNC.
%\end{Remark}

From the inequality $(b)$ in \eqref{eq:MotiNNC}, it's easy to check that the optimal choice of $\hat{Y}_1$ is $\hat{Y}_1=\emptyset$, which makes  NNC lower bound  equivalent to
\begin{IEEEeqnarray}{rCl}\label{eq:NNCrate}
&&R\leq I(X_1,X(\set{T});  \hat{Y}(\set{T}^c),Y_d|X(\set{T}^c))\nonumber\\ 
&&\qquad\quad-~ I(\hat{Y}(\set{T});Y(\set{T})|X_1^N,Y_d,\hat{Y}(\set{T}^c))
\end{IEEEeqnarray}
for  all $d\in\mathcal{D}$, $\mathcal{T}\subset [2:N]$ with $\mathcal{T}^c\cap\set{D}\neq \emptyset$ and for some pmf $\left[\prod^N_{k=1}\! P_{X_k}\right]P_{Y_1^N|X_1^N} \left[\prod^N_{k=2}P_{\hat{Y}_k|X_kY_k}\right].$ Recall that in NNC the auxiliary random variable $\hat{Y}_1$ represents the compression  of $Y_1$   at the transmitter. Setting $\hat{Y}_1=\emptyset$ means that the transmitter  doesn't compress $Y_1$, and simply ignores  it. 

Now consider the DDF scheme.  Due to the backward encoding at the transmitter, the transmitter has to perform encoding offline before receiving $Y_1$, which makes it  impossible to ultilize $Y_1$.%, even $Y_1$ is perfect feedback from all receivers and relays. 

To summarize,   both NNC and DDF fail to  use  the transmitter's observation $Y_1$ (In NNC scheme, although the transmitter can process $Y_1$,   the optimal choice is not to use it.). In fact, ${Y}_1$ is potentially useful to improve the achievable rate and thus should not be simply ignored. In this paper, we propose  new block-Markov coding schemes which improve the NNC and DDF lower bounds by using $Y_1$ and  hybrid relaying strategies. The key idea is as follows: 
\begin{itemize}
\item Each Node $k\in[2:N]$ compresses its channel output $Y_k$ to $\hat{Y}_k$, and decodes (part of) the source message. Then it sends the channel input consisting of the compression message and the decoded  source message.
%uses  compress-forward to compress  its channel output $Y_k$ and partial decode-forward to decode part of the source message. Then it send the compression messages as feedback information.
\item The transmitter obtains or decodes\footnote{When $Y_1$ represents the feedback signals sent by the receivers and relays, the transmitter directly obtains compression messages through feedback pipes; when $Y_1$ represents the channel output instead of feedback signals, the transmitter decodes the compression message based on $Y_1$ with side information $X_1$.} the compression messages  generated by all the receivers and relays, based on $Y_1$, and then forwards them together with the source message.
\item Each Receiver $k$ decodes all compression messages except the one it generated itself,  and then reconstructs $(\hat{Y}_2,\ldots,\hat{Y}_N)$. Finally it uses  $(Y_k,\hat{Y}_2,\ldots,\hat{Y}_N)$ as enhanced outputs to decode the source message.
\end{itemize}
%We first present our results for DM-MN with feedback, and then extend the idea to the non-feedback case.

\section{Main Results}
This section  presents our main results. The  proofs are given in Sections \ref{sec:mRelays} and  \ref{Sec:extension}.
\subsection{DM-MN with rate-limited feedback}
\begin{Theorem}\label{Them:unicast1}
 For  DM-MN with feedback from the  receivers and relays to the transmitter, any rate $R>0$ is achievable if it satisfies
% the capacity ${C}_{\textnormal{Fb}}$ is lower bounded as
\begin{IEEEeqnarray}{rCl}\label{eq:regionThm1}
R&\leq &  I(X_1,X(\set{T});  \hat{Y}(\set{T}^c),Y_d|X(\set{T}^c))\nonumber\\
&&\quad-   I(\hat{Y}(\set{T});Y(\set{T})|X_1^N,Y_d,\hat{Y}(\set{T}^c))
\end{IEEEeqnarray}
for all $d\in\set{D}$ and   $\mathcal{T}\subset [2:N]$ with $\mathcal{T}^c\cap\set{D}\neq \emptyset$,    and for some pmf 
\begin{IEEEeqnarray}{rCl}\label{eq:pmfThm1}
 \left[\prod^N_{k=2} P_{X_k}\right]\left[\prod^N_{k=2}P_{\hat{Y}_k|X_kY_k}\right]P_{X_1|X_2^N}P_{Y_1^N|X_1^N}
\end{IEEEeqnarray}
such that 
\begin{IEEEeqnarray}{rCl} \label{fb:Thm1}
%\sum_{k\in\mathcal{T}} H(\hat{Y}_k |X^N,\hat{Y}({\mathcal{T}_k}),\hat{Y}({\mathcal{T}^c}),Y_1) \leq \sum_{k\in\mathcal{T}} H(\hat{Y}_k|Y_k,X_k)
 R_{\textnormal{Fb},k} \geq I(\hat{Y}_k;Y_k|X_k),\quad  \text{for~$k\in[2:N]$},
\end{IEEEeqnarray}
where  $\mathcal{T}^c$ is the complement of $\mathcal{T}$ in $[2:N]$.
%where  $\mathcal{T}_k=\mathcal{T}\cap [2:k-1]$, for $k\in[2:N]$. %The first maximum is over the pmf $\prod_{k=2}^N P_{X_k}P_{\hat{Y}_k|X_kY_k}P_{X_1|X^N_2}P_{Y^N|X^N}$ and the second minimum is over all $\mathcal{T}$.
\end{Theorem}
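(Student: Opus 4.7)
The plan is to prove achievability by a block-Markov scheme over $B+1$ blocks of length $n$, transmitting $B$ independent messages so that the effective rate $R \cdot B/(B+1) \to R$. The codebook construction is: for each $k \in [2{:}N]$, draw $2^{n\tilde{R}_k}$ codewords $x_k^n(l_k)$ i.i.d.\ from $\prod P_{X_k}$ together with conditional compression sequences $\hat{y}_k^n(l_k' | l_k) \sim \prod P_{\hat{Y}_k|X_k}(\cdot | x_{k,i}(l_k))$; for the transmitter, for each $(m, l_2, \ldots, l_N)$, draw $x_1^n(m, l_2, \ldots, l_N)$ conditionally on $(x_2^n(l_2), \ldots, x_N^n(l_N))$ from $\prod P_{X_1|X_2^N}$. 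This superposition is exactly what realizes the target pmf \eqref{eq:pmfThm1}. In block $b$ each Node $k \in [2{:}N]$ transmits $x_k^n(l_{k,b-1})$ and the transmitter sends $x_1^n(m_b, l_{2,b-1}, \ldots, l_{N,b-1})$; at the end of block $b$, Node $k$ applies the covering lemma to find $l_{k,b}$ with $(x_k^n(l_{k,b-1}), y_k^n(b), \hat{y}_k^n(l_{k,b} | l_{k,b-1})) \in \set{T}_\epsilon^{(n)}$, then ships $l_{k,b}$ to the transmitter through the noiseless feedback pipe. Covering succeeds for large $n$ provided $\tilde{R}_k > I(\hat{Y}_k; Y_k | X_k)$, and the hypothesis \eqref{fb:Thm1} is exactly what permits such a $\tilde{R}_k$ within the feedback budget.

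The receivers then perform backward decoding from block $B+1$ down to block $1$. At the $b$-th stage, receiver $d$ has already recovered $\{l_{k,b}\}_{k \neq d}$ from the subsequent block and searches for the unique $(\tilde{m}, \{\tilde{l}_k\}_{k \in [2{:}N] \setminus \{d\}})$ for which
\[
\bigl(x_1^n(\tilde{m}, \tilde{l}_2, \ldots, \tilde{l}_N),\, (x_k^n(\tilde{l}_k))_{k \in [2{:}N]},\, (\hat{y}_k^n(l_{k,b} | \tilde{l}_k))_{k \in [2{:}N] \setminus \{d\}},\, y_d^n(b)\bigr)
\]
is jointly typical (with $\tilde{l}_d$ set to the known $l_{d,b-1}$). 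The error analysis is indexed by the subset $\set{T} \subseteq [2{:}N] \setminus \{d\}$ of incorrectly guessed indices; on this event, $(X_k^n)_{k \in \set{T}}$ is a fresh i.i.d.\ draw from $\prod P_{X_k}$, each $(\hat{Y}_k^n)_{k \in \set{T}}$ is drawn conditionally on the resampled $X_k^n$, and $X_1^n$ is resampled from $P_{X_1|X_2^N}$ conditionally on the mixed (true on $\set{T}^c$, resampled on $\set{T}$) tuple $X_2^N$. By the joint typicality lemma, the probability of a spurious typicality is at most $2^{-nI(X_1, X(\set{T}), \hat{Y}(\set{T});\, Y_d, \hat{Y}(\set{T}^c)\,|\,X(\set{T}^c)) + o(n)}$, and a union bound over $\leq 2^{n(R + \sum_{k \in \set{T}} \tilde{R}_k)}$ candidates gives vanishing error provided
\[
R + \sum_{k \in \set{T}} \tilde{R}_k < I\bigl(X_1, X(\set{T}), \hat{Y}(\set{T});\, Y_d, \hat{Y}(\set{T}^c)\,\big|\,X(\set{T}^c)\bigr).
\]

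Finally, setting $\tilde{R}_k = I(\hat{Y}_k; Y_k | X_k) + \epsilon$, invoking the chain-rule inequality $\sum_{k \in \set{T}} I(\hat{Y}_k; Y_k | X_k) \geq I(\hat{Y}(\set{T}); Y(\set{T}) | X_1^N)$, which follows from the Markov structure $\hat{Y}_k - (X_k, Y_k) - (\text{rest})$ imposed by \eqref{eq:pmfThm1}, and using the identity $I(\hat{Y}(\set{T}); Y_d, \hat{Y}(\set{T}^c) | X_1^N) = I(\hat{Y}(\set{T}); Y(\set{T}) | X_1^N) - I(\hat{Y}(\set{T}); Y(\set{T}) | X_1^N, Y_d, \hat{Y}(\set{T}^c))$, the constraint collapses exactly to \eqref{eq:regionThm1}. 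The main difficulty I anticipate is bookkeeping in the error analysis --- tracking which codewords get resampled under each $\set{T}$, and in particular handling the superposition $P_{X_1|X_2^N}$, which forces $X_1^n$ to be resampled whenever \emph{any} component of the guess $(\tilde{m}, \tilde{l}_2, \ldots, \tilde{l}_N)$ differs from the truth. A secondary subtlety is aligning the block-Markov timeline with the feedback: the transmitter must possess $l_{k,b-1}$ at the start of block $b$, which is guaranteed as long as each block's compression indices are delivered (with at most one block of delay, absorbed by the extra block in the $B+1$ construction) through the instant feedback pipe.
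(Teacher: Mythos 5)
Your overall architecture is exactly the paper's Scheme 1A: block-Markov transmission over $B+1$ blocks, superposition of $x_1^n$ on $(x_2^n,\ldots,x_N^n)$ via $P_{X_1|X_2^N}$ (enabled by the feedback of the compression indices), compress-forward at every Node $k\in[2:N]$ with the covering-lemma condition $\tilde R_k>I(\hat Y_k;Y_k|X_k)$ matched to the feedback budget, backward joint decoding, and Fourier--Motzkin at the end. The gap is in the decoding error exponent. When the indices $\{\tilde l_k\}_{k\in\set{T}}$ are wrong, the codewords $x_k^n(\tilde l_k)$ and $\hat y_k^n(l_{k,b}|\tilde l_k)$ are generated \emph{mutually independently across} $k\in\set{T}$, each $\hat y_k^n$ conditioned only on its own $x_k^n$. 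The correct spurious-typicality exponent is therefore of the product form
\begin{IEEEeqnarray*}{rCl}
I(X_1,X(\set{T});\hat Y(\set{T}^c),Y_d|X(\set{T}^c))+\sum_{k\in\set{T}}H(\hat Y_k|X_k)-H(\hat Y(\set{T})|X_1^N,\hat Y(\set{T}^c),Y_d),
\end{IEEEeqnarray*}
which exceeds your single joint mutual information $I(X_1,X(\set{T}),\hat Y(\set{T});Y_d,\hat Y(\set{T}^c)|X(\set{T}^c))$ by $\sum_{k\in\set{T}}H(\hat Y_k|X_k)-H(\hat Y(\set{T})|X_1^N)\ge 0$; this difference is strictly positive in general because the $\hat Y_k$ are \emph{not} conditionally independent given $X_1^N$ alone (they are correlated through the channel outputs $Y_k$). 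Your form of the joint typicality lemma implicitly treats $(X(\set{T}),\hat Y(\set{T}))$ as one jointly drawn codeword, which over-estimates the collision probability and yields a strictly smaller rate region.

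The second, compounding problem is that your final algebraic step tries to close this gap with the inequality $\sum_{k\in\set{T}}I(\hat Y_k;Y_k|X_k)\ge I(\hat Y(\set{T});Y(\set{T})|X_1^N)$ used in the wrong direction: replacing the larger quantity $\sum_k I(\hat Y_k;Y_k|X_k)$ by the smaller $I(\hat Y(\set{T});Y(\set{T})|X_1^N)$ \emph{enlarges} the right-hand side of your rate constraint, so the region \eqref{eq:regionThm1} is not implied by the constraint you actually derived. With the correct product-form exponent, no such inequality is needed: substituting $\tilde R_k=I(\hat Y_k;Y_k|X_k)$ and using $\sum_{k\in\set{T}}H(\hat Y_k|X_k,Y_k)=H(\hat Y(\set{T})|X_1^N,Y_1^N,\hat Y(\set{T}^c))$ together with the Markov structure $\hat Y_k-(X_k,Y_k)-(\text{rest})$ collapses the bound exactly to \eqref{eq:regionThm1} with equality, not inequality, in the bookkeeping. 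Fix the exponent and the rest of your argument goes through; also note that independent codebooks must be drawn for each block so that the backward-decoding induction can invoke independence across blocks.
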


\begin{proof}
See Section \ref{sec:scheme1A}.
\end{proof}

\begin{Remark}\label{eq:Remark1} %\cite[Theorem 1]{Lim'11}
Comparing the lower bound in Theorem \ref{Them:unicast1} with the NNC lower bound \eqref{eq:NNCrate}, our rate stirctly includes  NNC  if the feedback rates are sufficient large, i.e., if \eqref{fb:Thm1} holds for all  pmfs \eqref{eq:pmfThm1}. This is  because  in \eqref{eq:pmfThm1} we allow the joint input distribution of form $\prod^N_{k=2}\! P_{X_k}\!P_{X_1|X_2^N}$ instead of $\prod^N_{k=1}\! P_{X_k}$, which attains cooperation between the transmitter and the relays\&receivers.  
\end{Remark}

In the scheme for Theorem \ref{Them:unicast1}, the relays and receivers both perform  compress-forward. In fact, the relays can apply a hybrid strategy that combines compress-forward and partial decode-forward, which leads to a larger achievable rate below.

\begin{Theorem}\label{Them:unicast2}
 For  DM-MN with feedback from the receivers and relays to the transmitter,any rate $R>0$ is achievable if it satisfies
 \begin{subequations}\label{eq:regionThem2}
\begin{IEEEeqnarray}{rCl}
R&\leq & I(X_1,X(\set{T}),U(\set{T});\hat{Y}(\set{T}^c),Y_d|X(\set{T}^c), U(\set{T}^c))\nonumber\\
&&- I(\hat{Y}(\set{T});{Y}(\set{T})|U_2^N,X_1^N,\hat{Y}(\set{T}^c),Y_d)\nonumber\\
&&+\min_{r\in\set{R}} I(U_r;{Y}_r|X_r)\label{eq:Thm2a}\\
R&\leq &  I(X_1,X(\set{T}\cup\set{R}),U(\set{R});\hat{Y}(\set{T}^c\cap\set{D}),Y_d|X(\set{T}^c\cap\set{D}))\nonumber\\
&&- I(\hat{Y}(\set{T}\cup\set{R});{Y}(\set{T}\cup\set{R})|U_2^N, X_1^N,\hat{Y}(\set{T}^c\cap\set{D}),Y_d)\nonumber\\ \label{eq:Thm2b}
\end{IEEEeqnarray}
\end{subequations}
for all $d\in\set{D}$ and   $\mathcal{T}\subset [2:N]$ with $\mathcal{T}^c\cap\set{D}\neq \emptyset$, and for some pmf 
\begin{IEEEeqnarray}{rCl}\label{eq:pmfThem2}
&&  \left[\prod_{r\in\set{R}} P_{X_rU_r}\!P_{\hat{Y}_r|U_rX_rY_r}\! \right]\left[\prod_{d\in\set{D}}\!P_{X_d} P_{\hat{Y}_d|X_dY_d}\right]\nonumber\\
&&\quad\quad\times  P_{X_1|X^N_2U(\set{R})}P_{Y_1^N|X_1^N}
\end{IEEEeqnarray}
 such that
 \begin{subequations}\label{eq:fbTh2}
\begin{IEEEeqnarray}{rCl}
R_{\textnormal{Fb},r} &  \geq&  I(\hat{Y}_r;Y_r|X_r,U_r),\quad \text{ for $r\in\set{R}$}\\
R_{\textnormal{Fb},d} &  \geq& I(\hat{Y}_d;Y_d|X_d),\quad \text{for $d\in\set{D}$}.%\nonumber\\
%\sum_{k\in\mathcal{T}} I(X_k,U_k; U(\mathcal{T}_k),X(\mathcal{T}_k),U(\mathcal{T}^c),X(\mathcal{T}^c),Y_1) & \geq&  \sum_{k\in\mathcal{T}\cap\set{R}}  I(\hat{Y}_k;Y_k|X_k,U_k)+\sum_{j\in\mathcal{T}\cap\set{D}} I(\hat{Y}_j;Y_j|X_j)
\end{IEEEeqnarray}
%where  $U_d=\emptyset$, for all $d\in\set{D}$.
\end{subequations}
where $U_d=\emptyset$ for all $d\in\set{D}$ and $\mathcal{T}^c$ is the complement of $\mathcal{T}$ in $[2:N]$.
\end{Theorem}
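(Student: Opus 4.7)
My plan is to extend the scheme underlying Theorem \ref{Them:unicast1} by layering a partial decode-forward component on top of the compress-forward and feedback structure used there. Split the source message in each block into a common part $m_{\text{c}}$ of rate $R_{\text{c}}$, to be decoded by every relay, and a private part $m_{\text{p}}$ of rate $R_{\text{p}}$, so that $R=R_{\text{c}}+R_{\text{p}}$ up to vanishing terms, and use block-Markov coding over $B$ blocks combined with backward decoding at every receiver. This reduces the theorem, after Fourier--Motzkin elimination of $R_{\text{c}}$ and $R_{\text{p}}$, to proving rate constraints matching \eqref{eq:Thm2a}--\eqref{eq:Thm2b}.

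For the codebook, for each $r\in\set{R}$ I would independently draw $U_r^n(m_{\text{c}})$ according to $\prod_i P_{U_r}$ and then $X_r^n(m_{\text{c}})$ conditionally via $P_{X_r|U_r}$; for each $d\in\set{D}$ draw $X_d^n$ independently. Given $(X_k^n,U_r^n)$, I would generate the compression codebook $\hat{Y}_k^n$ according to $P_{\hat{Y}_k|X_k U_k Y_k}$ (with $U_d=\emptyset$ for receivers), using rates $I(\hat{Y}_r;Y_r|X_r,U_r)+\delta$ for relays and $I(\hat{Y}_d;Y_d|X_d)+\delta$ for receivers; by \eqref{eq:fbTh2} each such index fits through the feedback pipe to Node~1. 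Finally I would superpose the transmitter's codeword $X_1^n$ on $\bigl(X_2^n,\ldots,X_N^n,U(\set{R})^n\bigr)$ according to $P_{X_1|X_2^N U(\set{R})}$, together with $m_{\text{p}}$ and with the previous block's compression indices that have been received through feedback, exactly as in the proof of Theorem \ref{Them:unicast1}.

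In each block $b$ every relay $r$ decodes $m_{\text{c}}^{(b)}$ by joint typicality of $Y_r^n(b)$ with a candidate $U_r^n$ given $X_r^n(m_{\text{c}}^{(b-1)})$, which forces $R_{\text{c}}\le I(U_r;Y_r|X_r)$ and hence $R_{\text{c}}\le\min_{r\in\set{R}}I(U_r;Y_r|X_r)$; simultaneously each Node $k\in[2:N]$ covers $Y_k^n(b)$ by $\hat{Y}_k^n$ and feeds the index back to Node~1. Receiver $d$ then performs backward decoding from $b=B$ down to $b=1$: in each step it jointly decodes $(m_{\text{c}}^{(b)},m_{\text{p}}^{(b)})$ together with the compression indices $\{\hat{Y}_k^n(b)\}_{k\neq d}$, using $Y_d^n(b)$ and the previously reconstructed $\hat{Y}(\set{R}\cup\set{D}\setminus\{d\})^n(b)$ as an enhanced output, while the codewords of block $b+1$ are already known from the earlier backward step and serve as side information.

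Enumerating the pairwise error events by the subset $\set{T}\subseteq[2:N]$ of indices whose $X$-codewords in block $b$ are wrongly guessed, and bounding the resulting probability by the standard joint-typicality and mutual covering lemmas, I would derive after the same chain-rule manipulation as in \eqref{eq:MotiNNC} two classes of sum-rate constraints on $R_{\text{c}}+R_{\text{p}}$: one in which the wrongly guessed relays in $\set{T}\cap\set{R}$ also have wrong $U_r$ codewords, producing \eqref{eq:Thm2a} once $R_{\text{c}}\le\min_r I(U_r;Y_r|X_r)$ is added back in, and one in which the compression indices of all relays are treated as unknown because the receiver's enhanced output supplies $\hat{Y}$ only for the receivers in $\set{T}^c\cap\set{D}$, producing \eqref{eq:Thm2b}. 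The main obstacle I foresee is the bookkeeping of which variables are decoded in which backward step and which play the role of side information, and in particular verifying that the pmf \eqref{eq:pmfThem2} makes every intermediate chain-rule identity valid, so that the Fourier--Motzkin elimination of $R_{\text{c}}$ and $R_{\text{p}}$ cleanly produces both \eqref{eq:Thm2a} and \eqref{eq:Thm2b} rather than a strictly weaker region.
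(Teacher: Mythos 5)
Your proposal follows essentially the same route as the paper's Scheme 1B: rate-splitting into a common part decoded by every relay and a private part, superposition of $U_r$ on $X_r$ at each relay and of $X_1$ on $\bigl(X_2^N,U(\set{R})\bigr)$ at the transmitter, compress-forward with the indices returned over the feedback pipes and then forwarded by the transmitter, backward decoding of the message and all compression indices at each receiver, and Fourier--Motzkin elimination, with your two classes of error events matching the constraints that yield \eqref{eq:Thm2a} and \eqref{eq:Thm2b}. The one detail to repair is the codebook description: in the block-Markov structure $X_r^n$ in block $b$ must be the cloud center indexed by the \emph{previous} block's common message and the relay's compression index, with $U_r^n$ superposed on it via $P_{U_r|X_r}$ carrying the \emph{current} common message (your own relay-decoding step already assumes exactly this), rather than drawing $U_r^n$ first and $X_r^n$ conditionally with both indexed by the same $m_{\text{c}}$.
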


\begin{proof}
See Section \ref{sec:scheme1C}.
\end{proof}

%By  setting $U_k=\emptyset$, for all $k\in[2:N]$, we obtain the following corollary.
%\begin{Corollary}\label{Them:unicast1}
% For  DM-MN with feedback from the  receivers and relays to the transmitter, the rate $R$ is achievable if 
%\begin{IEEEeqnarray}{rCl}\label{eq:regionCor1}
%R&\leq & I(X_1,X(\set{T});  \hat{Y}(\set{T}^c),Y_d|X(\set{T}^c))\nonumber\\
%&&\quad-   I(\hat{Y}(\set{T});Y(\set{T})|X_1^N,Y_d,\hat{Y}(\set{T}^c))
%\end{IEEEeqnarray}
%for  all $d\in\mathcal{D}$, $\mathcal{T}\subset [2:N]$ with $\mathcal{T}^c\cap\set{D}\neq \emptyset$ and for some pmf 
%\begin{IEEEeqnarray}{rCl}\label{eq:pmfCor1}
% \left[\prod^N_{k=2}\! P_{X_k}\right]\!P_{X_1|X_2^N}P_{Y_1^N|X_1^N}\!\left[\prod^N_{k=2}\!P_{\hat{Y}_k|X_kY_k}\!\right]
%\end{IEEEeqnarray}
%such that 
%\begin{IEEEeqnarray}{rCl} \label{fb:Cor1}
% R_{\textnormal{Fb},k} \geq I(\hat{Y}_k;Y_k|X_k),\quad  \text{for~$k\in[2:N]$}.
%\end{IEEEeqnarray}
%where  $\mathcal{T}_k=\mathcal{T}\cap [2:k-1]$, for $k\in[2:N]$. %The first maximum is over the pmf $\prod_{k=2}^N P_{X_k}P_{\hat{Y}_k|X_kY_k}P_{X_1|X^N_2}P_{Y^N|X^N}$ and the second minimum is over all $\mathcal{T}$.
%\end{Corollary}

%\begin{Remark}\label{eq:Remark1}
%Comparing the lower bound in Corollary \ref{Them:unicast1} with the NNC lower bound \cite[Theorem 1]{Lim'11}, our rates includes  NNC  if the feedback rates are sufficient large, i.e., if \eqref{fb:Cor1} holds for all  pmfs \eqref{eq:pmfCor1}, since  in \eqref{eq:pmfCor1} we allow the joint distribution $\prod^N_{k=2}\! P_{X_k}\!P_{X_1|X_2^N}$ instead of $\prod^N_{k=1}\! P_{X_k}$.  
%\end{Remark}

\begin{Remark}\label{eq:Remark2}
By setting $U_r=\emptyset$, for all $r\in\set{R}$, the achievable rate in Theorem \ref{Them:unicast2} specializes to the lower bound in Theorem 1. Note that setting $U_r=\emptyset$ for all $r\in\set{R}$ means  that all relay nodes   perform only compress-forward strategy without partially decode-forwarding the source message.
\end{Remark}

 In the scheme for Theorem \ref{Them:unicast2}, all relays decode the same part of the source message, which may lead to relatively poor rate performance when some relay's observed signal is  bad. One improvement could be made by allowing different relays to decode different parts of the source message according to the strength of their received signals. By doing this way, we obtain the following achievable rate.

%In the scheme for Theorem \ref{Them:unicast2}, all relays decode the same part of the source message, and that's why we have the term of $\min_{r\in\set{R}} I(U_r;{Y}_r|X_r)$ in \eqref{eq:Thm2a}.  When some relay's receiver signals is very bad,  $\min_{r\in\set{R}} I(U_r;{Y}_r|X_r)$ tends to 0, and the lower bound \eqref{eq:regionThem2} turns to be same as \eqref{eq:regionThm1}. One improvement could be made by allowing the relays to decode different parts of the source message according to the strength of their received signals. By doing this way, we obtain the following rate region.

\begin{Theorem}\label{Them:unicast2.1}
 For  DM-MN with feedback from the receivers and relays to the transmitter, any rate $R>0$ is achievable if it satisfies \begin{subequations}\label{eq:regionThem2.1}
\begin{IEEEeqnarray}{rCl}
R&\leq & (X_1,X(\set{T}),U(\set{T});\hat{Y}(\set{T}^c),Y_d|V_0,U_0,X(\set{T}^c), U(\set{T}^c))\nonumber\\
&&+ \!\!\!\sum_{r\in\mathcal{T}^c\cap\set{R}} \!\!\! I(U_r;Y_r|U_0,V_0,X_r)+\min_{r\in\set{R}} I(U_0;{Y}_r|V_0,X_r)\nonumber\\
&&- I(\hat{Y}(\set{T});{Y}(\set{T})|V_0,U_0,U_2^N,X_1^N,\hat{Y}(\set{T}^c),Y_d)\label{eq:Thm2.1a}\\
R&\leq & I(V_0,U_0,X_1,X(\set{T}\cup\set{R}),U(\set{R});\hat{Y}(\set{T}^c\cap\set{D}),Y_d|\nonumber\\
&&\hspace{4cm}X(\set{T}^c\cap\set{D}))\nonumber\\
&&- I(\hat{Y}(\set{T}\cup\set{R});{Y}(\set{T}\cup\set{R})|V_0,U_0,U_2^N,\nonumber\\
&&\hspace{4cm} X_1^N,\hat{Y}(\set{T}^c\cap\set{D}),Y_d) \label{eq:Thm2.1b}
\end{IEEEeqnarray}
\end{subequations}
for all   $d\in\set{D}$ and   $\mathcal{T}\subset [2:N]$ with $\mathcal{T}^c\cap\set{D}\neq \emptyset$, and for some pmf 
\begin{IEEEeqnarray}{rCl}\label{eq:pmfThem2.1}
&&P_{V_0}P_{U_0|V_0}\left[\prod_{r\in\set{R}} P_{X_r|V_0}P_{U_r|V_0U_0X_r}P_{\hat{Y}_r|V_0U_0U_rX_rY_r} \right] \nonumber\\
&&\quad\times \left[\prod_{d\in\set{D}} P_{X_d}P_{\hat{Y}_d|X_dY_d}\right]P_{X_1|V_0U_0X^N_2U(\set{R})}P_{Y_1^N|X_1^N}
\end{IEEEeqnarray}
 such that
 \begin{subequations}\label{eq:fbTh2.1}
\begin{IEEEeqnarray}{rCl}
R_{\textnormal{Fb},r} &  \geq&  I(\hat{Y}_r;Y_r|V_0,U_0,X_r,U_r),\quad \text{ for $r\in\set{R}$}\\
R_{\textnormal{Fb},d} &  \geq& I(\hat{Y}_d;Y_d|X_d),\quad \text{for $d\in\set{D}$}
\end{IEEEeqnarray}
\end{subequations}
where $U_d=\emptyset$ for all $d\in\set{D}$ and $\mathcal{T}^c$ is the complement of $\mathcal{T}$ in $[2:N]$.
\end{Theorem}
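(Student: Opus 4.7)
The plan is to extend the block-Markov construction used to prove Theorem \ref{Them:unicast2} by adding two extra auxiliary layers, $V_0$ and $U_0$, so that different relays can decode different individualized message parts (via their own $U_r$) on top of a common message part (via $U_0$) shared by all relays. The base layer $V_0$ serves as a shared codebook layer that all later codebooks are superposed on, playing the role of a time-sharing/prefixing layer and enabling coherent cooperation among all nodes.

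The scheme operates over $B$ blocks. In each block $b$, I would split the new message into three pieces: a common message conveyed via $U_0$ (which every relay must decode), individualized messages conveyed via each $U_r$ (which only relay $r$ decodes), and a fresh message conveyed only through the superposed $X_1$. For codebook construction, first generate $V_0^n$ sequences, then $U_0^n$ superposed on $V_0^n$, then $X_r^n$ superposed on $V_0^n$ for each relay and $U_r^n$ superposed on $(V_0^n,U_0^n,X_r^n)$, then compression codebooks $\hat Y_r^n$ and $\hat Y_d^n$ generated by the standard Wyner--Ziv/covering-lemma construction, and finally $X_1^n$ superposed on all of $\bigl(V_0, U_0, X_2^N, U(\set{R})\bigr)$. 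Receiver codebooks $X_d^n$ are drawn independently. This last superposition is possible because, by the beginning of block $b$, the transmitter knows via the rate-limited feedback pipes of block $b-1$ the compression indices that the relays and receivers have sent, and by design it also knows the $U_0$ and $U_r$ indices that were loaded into the current block's codewords.

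For decoding and error analysis, I would proceed as follows. First, each relay $r$ uses joint typicality to decode $(U_0,U_r)$ from its received sequence; the standard packing-lemma analysis gives the constraints $I(U_0;Y_r\mid V_0,X_r)$ (minimized over $r\in\set{R}$, since every relay must decode $U_0$) and $I(U_r;Y_r\mid V_0,U_0,X_r)$ for each $r$, which appear as the ``free'' terms in \eqref{eq:Thm2.1a}. Next, the covering lemma applied to the compression codebooks yields the feedback-rate conditions \eqref{eq:fbTh2.1}. Finally, each receiver $d$ performs backward decoding over the $B$ blocks, jointly decoding all compression indices $\hat Y\bigl(\set{R}\cup(\set{D}\setminus\{d\})\bigr)$ together with $\bigl(V_0,U_0,X_1,X(\set{R}),U(\set{R})\bigr)$. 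A cut-set style enumeration of error events produces two families of bounds: one in which the cut $\set{T}$ lies only among the ``decoded'' part of the network, which allows the receiver to credit itself with the $U_0$ and $U_r$ rates already delivered through the relay-to-receiver links, giving \eqref{eq:Thm2.1a}; and one corresponding to cuts that separate the transmitter from the receiver through relays, giving \eqref{eq:Thm2.1b}, without the partial-decode bonus because the relays themselves now lie on the ``wrong'' side of the cut.

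The main obstacle I expect is not the individual error-event estimates, which are routine applications of the packing and covering lemmas under the chain pmf \eqref{eq:pmfThem2.1}, but rather the bookkeeping: showing that after taking the union of error events over all cuts $\set{T}\subset[2:N]$ with $\set{T}^c\cap\set{D}\neq\emptyset$, the resulting rate constraints on the fresh message at receiver $d$ collapse to exactly the pair \eqref{eq:Thm2.1a}--\eqref{eq:Thm2.1b}. In particular, care is needed in tracking how the $\min$ over $r\in\set{R}$ in the $U_0$-rate term and the sum over $r\in\set{T}^c\cap\set{R}$ in the $U_r$-rate terms emerge from combining the relay decoding constraints with the receiver's backward-decoding constraints. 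Verifying that the Markov structure in \eqref{eq:pmfThem2.1} is preserved at every step of the superposition construction, and that the transmitter's forwarding of compression indices introduces no extra rate cost at the receivers beyond the terms already bookkept, is the remaining delicate piece.
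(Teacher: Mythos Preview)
Your proposal is essentially correct and matches the paper's Scheme~1C: the paper splits $m_b$ into $\bigl(m'_{0,b},\{m'_{r,b}\}_{r\in\set{R}},m''_b\bigr)$, builds the same superposition hierarchy $V_0\to U_0\to U_r$ (with $X_r$ on $V_0$ and $X_1$ on everything), has each relay sequentially decode $m'_{0,b}$ then $m'_{r,b}$, compress and feed back, and uses backward decoding at the receivers followed by Fourier--Motzkin elimination to obtain \eqref{eq:Thm2.1a}--\eqref{eq:Thm2.1b}. The one imprecision is your characterization of $V_0$ as a ``time-sharing/prefixing layer'': in the paper $V_0$ is not a fixed auxiliary but carries the \emph{previous} block's common index $m'_{0,b-1}$ (which every relay has already decoded), and this is exactly what allows each relay to generate $x^n_{r,b}$ superposed on $v^n_{0,b}$ in the block-Markov chain---so make sure your codebook indexing reflects this.
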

\begin{proof}
See Section \ref{sec:scheme1C}.
\end{proof}

\begin{Remark}\label{RemarkTheorem3}
By letting $V_0=U_0=\emptyset$,  we find that  the constraint \eqref{eq:Thm2.1a} is less stringent than \eqref{eq:Thm2a};    \eqref{eq:Thm2.1b}, \eqref{eq:pmfThem2.1} and \eqref{eq:fbTh2.1} reduce to \eqref{eq:Thm2b}, \eqref{eq:pmfThem2} and \eqref{eq:fbTh2}, respectively. Thus the rate  in Theorem \ref{Them:unicast2.1} includes the rate in Theorem \ref{Them:unicast2}.
\end{Remark}

\subsection{DM-MN {without}  feedback}

\begin{Remark}\label{Mark:Fb2NoFb}
The lower bounds present above can be directly extended to DM-MN \emph{without} feedback by letting feedback rate $R_{\textnormal{Fb},k}=0$, and $\hat{Y}_k=\emptyset$ for all $k\in[2:N]$.\end{Remark} In this subsection, we  propose a new lower bound for DM-MN {without} feedback that makes use of  the channel outputs (not feedback signals) observed at the transmitter. The key idea is that  the transmitter, instead of obtaining compression messages  through feedback,   \emph{decodes} them  based on its observed channel outputs.  The  new achievable rate is shown   below.

%We  can also extend Theorem \ref{Them:unicast1} and Theorem \ref{Them:unicast2} to the more general relay network: discrete memoryless multicast network (DM-MN),  which differs in that  there are no exclusive feedback pipes from the receivers or replays to the transmitter. More precisely,   the DM-MN is characterized by $N$ pairs alphabets $(\set{X}_j,\set{Y}_j)$, for $j\in[1:N]$, and a conditional probability law $P_{Y^N_1|X_1^N}(y_1,\ldots,y_N|x_1,\ldots,x_N)$, where the channel input $x_j\in\set{X}_j$ and and output $y_j\in\set{Y}_j$.  At the beginning of time $i$,   node $j$  sends channel input $x_{j,i}$ and at the end of time $i$, it observes  the channel output $y_{j,i}\in\set{Y}_j$.  In the transmission, the transmitter (node 1)  communicates a message $M\in[1:2^{nR}]$ to receiver $d\in\set{D}$ through relay  $r\in\set{R}$.

\begin{Theorem}\label{Them:unicast3}
For   DM-MN without feedback, any rate $R>0$ is achievable if it satisfies
\begin{subequations}\label{eq:regionThem3}
\begin{IEEEeqnarray}{rCl}
R&\leq &  I(X_1,X(\set{T}),U(\set{T}),V(\set{T});\hat{Y}(\set{T}^c),Y_d|\nonumber\\
&&\hspace{4cm} U(\set{T}^c),V(\set{T}^c),X(\set{T}^c))\nonumber\\
&& -I(\hat{Y}(\set{T});{Y}(\set{T})|U_2^N,V_2^N,X_1^N,\hat{Y}(\set{T}^c),Y_d)\nonumber\\
&&+ \min_{r\in\set{R}} I(U_r;\hat{Y}_r|X_r,V_r)\\
R
&<&  I(X_1,V(\set{T}\cup\set{R}),U(\set{R}),X(\set{T}\cup\set{R});\hat{Y}(\set{T}^c\cap\set{D}),Y_d|\nonumber\\
&&\hspace{4cm}V(\set{T}^c),X(\set{T}^c\cap\set{D}))\nonumber\\
&&-I(\hat{Y}(\set{T}\cup\set{R});Y(\set{T}\cup\set{R})|V_2^N\!,X_1^N\!,U_2^N\!,\hat{Y}(\set{T}^c\cap\set{D}),Y_d)\nonumber\\
\end{IEEEeqnarray}
\end{subequations}
for all  $d\in\mathcal{D}$ and  $\mathcal{T}\subset [2:N]$ with $\mathcal{T}^c\cap\set{D}\neq \emptyset$, and for some pmf 
\begin{IEEEeqnarray}{rCl}\label{eq:pmfThem3}
&& \left[\prod_{k=2}^N P_{V_k}P_{X_k|V_k}P_{U_k|V_k} \right]\left[\prod_{r\in\set{R}}\!  P_{\hat{Y}_r|U_rV_rX_rY_r}\right]\nonumber\\
&&\quad~\times \left[\prod_{d\in\set{D}} \!P_{\hat{Y}_d|V_dX_dY_d}\right]P_{X_1|V^N_2U(\set{R})}P_{Y_1^N|X_1^N}
\end{IEEEeqnarray}
such that
\begin{IEEEeqnarray}{rCl}\label{eq:fbThm3}
\sum_{r\in\set{T} \cap\set{R} } &&I(\hat{Y}_r;Y_r|U_r,V_r,X_r) + \sum_{d\in\set{T}\cap\set{D} }  I(\hat{Y}_d;Y_d|V_d,X_d)\nonumber\\
 &&\leq I(X(\set{T});Y_1|U_2^N,V_2^N,X(\set{T}^c),X_1)
 \end{IEEEeqnarray}
where $U_d=\emptyset$, for all $d\in\set{D}$, and $\mathcal{T}^c$ is the complement of $\mathcal{T}$ in $[2:N]$.

\begin{proof}
See Section \ref{Sec:extension}.
\end{proof}

\end{Theorem}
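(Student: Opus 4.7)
The plan is to use a block-Markov coding scheme over $B$ blocks, splitting the source message into $B-1$ sub-messages $m_1,\ldots,m_{B-1}$ each of rate close to $nR$. The novelty compared to the schemes for Theorems~\ref{Them:unicast1}--\ref{Them:unicast2.1} is that, in the absence of feedback, the transmitter must itself decode the compression indices of all the relays and receivers from its own channel output $Y_1$. The auxiliary variable $V_k$ plays the role of a cooperation signal that Node $k$ reserves for the transmitter in each block (representing the compression index the transmitter will have decoded by then), while $U_r$ represents the common partial-decode-forward message at relay $r$, and $\hat{Y}_k$ is the block-by-block Wyner--Ziv compression of $Y_k$.

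For the codebook, in each block $b$ I would generate $v_k^n$ codewords indexed by the previous block's compression index $l_{k,b-1}$; conditional on $v_k^n$, generate $x_k^n$ indexed by the current block's compression index $l_{k,b}$ and, for relays, $u_r^n$ indexed by the partial-decode message $w_{b-1}$; and then draw $\hat{y}_k^n$ codewords from $P_{\hat{Y}_k|V_kU_kX_kY_k}$. At the transmitter, draw $x_1^n$ from $P_{X_1|V^N_2 U(\set{R})}$ indexed by $m_b$ together with the previous block's compression indices. In block $b$, Node $k$ then sends $x_k^n(v_k^n(l_{k,b-1}),l_{k,b})$, each relay additionally superimposes $u_r^n(w_{b-1})$, and the transmitter sends $x_1^n(m_b,w_b,l_{2,b-1},\ldots,l_{N,b-1})$; this is well defined because the transmitter has already recovered the previous block's compression indices from $Y_1$.

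The decoding proceeds in two phases. First, at the end of each block $b$ the transmitter applies joint-typicality decoding on $y_1^n$ of that block to recover $\{l_{k,b}\}_{k\in[2:N]}$, using its self-constructed side information $(v_2^n,\ldots,v_N^n,u(\set{R})^n,x_1^n)$. The standard multiple-access error analysis over all subsets $\set{T}\subseteq[2:N]$ of erroneous indices produces the MAC-type condition \eqref{eq:fbThm3}, which plays the same role here as the feedback-rate constraint \eqref{eq:fbTh2} did in Theorem~\ref{Them:unicast2}. Second, each destination $d$ performs backward decoding across the $B$ blocks: starting from block $B$ and working backward, it jointly decodes $m_b$, the partial-decode message $w_{b-1}$, and the previous block's compression indices $\{l_{k,b-1}\}$ using $y_d^n$ of block $b$ together with the reconstructed $\hat{y}(\set{T}^c)$ as enhanced outputs. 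An NNC-style error analysis then yields the two rate bounds in \eqref{eq:regionThem3}: the first accounts for the case in which the destination must simultaneously resolve the partial-decode variables (the $\min_{r\in\set{R}} I(U_r;\hat{Y}_r|X_r,V_r)$ term reflects the bonus from the transmitter having already disambiguated those signals), while the second bound corresponds to the case in which the $u$ and $v$ variables are effectively known via the superposition structure.

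The main obstacle I expect is the joint-typicality analysis at the transmitter. Since the transmitter must recover all $N-1$ compression indices from a single observation, the error event decomposes over all subsets of indices decoded in error, producing the family of MAC constraints summarized by \eqref{eq:fbThm3}; verifying that this decoding step is compatible with the backward decoding at the destinations---so that $v_k$ in block $b$ consistently represents $l_{k,b-1}$ and the superposition $P_{X_1|V^N_2 U(\set{R})}$ in \eqref{eq:pmfThem3} is realizable without introducing new error events between the two phases---is the technical heart of the proof.
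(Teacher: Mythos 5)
Your proposal follows essentially the same route as the paper's Section VI proof: the relays and receivers re-send the stale compression index through the superposition layer $V_k$ so that the transmitter, which decodes all compression indices from $Y_1$ as a MAC with side information (yielding \eqref{eq:fbThm3}), can coherently superpose $X_1$ on $(V_2^N,U(\set{R}))$, with hybrid compress/partial-decode-forward at the relays and backward decoding at the destinations. The only loose point is the one-block-delay bookkeeping — the paper uses $B+2$ blocks with $X_k$ carrying $l_{k,b-1}$ and $V_k$ carrying $l_{k,b-2}$ in block $b$, so your "current block's compression index" must be read as the index generated at the end of the previous block to keep the scheme causal — but this is a relabeling, not a gap.
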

\begin{Remark}
The scheme for Theorem \ref{Them:unicast3} requires the transmitter  to decode the  compression messages generated by all receivers and relays, which may limit the  performance if there are  weak links from the receivers or relays to the transmitter. One proper way to improve the scheme is to allow the transmitter   to adaptively decode a set of (not all)   nodes' compression messages. 

Suppose the transmitter  decodes only the compression messages generated by the set of nodes $\set{A}\subseteq [2:N]$. Then by a scheme  similar to that  for Theorem \ref{Them:unicast3}, we obtain a new lower bound having same rate espression as \eqref{eq:regionThem3}, but is maximized over all set ${\set{A}\subseteq [2:N]} $ 
%\begin{figure*}
%\begin{subequations}\label{eq:regionThem4}
%\begin{IEEEeqnarray}{rCl}
%R&\leq &  \max_{\set{A}\subseteq [2:N]} \big\{I(X_1;  \hat{Y}_2^N,Y_d|U_2^N,V_2^N,X_2^N)+\min_{r\in \set{R}} I(U_r;Y_r|V_r,X_r)\big \} \\
%R&\leq & \max_{\set{A}\subseteq [2:N]} \big\{ I(X_1,\!X(\set{T}),\!U(\set{T}),\!V(\set{T});\hat{Y}(\set{T}^c),\!Y_d|U(\set{T}^c),\!V(\set{T}^c),\!X(\set{T}^c))\! -\!I(\hat{Y}(\set{T});{Y}(\set{T})|U_2^N,\!V_2^N,\!X_1^N,\!\hat{Y}(\set{T}^c),\!Y_d) \big\}\quad
%\end{IEEEeqnarray}
%\end{subequations}
%\rule{\textwidth}{0.6pt}
%\end{figure*}
for all  $d\in\mathcal{D}$, $\mathcal{T}\subset [2:N]$ with $\mathcal{T}^c\cap\set{D}\neq \emptyset$,  and for some pmf 
\begin{IEEEeqnarray}{rCl}\label{eq:pmfThem4}
&&\left[\prod_{k=2}^N P_{V_k}P_{X_k|V_k}P_{U_k|V_k}\right]\left[\prod_{r\in\set{R}}\! P_{\hat{Y}_r|U_rV_rX_rY_r}\right]\nonumber\\
&&\quad\times \left[\prod_{d\in\set{D}} \!P_{\hat{Y}_d|V_dX_dY_d}\right]P_{Y_1^N|X_1^N}P_{X_1|V(\set{A})U(\set{A})}
\end{IEEEeqnarray}
such that
\begin{IEEEeqnarray}{rCl}\label{eq:fbThm3}
\sum_{r\in\set{T}_\set{A} \cap\set{R} } &&I(\hat{Y}_r;Y_r|U_r,V_r,X_r) + \sum_{d\in\set{T}_\set{A}\cap\set{D} }  I(\hat{Y}_d;Y_d|V_d,X_d)\nonumber\\
 &&\leq I(X(\set{T}_\set{A});Y_1|U(\set{A}),V(\set{A}),X(\set{T}_\set{A}^c),X_1)
 \end{IEEEeqnarray}
 where $U_d=\emptyset$ for all $d\in\set{D}$, $\mathcal{T}^c$ is the complement of $\mathcal{T}$ in $[2:N]$, $\set{T}_\set{A}=\set{T}\cap\set{A}$, and $\set{T}^c_\set{A}$ is the complement of $\set{T}_\set{A}$ in $\set{A}$. %The proof is very  similar to that for Theorem \ref{Them:unicast3} and is omitted.%, and $U_d=\emptyset$, for all $d\in\set{D}$.

This lower bound reduces to the lower bound in Theorem \ref{Them:unicast3} when  $\set{A}=[2:N]$, and to the NNC lower bound when $\set{A}=\emptyset$ and $V_k=U_k=\emptyset$, for all $k\in[2:N]$.
 
\end{Remark}
%\begin{Remark}
%Given feedback rates $R_{\textnormal{Fb},k}$, $k\in[2:N]$,  if  all pmf in \eqref{eq:pmfThem3}  satisfy both \eqref{eq:fbTh2} and \eqref{eq:fbThm3}, then   the achievable rate in Theorem \ref{Them:unicast3}  is larger than the rate in Theorem \ref{Them:unicast2}. This is  because when    letting  $V_k=X_k$, \eqref{eq:regionThem3} and \eqref{eq:pmfThem3}   reduces to \eqref{eq:regionThem2} and \eqref{eq:pmfThem2}, respectively. 
%\end{Remark}

\section{Examples}\label{sec:Eg}
\subsection{The  relay channel with relay-transmitter feedback}

 Consider the  relay channel $P_{Y_2Y_3|X_1X_2}(y_2,y_3|x_1,x_2)$ with perfect feedback from the relay to the transmitter, see Fig. \ref{fig:relayFb}.

\begin{figure}[h!]
\centering
\includegraphics[width=0.45\textwidth]{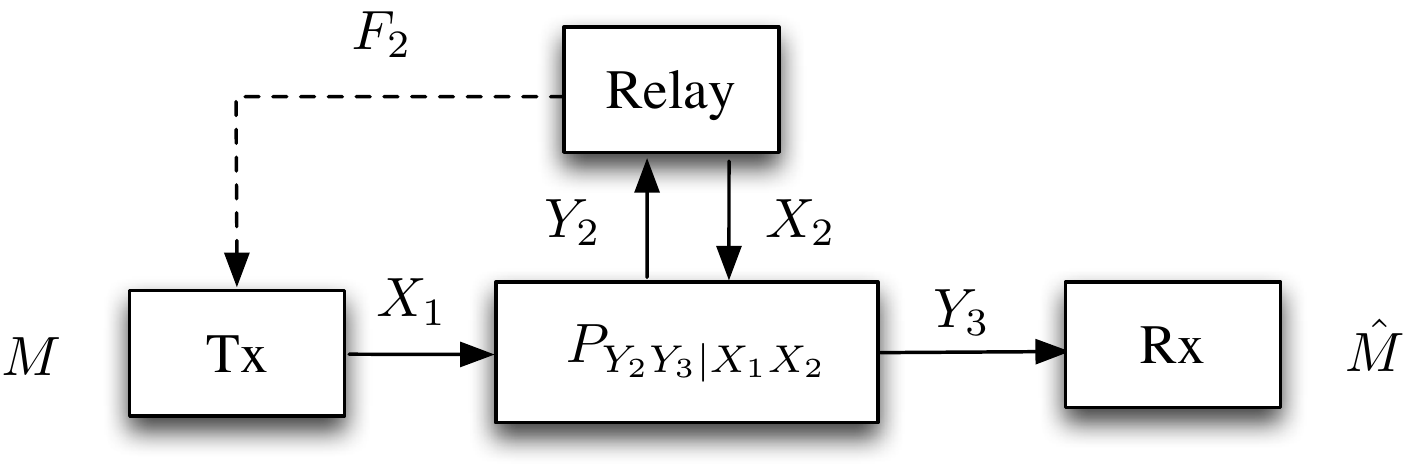}
\caption{Relay channel with relay-transmitter feedback} \label{fig:relayFb}
\end{figure}

 Let $\hat{Y}_3=\emptyset$, then  lower bound in Theorem \ref{Them:unicast1} specializes to 
\begin{subequations}\label{eq:regionCF}
\begin{IEEEeqnarray}{rCl} 
R&\leq& I(X_1; \hat{Y}_2,Y_3|X_2) \\
R&\leq& I(X_1,X_2;Y_3) -I(\hat{Y}_2;Y_2|X_1,X_2,Y_3) 
\end{IEEEeqnarray}
\end{subequations}
for some pmf $P_{X_1X_2}P_{\hat{Y}_2|X_2Y_2}$. 
 
 Let $U_3=\hat{Y}_3=\emptyset$, then  Theorem \ref{Them:unicast2} specializes to 
\begin{subequations}\label{eq:regionCFDF}
\begin{IEEEeqnarray}{rCl} 
R&\leq& I(X_1; \hat{Y}_2,Y_3|U_2,X_2)+I(U_2;Y_2|X_2) \\
R&\leq& I(X_1,X_2;Y_3) -I(\hat{Y}_2;Y_2|U_2,X_1,X_2,Y_3) 
\end{IEEEeqnarray}
\end{subequations}
for some pmf $P_{X_1X_2U_2}P_{\hat{Y}_2|X_2U_2Y_2}$.

In \cite{Gabbai'06} Gabbai and Bross studied this channel and proposed coding schemes based on  restricted decoding and deterministic partitioning. The rates \eqref{eq:regionCF} and \eqref{eq:regionCFDF} recover  Gabbai and Bross's  rates of Theorems 2 and  3 in \cite{Gabbai'06}, respectively.

By using {NNC} \cite{Lim'11},  the  rate $R$ satisfying
\begin{subequations}\label{eq:NNC1R}
\begin{IEEEeqnarray}{rCl}
R\leq &&I(X_1;\hat{Y}_2,Y_3|X_2)\\
R\leq && I(X_1,X_2;Y_3)\!-\!I(\hat{Y}_2;Y_2|X_1,X_2,Y_3)
\end{IEEEeqnarray}
\end{subequations}
is achievable for any pmf $P_{X_1}P_{X_2}P_{\hat{Y}_2|X_2Y_2}$, which coincides with the compress-forward lower bound  \cite[Theorem 6]{Cover'79}.

By using  {DDF} \cite{Lim'14ITW,Lim'14ISIT},  the  rate $R$ satisfying\begin{subequations}\label{eq:regionDistDF}
\begin{IEEEeqnarray}{rCl}
R&\leq& I(X_1,X_2;Y_3)\\
R&\leq& I(U_2;Y_2|X_2)+I(X_1;Y_3|X_2,U_2) 
\end{IEEEeqnarray}
\end{subequations}
is achievable for any pmf  $P_{X_1X_2U_2}$, which coincides with the partial decode-forward lower bound   \cite[Theorem 7]{Cover'79}.

The lower bound \eqref{eq:regionCF} includes \eqref{eq:NNC1R} because it allows a joint input distribution $P_{X_1X_2}$ rather than $P_{X_1} P_{X_2}$. The lower bound \eqref{eq:regionCFDF} includes   \eqref{eq:NNC1R} and \eqref{eq:regionDistDF}, which can be seen by letting $U_2=\emptyset$ and $\hat{Y}_2=\emptyset$, respectively. In \cite{Gabbai'06} Gabbai and Bross showed that for the  Gaussian  and $Z$ relay channels, the lower bound \eqref{eq:regionCFDF}  strictly improves on the known lower bounds on the achievable rate in the absence of feedback, including the compress-forward lower bound  in \eqref{eq:NNC1R}, and the partial decode-forward lower bound  in \eqref{eq:regionDistDF}. In view of this fact, we  have the following corollary:

\begin{Corollary}
For the  DM single-relay channel with relay-transmitter feedback, our coding scheme  recovers Gabbai and Bross's results, and  can strictly improve  on NNC \cite{Lim'11},      DDF  \cite{Lim'14ITW} and all known lower bounds on the achievable rate in the absence of feedback. 
\end{Corollary}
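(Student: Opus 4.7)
The plan is to verify the corollary by two specializations of our feedback theorems to the three-node channel $P_{Y_2Y_3|X_1X_2}$ (with $N=3$, $\set{R}=\{2\}$, $\set{D}=\{3\}$, and perfect feedback so that $R_{\textnormal{Fb},2}$ is effectively unbounded), followed by appeal to the inclusions with the no-feedback bounds and the numerical results of \cite{Gabbai'06}.

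First, to recover Gabbai-Bross's Theorem 2, I would instantiate Theorem \ref{Them:unicast1} and set the receiver's compression random variable $\hat{Y}_3=\emptyset$. Only the subset choices $\set{T}=\{2\}$ and $\set{T}=\emptyset$ are admissible (since $\set{T}^c\cap\set{D}\neq\emptyset$ requires $3\in\set{T}^c$); these produce the cut-set-like bound and the CF-type bound in \eqref{eq:regionCF}, while the joint pmf \eqref{eq:pmfThm1} collapses to $P_{X_2}P_{X_1|X_2}P_{\hat{Y}_2|X_2Y_2}=P_{X_1X_2}P_{\hat{Y}_2|X_2Y_2}$. The feedback-rate constraint \eqref{fb:Thm1} is trivially met under perfect relay-transmitter feedback. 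For Gabbai-Bross's Theorem 3, the same substitutions with the additional choice $U_3=\emptyset$ in Theorem \ref{Them:unicast2} reduce \eqref{eq:Thm2a}-\eqref{eq:Thm2b} to \eqref{eq:regionCFDF} and \eqref{eq:pmfThem2} to $P_{X_1X_2U_2}P_{\hat{Y}_2|X_2U_2Y_2}$. This establishes the first claim.

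Next, for the strict-improvement claim I would chain the inclusions already noted in the text. The region \eqref{eq:regionCFDF} contains \eqref{eq:NNC1R} by setting $U_2=\emptyset$ and contains \eqref{eq:regionDistDF} by setting $\hat{Y}_2=\emptyset$; moreover, \eqref{eq:regionCF} contains \eqref{eq:NNC1R} because the former permits the joint input law $P_{X_1X_2}$ rather than the product $P_{X_1}P_{X_2}$. Hence every rate achievable by NNC or DDF on the single-relay channel is also achievable by our schemes. Then I would invoke the explicit computations of \cite{Gabbai'06} for the Gaussian and $Z$ relay channels, where parameter regimes are exhibited in which \eqref{eq:regionCFDF} is \emph{strictly} larger than both the CF and partial-DF bounds; by the inclusion above, our bound inherits this strict gap over every known no-feedback lower bound.

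The main obstacle, and really the only place any care is needed, is the mechanical verification that after plugging $\hat{Y}_3=\emptyset$ (respectively $U_3=\hat{Y}_3=\emptyset$) into the multi-letter expressions of Theorems \ref{Them:unicast1} and \ref{Them:unicast2}, all conditional mutual-information terms collapse exactly into the two-inequality forms \eqref{eq:regionCF} and \eqref{eq:regionCFDF}. This is a routine expansion using the Markov structure induced by \eqref{eq:pmfThm1} and \eqref{eq:pmfThem2}, together with the fact that $\hat{Y}_3=\emptyset$ makes $I(\hat{Y}_3;Y_3|\cdot)=0$ and eliminates $\hat{Y}_3$ from the enhanced-output list at the sole decoder $d=3$. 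Once this bookkeeping is done, the rest of the proof is citation and monotone inclusion.
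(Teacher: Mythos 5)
Your proposal is correct and follows essentially the same route as the paper: specialize Theorem \ref{Them:unicast1} (with $\hat{Y}_3=\emptyset$) and Theorem \ref{Them:unicast2} (with $U_3=\hat{Y}_3=\emptyset$) to the three-node channel to obtain \eqref{eq:regionCF} and \eqref{eq:regionCFDF}, observe the inclusions of \eqref{eq:NNC1R} and \eqref{eq:regionDistDF} via $U_2=\emptyset$, $\hat{Y}_2=\emptyset$, and the relaxed joint input law, and then cite the Gaussian and $Z$ relay channel computations of \cite{Gabbai'06} for the strict gap. The enumeration of admissible subsets $\set{T}\in\{\emptyset,\{2\}\}$ and the collapse of the pmf to $P_{X_1X_2}P_{\hat{Y}_2|X_2Y_2}$ match the paper's derivation exactly.
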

\subsection{Enhanced Gaussian relay channel}

\begin{figure}[h!]
\centering
\includegraphics[width=0.45\textwidth]{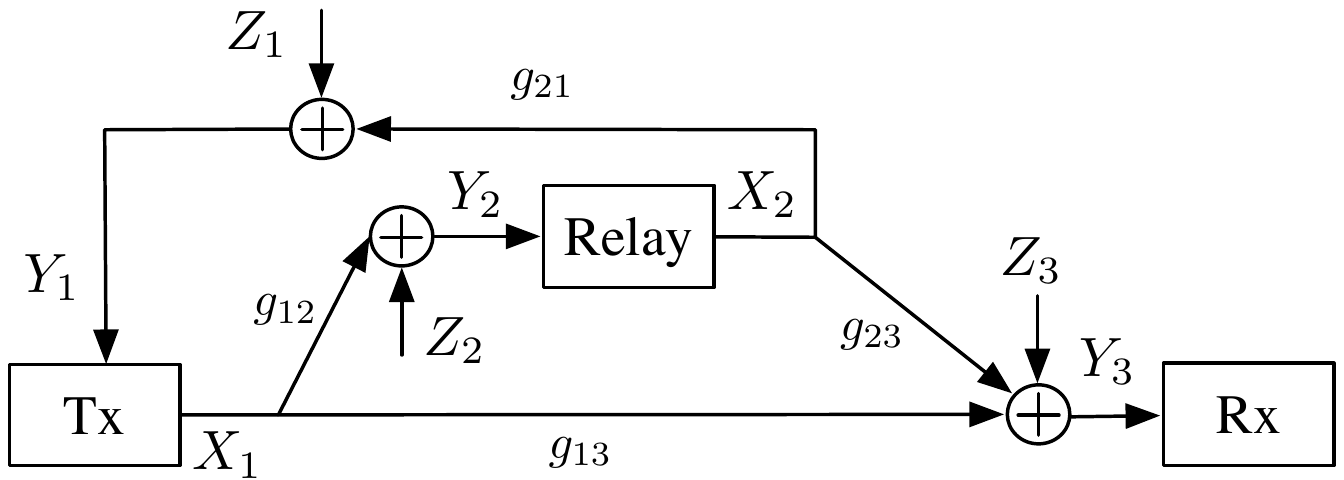}
\caption{The enhanced Gaussian relay channel} \label{fig:EGaussianRelay}
\end{figure}
Consider an enhanced Gaussian relay channel where the transmitter can access the output $Y_1$, see Fig. \ref{fig:EGaussianRelay}. The channel outputs are: %$Y_1=g_{21}X_2+Z_1$, $Y_2=g_{12}X_1+Z_2$ and $Y_3=g_{13}X_1+g_{23}X_2+Z_3$,
%\begin{subequations}
\begin{IEEEeqnarray*}{rCl}
Y_1&=&g_{21}X_2+Z_1,\\
Y_2&=&g_{12}X_1+Z_2,\\
Y_3&=&g_{13}X_1+g_{23}X_2+Z_3
\end{IEEEeqnarray*}
%\end{subequations}
where $g_{21}$, $g_{23}$, $g_{12}$ and $g_{13}$ are channel gains and $Z_1\sim\set{N}(0,1)$, $Z_2\sim\set{N}(0,1)$ and $Z_3\sim\set{N}(0,1)$ are independent Gaussian noise variables. The  input power constraints  are $\mathbb{E}|X^2_1|\leq P_1$ and $\mathbb{E}|X^2_2|\leq P_2$.  Let $s_{12}=g^2_{12}P_1$, $s_{13}=g^2_{13}P_1$, $s_{23}=g^2_{23}P_2$ and $s_{21}=g^2_{21}P_2$.

We compare the lower bound in Theorem \ref{Them:unicast3}  with the cut-set outer bound and the previous known lower bounds, such as amplify-forward, NNC, DDF  and Cover\--El Gama's general lower bound \cite[Theorem 7]{Cover'79}. \\
\emph{Achievable rate in Theorem \ref{Them:unicast3}:}
Let $U_3=V_3=\hat{Y}_3=\emptyset$, then Theorem \ref{Them:unicast3} reduces to 
\begin{subequations}\label{eq:Erelaypro}
\begin{IEEEeqnarray}{rCl}
R &\leq &I(X_1;\hat{Y}_2,Y_3|U_2,V_2,X_2)+ I(U_2;Y_2|V_2,X_2) \\
R &\leq &I(X_1,X_2;Y_3)-I(\hat{Y}_2;Y_2|U_2,V_2,X_1,X_2,Y_3)\quad
\end{IEEEeqnarray}
for some pmf $P_{V_2}P_{X_2|V_2}P_{U_2|V_2}P_{X_1|V_2U_2}P_{\hat{Y}_2|X_2V_2U_2Y_2}$ such that \begin{IEEEeqnarray}{rCl}I(\hat{Y}_2;Y_2|U_2,V_2,X_2)\leq I(X_2;Y_1|U_2,X_1,V_2).
\end{IEEEeqnarray}
\end{subequations}
To compute  \eqref{eq:Erelaypro},  we choose the same distributions  as in \cite{Chong'05}:
\begin{IEEEeqnarray}{rCl}
&&U_2=aV_2+W_0, X_2=cV_2+W_2\nonumber\\
&&X_1=bU_2+W_1, \hat{Y}_2=Y_2+Z'
\end{IEEEeqnarray}
where $V_2\sim\set{N}(0, P_1), W_0\sim\set{N}(0,\frac{\bar{\alpha}\beta P_1}{b^2}),W_1\sim\set{N}(0,\alpha P_1), W_2\sim\set{N}(0,\gamma P_2)$ and $Z'\sim\set{N}(0,N')$  are independent, for  $\alpha,\beta, \gamma \in[0,1]$. For this choice, we have,
\begin{IEEEeqnarray}{rCl}
I(X_1;\hat{Y}_2,Y_3|X_2,V_2,U_2)&=&\mathcal{C}\Big(\alpha s_{13}+\frac{\alpha s_{12}}{1+N'}\Big) \nonumber\\	
 I(U_2;Y_2|V_2,X_2)&=& \mathcal{C}\Big({\frac{s_{12}\beta\bar{\alpha}}{\alpha s_{12}+1}}\Big)\nonumber\\
 I(X_1,X_2;Y_3) &=& \mathcal{C}\Big({2\sqrt{\bar{\alpha}\bar{\beta}\bar{\gamma}s_{13}s_{23} }\!+\!s_{13}\!+\!s_{23} }\Big)\nonumber\\
 I(\hat{Y}_2;Y_2|U_2,\!V_2,\!X_1,\!X_2,\!Y_3)&=& \mathcal{C}\Big(\frac{1}{N'}\Big),~
\end{IEEEeqnarray}
and 
\begin{IEEEeqnarray}{rCl}
I(\hat{Y}_2;Y_2|U_2,V_2,X_2)&=& \mathcal{C}\Big(\frac{1+\alpha s_{12}}{N'} \Big)\nonumber\\
I(X_2;Y_1|U_2,X_1,V_2) &=& \mathcal{C} ({\gamma s_{21}}).
\end{IEEEeqnarray}
Thus we obtain the lower bound
\begin{IEEEeqnarray}{rCl} \label{eq:pro}
R\leq && \min\Big \{\mathcal{C}\Big(\alpha s_{13}+\frac{\alpha s_{12}}{1+N'}\Big) +\mathcal{C}\Big({\frac{s_{12}\beta\bar{\alpha}}{\alpha s_{12}+1}}\Big), \nonumber\\
&&\quad \mathcal{C}\Big({2\sqrt{\bar{\alpha}\bar{\beta}\bar{\gamma}s_{13}s_{23} }+s_{13}+s_{23} }\Big)- \mathcal{C}\Big(\frac{1}{N'}\Big)\Big \}
\end{IEEEeqnarray}
subject to the constraint
\begin{IEEEeqnarray}{rCl}\label{eq:constPro3}
N' \geq \frac{1+\alpha s_{12}}{\gamma s_{21}}.
\end{IEEEeqnarray}
\emph{Amplify-forward:}  For the general Gaussian relay channel with linear relaying functions, finding the channel capacity is a non-convex optimization problem for blocklength $n\geq 2$, which  is almost  intractable.  The paper \cite{Zahedi'thesis} proposed an achievable rate:
\begin{IEEEeqnarray*}{rCl}
R \leq  \max_{0<\alpha\leq 1}\frac{1}{2}\mathcal{C}\bigg( 2\alpha P \Big(1+\frac{\big(\sqrt{(1-\alpha)/\alpha}+g_{12}g_{23}d\big)^2}{1+g^2_{23}d^2}\Big) \bigg)
\end{IEEEeqnarray*}
where $d=\sqrt{2P_2/(2\alpha s^2_{13}+1)}$.\\
\emph{NNC:}
When using NNC \cite{Lim'11}, the achievable rate is:
\begin{IEEEeqnarray*}{rCl}\label{eq:NNC1REnhance}
R&\leq &I(X_1;\hat{Y}_2,Y_3|X_2)-I(\hat{Y}_1;Y_1|X_1,X_2,\hat{Y}_2,Y_3),\nonumber\\
R&\leq &  I(X_1,X_2;Y_3)\!-\!I(\hat{Y}_2;Y_2|X_2,\!Y_3)\!-\!I(\hat{Y}_1;Y_1|X_1,\!X_2,Y_3)%\nonumber\\
\end{IEEEeqnarray*}
for some pmf $P_{X_1}P_{X_2}P_{\hat{Y}_2|X_2Y_2}P_{\hat{Y}_1|X_1Y_1}$.  It's easy to check that  the optimal choice of $Y_1$ is $\hat{Y}_1=\emptyset$, which leads to the compress-forward lower bound   \eqref{eq:NNC1R}. The optimal distribution of $\hat{Y}_2$ is generally unknown. Choose $\hat{Y}_2=Y_2+Z'$ where $Z'\sim\set{N}(0,\sigma^2)$ and optimise over $\sigma^2$. We obtain the achievable rate
\begin{IEEEeqnarray}{rCl}
R\leq \mathcal{C}\Big(s_{13}\!+\!\frac{s_{12}s_{23}}{s_{13}\!+\!s_{12}\!+\!s_{23}\!+\!1}\Big).
\end{IEEEeqnarray} 
\emph{DDF:}
When using {DDF} \cite{Lim'14ITW}, the achievable rate is same as the partial  decode-forward lower bound \eqref{eq:regionDistDF}. For the Gaussian relay channels, partial decode-forward coding doesn't improve the decode-forward  lower bound \cite{Zahedi'thesis},   thus we obtain the achievable rate
\begin{IEEEeqnarray}{rCl}
R\leq  \min\big \{\mathcal{C}(s_{13}\!+\!s_{23}\!+\!2\rho\sqrt{s_{13}s_{23}}),\mathcal{C}\big(s_{12}(1\!-\!\rho^2)\big) \big\}~
\end{IEEEeqnarray}
for $0\leq \rho\leq 1$.\\
\emph{ Cover--El Gamal's general lower bound \cite[Theorem 7]{Cover'79}:}   
%Cover\--El Gama's lower bound \cite[Theorem 7]{Cover'79} 
In \cite{Cover'79} Cover and El Gamal proposed a general lower bound for the  relay channel by  combining compress-forward and decode-forward, which can be written as:
\begin{subequations}
\begin{IEEEeqnarray}{rCl}\label{eq:ErelayCover}
R &\leq& I(X_1;\hat{Y}_2,Y_3|X_2,U_2)+ I(U_2;Y_2|V_2,X_2) \\
R&\leq&I(X_1,X_2;Y_3)-I(\hat{Y}_2;Y_2|U_2,X_1,X_2,Y_3)
\end{IEEEeqnarray}
for some pmf $P_{V_2}P_{X_2|V_2}P_{U_2|V_2}P_{X_1|U_2}P_{\hat{Y}_2|X_2U_2Y_2}$ such that \begin{IEEEeqnarray}{rCl}I(\hat{Y}_2;Y_2|U_2,X_1,X_2,Y_3)\leq I(X_2;Y_3|V_2).\end{IEEEeqnarray}
\end{subequations}

Choosing the same distributions  as in \cite{Chong'05},  we obtain the lower bound with same  expression as \eqref{eq:pro} but subject to the  constraint 
\begin{IEEEeqnarray}{rCl}\label{eq:cover}
N'\geq (\alpha ( s_{13}+s_{23})+1)\frac{(\beta-\alpha\beta+\alpha)s_{13}+1 }{\gamma s_{23}(\alpha s_{13}+1)}.
\end{IEEEeqnarray}

Comparing \eqref{eq:constPro3} with \eqref{eq:cover}, if
\begin{IEEEeqnarray}{rCl} \label{condition:enh}
\frac{1+\alpha s_{12}}{ s_{21} (\alpha ( s_{13}+s_{23})+1)} <\frac{(\beta-\alpha\beta+\alpha)s_{13}+1 }{ s_{23}(\alpha s_{13}+1)}
\end{IEEEeqnarray}
 for all $\alpha,\beta\in[0,1]$ (e.g.  $s_{21}>s_{23}, s_{12}<s_{13}$), our coding scheme always improves Cover--El Gama's general lower bound \cite[Theorem 7]{Cover'79}. This general lower bound includes both the partial decode-forward  and  compress-forward lower bounds \cite{Cover'79}, thus we have the following corollary:
\begin{Corollary}
For the enhanced Gaussian relay channel which  satisfies \eqref{condition:enh}, our coding scheme  improves the known inner bounds, including  the NNC and DDF lower bounds  and Cover--El Gama's general lower bound \cite[Theorem 7]{Cover'79}. 
\end{Corollary}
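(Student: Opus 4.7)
The plan is to exploit the observation, already established in the excerpt, that under the Chong-style Gaussian input assignment both Theorem \ref{Them:unicast3} and Cover--El Gamal's general lower bound \cite[Theorem 7]{Cover'79} yield exactly the same rate expression \eqref{eq:pro}, the only difference being the constraint imposed on the compression-noise variance $N'$. Our scheme requires $N'\geq (1+\alpha s_{12})/(\gamma s_{21})$ as in \eqref{eq:constPro3}, whereas Cover--El Gamal's scheme requires the strictly larger lower bound \eqref{eq:cover}. The corollary therefore reduces to showing that a strictly enlarged feasible set for $N'$ translates into a strictly larger achievable rate.

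First I would verify the monotonicity of the two min-terms in \eqref{eq:pro} as functions of $N'$. The first term is strictly decreasing in $N'$, because of the summand $\mathcal{C}(\alpha s_{13}+\alpha s_{12}/(1+N'))$, which reflects that coarser compression hurts the joint decoding of $X_1$ at the destination. The second term is strictly increasing in $N'$, because of the $-\mathcal{C}(1/N')$ penalty, which reflects that a smaller compression index costs less relay-to-destination rate. Hence the maximum over $N'$ of the minimum of the two is attained either at the unique crossing point of the two terms or at the lower boundary of the feasible set, whichever is larger.

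Next I would use the hypothesis \eqref{condition:enh}: for every $(\alpha,\beta)\in[0,1]^2$, the lower boundary $N'_{\text{ours}}$ of our feasible set lies strictly below the boundary $N'_{\text{CEG}}$ of Cover--El Gamal's. Consequently, for any $(\alpha,\beta,\gamma)$ at which the Cover--El Gamal optimizer is pinned at the boundary $N'=N'_{\text{CEG}}$ (namely the regime where the first, decreasing term is the binding one at $N'_{\text{CEG}}$), our scheme may choose $N'=N'_{\text{ours}}<N'_{\text{CEG}}$, strictly enlarging the first min-term and thus achieving a strictly larger rate. Such a regime is non-empty under \eqref{condition:enh}: for example, taking $\gamma$ close to one while keeping $\alpha$ moderately small drives $N'_{\text{ours}}$ well below $N'_{\text{CEG}}$, so the crossing point sits below $N'_{\text{CEG}}$. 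Taking the supremum over $(\alpha,\beta,\gamma)$ then preserves the strict inequality.

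Finally I would invoke the standard reductions, already highlighted in the excerpt: setting $U_2=V_2=\emptyset$ in Cover--El Gamal's bound recovers compress--forward, which for the single relay channel coincides with NNC, and setting $\hat Y_2=\emptyset$ recovers partial decode--forward, which coincides with DDF. Strict improvement over Cover--El Gamal therefore implies strict improvement over NNC and DDF, completing the corollary. The main obstacle I expect is the parameter analysis in the middle step: monotonicity makes \emph{larger feasible set implies no worse rate} immediate, but securing strict improvement at the level of the outer supremum requires exhibiting an explicit $(\alpha,\beta,\gamma)$ at which the Cover--El Gamal maximizer sits on its constraint boundary and at which \eqref{condition:enh} is active; the structural conditions noted in the excerpt such as $s_{21}>s_{23}$ and $s_{12}<s_{13}$ are precisely the kind of regime that ensures such parameter points exist.
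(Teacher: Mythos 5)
Your proposal matches the paper's own argument: both reduce the claim to the observation that, under the Chong-style Gaussian assignment, the proposed bound and Cover--El Gamal's bound share the rate expression \eqref{eq:pro} and differ only in the constraint on $N'$ (\eqref{eq:constPro3} versus \eqref{eq:cover}), and both then invoke the fact that Cover--El Gamal's bound subsumes compress-forward (hence NNC) and partial decode-forward (hence DDF). Your monotonicity and strictness analysis of the two min-terms is more detailed than the paper, which simply asserts the improvement once the feasible set for $N'$ is enlarged under \eqref{condition:enh}.
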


%Note that    both NNC  and DDF fail to  use  $Y_1$. (In NNC , the optimal choice of $\hat{Y}_1$ is $\hat{Y}_1=\emptyset$, which means that the transmitter doesn't compress $Y_1$). As we will see  in  Theorem \ref{Them:unicast3} (see Section \ref{Sec:extension}), instead of compressing or ignoring  $Y_1$,  the transmitter decodes the compression messages sent by the receivers and relays based on $Y_1$. This is particularly useful when the link from the relay to the transmitter is stronger than the link from the relay to the receiver.  

Based on (\ref{eq:pro}--\ref{eq:cover}), the achievable rates  for $ g_{12}=g_{13}=g_{21}=1$, $g_{23}=0.7$, and $P_1=P_2=P$ are shown in Fig. \ref{fig:RateERelay}. %Note that  Cover\--El Gama's general lower bound is better than NNC but worse than our proposed lower bound (not shown).

\begin{figure}[h!]
\centering
\includegraphics[width=0.5\textwidth]{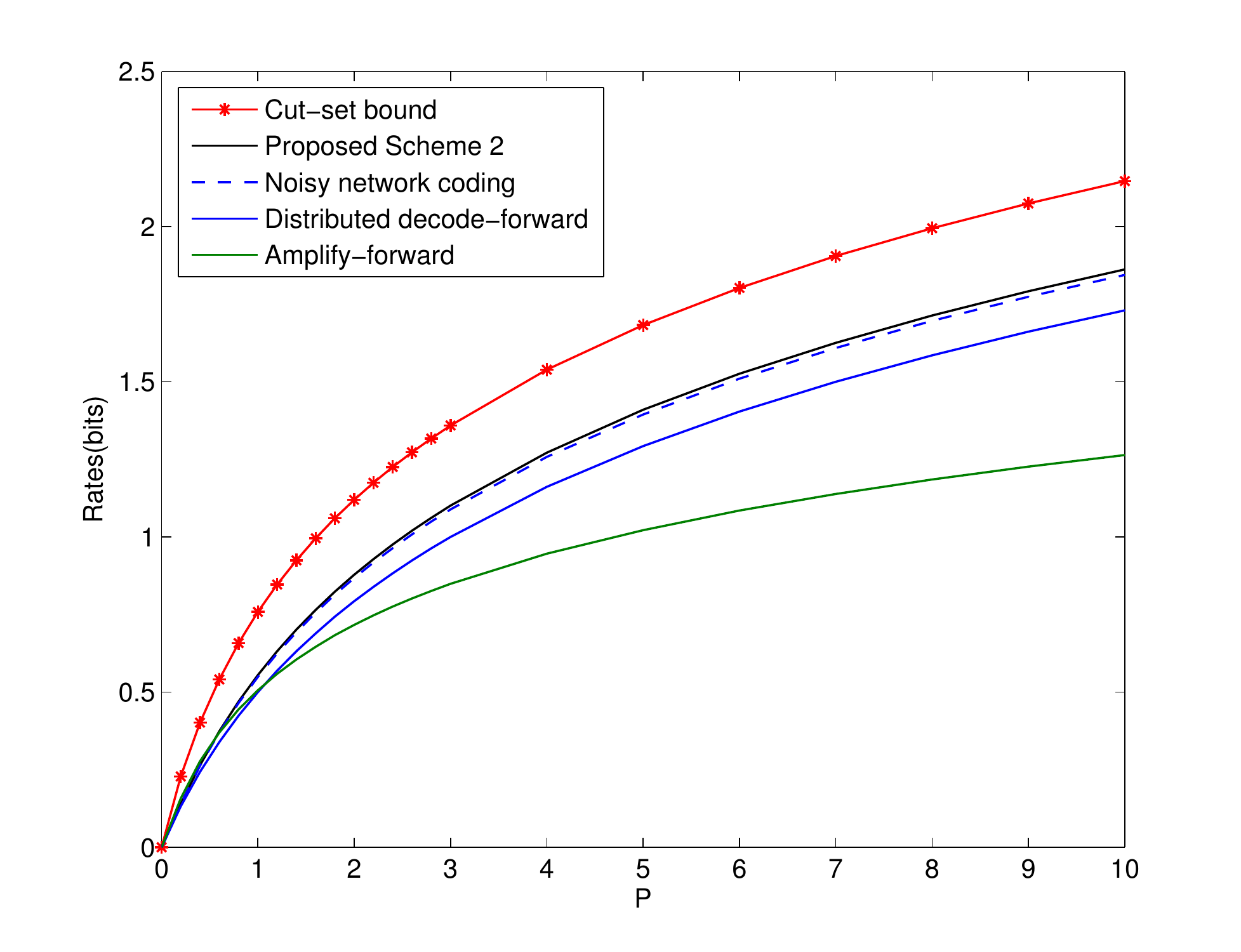}
\caption{Achievable rates for the enhanced Gaussian relay channel with $ g_{12}=g_{13}=g_{21}=1$, $g_{23}=0.7$, and $P_1=P_2=P$} \label{fig:RateERelay}
\end{figure}

Table \ref{tab:compareRates} compares achievable rates for this enhanced Gaussian relay channel for $g_{12}=1/d, g_{13}=1$, $g_{23}=g_{21}=1/|1-d|$, and with $P_1=5, P_2=1$. Here $R_\text{NNC}$, $R_\text{DDF}$, $R_\text{CE}$, $R_\text{Pro1}$ and $R_\text{Pro2}$ denote  rates achieved by NNC, DDF,  rates from \cite[Theorem 7]{Cover'79} and rates from our proposed Theorem \ref{Them:unicast2} and \ref{Them:unicast3}, respectively. The  feedback scheme ($R_\text{Pro1}$) obtains the best performance, and our non-feedback scheme for Theorem  \ref{Them:unicast3} ($R_\text{Pro2}$) strictly improves  the known lower bounds in the absence of feedback.

\begin{table}
\begin{center}
\caption{Achievable rates for the enhanced Gaussian relay  with and without  feedback}
\begin{tabular}{lcccccc}
\toprule
$d$ & $R_\text{NNC}$ & $R_\text{DDF}$ & $R_\text{CE}$ & $R_\text{Pro1}$& $R_\text{Pro2}$\\
\midrule
0.73& 1.6908 & 1.6881& 1.6927 & 1.7069 & 1.6996\\
0.74& 1.6971 & 1.6703& 1.6971 & 1.7111 & 1.7032\\
0.75& 1.7033 & 1.6529& 1.7033 & 1.7153 & 1.7077 \\
0.76& 1.7094 & 1.6358& 1.7094 & 1.7195 & 1.7129\\
\bottomrule
 \label{tab:compareRates}
\end{tabular}
\end{center}
\end{table}

\subsection{4-Node Discrete Memoryless Diamond Network}
Consider the 4-node DM diamond network $P_{Y_2Y_3|X_1}(y_2,y_3|x_1)P_{Y_4|X_2X_3}(y_4|x_2,x_3)$ \cite{Schein'00}.  From Theorem \ref{Them:unicast2.1},   we have 

\begin{Corollary}
For 4-node DM diamond network with relay-transmitter feedback, any rate $R>0$ is achievable if it satisfies
\begin{subequations}\label{eq:DiamonRateFb}
\begin{IEEEeqnarray}{rCl}
R &\leq& I(X_1,X_2,X_3; Y_4)\nonumber\\
&&-I(\hat{Y}_2,\hat{Y}_3;Y_2,Y_3|V_0,U_0,U_2,U_3,X_1,X_2,X_3,Y_4)\\
R &\leq& I(X_1,X_2,U_2;\hat{Y}_3,Y_4|V_0,U_0,X_3,U_3)\nonumber\\
&& +I(U_3;Y_3|V_0,U_0,X_3)+\min_{r\in\{2,3\} }I(U_0;Y_r|V_0,X_r)\nonumber\\
&&-I(\hat{Y}_2;Y_2|V_0,U_0,U_2,U_3,X_1,X_2,X_3,\hat{Y}_3,Y_4)\\
R &\leq& I(X_1,X_2,U_3;\hat{Y}_2,Y_4|V_0,U_0,X_2,U_2)\nonumber\\
&& +I(U_2;Y_2|V_0,U_0,X_2)+\min_{r\in\{2,3\} }I(U_0;Y_r|V_0,X_r)\nonumber\\
&&-I(\hat{Y}_3;Y_3|V_0,U_0,U_2,U_3,X_1,X_2,X_3,\hat{Y}_2,Y_4)\\
R &\leq& I(X_1,X_2,U_2,U_3;Y_4|V_0,U_0)+  \min_{r\in\{2,3\} }I(U_0;Y_r|V_0,X_r)\nonumber\\
&&-I(\hat{Y}_2,\hat{Y}_3;Y_2,Y_3|V_0,U_0,U_2,U_3,X_1,X_2,X_3,Y_4)\nonumber\\
R &\leq& I(X_1; \hat{Y}_2,\hat{Y}_3,Y_4|V_0,U_0,X_2,X_3,U_2,U_3)\nonumber\\
&& +I(U_2;Y_2|U_0,V_0,X_2)+I(U_3;Y_3|U_0,V_0,X_3)\nonumber\\ &&+\min_{r\in\{2,3\} }I(U_0;Y_r|V_0,X_r)
\end{IEEEeqnarray}
\end{subequations}
for  some pmf  $  P_{U_0V_0}P_{X_1|V_0U_0X_2X_3U_2U_3}\Big[ \prod_{r\in\{2,3\}} P_{X_r|V_0}$\\$P_{U_r|V_0U_0X_r}P_{\hat{Y}_r|V_0U_0U_rX_rY_r} \Big] $ such that satisfies
$R_{\textnormal{Fb},r}   \geq  I(\hat{Y}_r;Y_r|V_0,U_0,X_r,U_r)$, for $r\in\{2,3\}$.
\end{Corollary}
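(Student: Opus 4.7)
The plan is to derive this corollary as a direct specialization of Theorem \ref{Them:unicast2.1} to the 4-node diamond setup with $N=4$, $\set{R}=\{2,3\}$, $\set{D}=\{4\}$. Since Node 4 is the only receiver and has no channel input, I set $X_4=\emptyset$; since no other decoder needs to reconstruct it, $\hat{Y}_4=\emptyset$; and the theorem's convention already gives $U_4=\emptyset$. The valid subsets $\mathcal{T}\subset[2:4]$ satisfying $\mathcal{T}^c\cap\set{D}\neq\emptyset$ are exactly those with $4\notin\mathcal{T}$, namely $\mathcal{T}\in\{\emptyset,\{2\},\{3\},\{2,3\}\}$.

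For inequality (\ref{eq:Thm2.1b}), every valid $\mathcal{T}$ gives $\mathcal{T}\cup\set{R}=\{2,3\}$ and $\mathcal{T}^c\cap\set{D}=\{4\}$, so a single constraint emerges regardless of $\mathcal{T}$. Exploiting the diamond Markov chain $Y_4-(X_2,X_3)-(X_1,V_0,U_0,U_2,U_3)$, the mutual information $I(V_0,U_0,X_1,X_2,X_3,U_2,U_3;Y_4)$ collapses to $I(X_1,X_2,X_3;Y_4)$, producing bound (a).

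For inequality (\ref{eq:Thm2.1a}), I would enumerate the four choices of $\mathcal{T}$. Setting $\mathcal{T}=\emptyset$ yields bound (e), where the sum $\sum_{r\in\mathcal{T}^c\cap\set{R}}$ runs over both relays and contributes $I(U_2;Y_2|V_0,U_0,X_2)+I(U_3;Y_3|V_0,U_0,X_3)$; the singletons $\mathcal{T}=\{2\}$ and $\mathcal{T}=\{3\}$ produce bounds (b) and (c) by symmetry, each retaining exactly one residual $U_r$-decoding term from the sum; and $\mathcal{T}=\{2,3\}$ yields bound (d) with an empty sum. In every case the $\min_{r\in\set{R}} I(U_0;Y_r|V_0,X_r)$ term appears unchanged, so the five listed constraints exhaust the output of Theorem \ref{Them:unicast2.1}.

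Finally, I would verify that the joint pmf (\ref{eq:pmfThem2.1}) and the feedback-rate conditions (\ref{eq:fbTh2.1}) specialize correctly: the pmf reduces to the stated form because the $d=4$ factor trivializes under $X_4=U_4=\hat{Y}_4=\emptyset$, and the $d\in\set{D}$ feedback constraint becomes vacuous since $I(\hat{Y}_4;Y_4|X_4)=0$, leaving only $R_{\textnormal{Fb},r}\geq I(\hat{Y}_r;Y_r|V_0,U_0,X_r,U_r)$ for $r\in\{2,3\}$. The main obstacle is purely organizational: carefully tracking the index sets $\mathcal{T},\mathcal{T}^c,\mathcal{T}\cup\set{R},\mathcal{T}^c\cap\set{R},\mathcal{T}^c\cap\set{D}$ across the four choices of $\mathcal{T}$, and verifying that each mutual-information simplification is legitimized by the diamond channel's conditional independences rather than obscuring them.
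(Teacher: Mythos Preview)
Your proposal is correct and follows exactly the paper's approach: the corollary is presented in the paper as an immediate specialization of Theorem~\ref{Them:unicast2.1} to the diamond setting $N=4$, $\set{R}=\{2,3\}$, $\set{D}=\{4\}$ with $X_4=U_4=\hat{Y}_4=\emptyset$, and your enumeration of $\mathcal{T}\in\{\emptyset,\{2\},\{3\},\{2,3\}\}$ together with the use of the diamond Markov structure $Y_4-(X_2,X_3)-(\cdot)$ is precisely what is needed to recover the five stated bounds.
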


From Remark  \ref{eq:Remark1}, \ref{eq:Remark2}  and \ref{RemarkTheorem3},  we  know that the rate \eqref{eq:DiamonRateFb} strictly includes the NNC lower bound if feedback rate is sufficiently large. Now consider the DDF  lower bound for this setup, which is the rate $R>0$ satisfying  
\begin{subequations}\label{eq:DiamonRateDDF}
\begin{IEEEeqnarray}{rCl}
R &\leq& I(X_1,X_2,X_3; Y_4)\\
R &\leq& I(X_1,X_2;{Y}_4,|X_2,U_2)+I(U_2;Y_2|X_2)\\
R &\leq& I(X_1,X_3;{Y}_4,|X_3,U_3)+I(U_3;Y_3|X_3)\\
R &\leq& I(X_1; U_2,U_3,Y_4|X_2,X_3)-I(U_2;X_1,X_3|X_2,Y_2)\nonumber\\
&& -I(U_3;X_1,X_2,U_2|X_3,Y_3) \label{eq:Diff2}
\end{IEEEeqnarray}
for some pmf $P_{X_2}P_{X_3}P_{X_1|X_2X_2}P_{U_2U_3|X_1X_2X_3}$
\end{subequations}
It's not clear in general which of the achievable rate in \eqref{eq:DiamonRateFb} or  \eqref{eq:DiamonRateDDF} is larger.

As mentioned in Remark \ref{Mark:Fb2NoFb},  by letting  $R_{\textnormal{Fb},k}=0$ and $\hat{Y}_k=\emptyset$,  for all $k\in[2:N]$ in  Theorem \ref{Them:unicast2.1}, we obtain a new lower bound  for the 4-node diamond network \emph{without} feedback. This lower bound in essence is achieved by letting the two relays use partial decode forward to decode different parts of the source message.

\begin{Corollary}
For 4-node DM diamond network $P_{Y_2Y_3|X_1}(y_2,y_3|x_1)P_{Y_4|X_2X_3}(y_4|x_2,x_3)$, any rate $R>0$ is achievable if it satisfies
\begin{subequations}
\begin{IEEEeqnarray}{rCl}
R &\leq& I(X_1,X_2,X_3; Y_4) \\
R &\leq& \min_{r\in\{2,3\} }I(U_0;Y_r|V_0,X_r)+I(U_2;Y_2|V_0,U_0,X_2)\nonumber\\
&& +I(X_1,X_2;Y_4|V_0,U_0,X_2,U_2)\\
R &\leq& \min_{r\in\{2,3\} }I(U_0;Y_r|V_0,X_r)+I(U_3;Y_3|V_0,U_0,X_3)\nonumber\\
&& +I(X_1,X_2;Y_4|V_0,U_0,X_3,U_3)\\
R &\leq& \min_{r\in\{2,3\} }I(U_0;Y_r|V_0,X_r)\nonumber\\
&&+  I(X_1,X_2,U_2,U_3;Y_4|V_0,U_0)\\
R &\leq&\min_{r\in\{2,3\} }I(U_0;Y_r|V_0,X_r)\\
&& +I(U_2;Y_2|U_0,V_0,X_2)+I(U_3;Y_3|U_0,V_0,X_3)\nonumber\\ &&+ I(X_1; \hat{Y}_2,\hat{Y}_3,Y_4|V_0,U_0,X_2,X_3,U_2,U_3) \label{eq:Diff1}
\end{IEEEeqnarray}
\end{subequations}
 for  some pmf 
\begin{IEEEeqnarray*}{rCl}
&& P_{U_0V_0}P_{X_1|V_0U_0X_2X_3U_2U_3}\left[\prod_{r\in\{2,3\}} P_{X_r|V_0}P_{U_r|V_0U_0X_r}\right]. 
\end{IEEEeqnarray*} 
\end{Corollary}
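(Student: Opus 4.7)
The plan is to derive this corollary as a specialization of Theorem \ref{Them:unicast2.1} by invoking the no-feedback reduction of Remark \ref{Mark:Fb2NoFb}. Concretely, I instantiate Theorem \ref{Them:unicast2.1} with $N=4$, $\set{R}=\{2,3\}$, $\set{D}=\{4\}$, $d=4$; because the channel factorizes as $P_{Y_2Y_3|X_1}P_{Y_4|X_2X_3}$ and involves neither $X_4$ nor any useful $Y_1$, I treat $X_4$ as deterministic and drop the transmitter output. Setting $R_{\textnormal{Fb},k}=0$ in \eqref{eq:fbTh2.1} forces $I(\hat{Y}_r;Y_r|V_0,U_0,X_r,U_r)=0$ for $r\in\{2,3\}$ and $I(\hat{Y}_4;Y_4|X_4)=0$, so each compression variable may be taken empty (equivalently, conditionally independent of the corresponding $Y_k$); this makes every $I(\hat{Y}(\cdot);Y(\cdot)|\cdots)$ penalty term in \eqref{eq:Thm2.1a}--\eqref{eq:Thm2.1b} vanish.

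The next step is to enumerate the admissible subsets $\set{T}\subset\{2,3,4\}$ with $\set{T}^c\cap\{4\}\neq\emptyset$, namely $\set{T}\in\{\emptyset,\{2\},\{3\},\{2,3\}\}$, and to evaluate \eqref{eq:Thm2.1a} and \eqref{eq:Thm2.1b} for each. For every $\set{T}$ the bound \eqref{eq:Thm2.1b} collapses to $R\le I(X_1,X_2,X_3;Y_4)$, since $\set{T}^c\cap\set{D}=\{4\}$ and the diamond Markov chain $(V_0,U_0,X_1,U_2,U_3)-(X_2,X_3)-Y_4$ eliminates every other variable from the leading mutual information; this is the first inequality of the corollary. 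Evaluating \eqref{eq:Thm2.1a} for $\set{T}=\{3\}$, $\{2\}$, $\{2,3\}$, and $\emptyset$ produces, in that order, the second, third, fourth, and fifth inequalities of the corollary. In each case the $\sum_{r\in\set{T}^c\cap\set{R}}I(U_r;Y_r|\cdots)$ and $\min_{r\in\set{R}}I(U_0;Y_r|\cdots)$ summands are carried through directly, while the leading mutual-information term is simplified using the same diamond Markov chain together with the factorization \eqref{eq:pmfThem2.1}, which in particular makes $(X_2,U_2)$ and $(X_3,U_3)$ conditionally independent given $(V_0,U_0)$.

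The main obstacle is bookkeeping rather than any new coding-theoretic step: at each stage I must verify that the simplifications of the leading mutual information are consistent with the dependency structure imposed by the joint law \eqref{eq:pmfThem2.1}, and that the symbolic form written in the corollary (which keeps $\hat{Y}_2,\hat{Y}_3$ in the fifth bound only for notational continuity with the parent theorem, even though the reduction lets them be taken empty) lines up with what the specialization actually produces. Once this accounting is done, the correspondence between the four $\set{T}$-instantiations of \eqref{eq:Thm2.1a}, the collapsed form of \eqref{eq:Thm2.1b}, and the five bullets of the corollary is mechanical, and no fresh random-coding or joint-typicality argument is required beyond what already proves Theorem \ref{Them:unicast2.1}.
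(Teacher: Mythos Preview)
Your proposal is correct and follows exactly the route the paper itself takes: the corollary is obtained from Theorem~\ref{Them:unicast2.1} via Remark~\ref{Mark:Fb2NoFb} by setting $R_{\textnormal{Fb},k}=0$ and $\hat{Y}_k=\emptyset$ for all $k$, and then enumerating the admissible $\set{T}\subset\{2,3,4\}$ to read off the five inequalities. Your identification of which $\set{T}$ produces which displayed bound, and your use of the diamond Markov structure to collapse \eqref{eq:Thm2.1b} to $I(X_1,X_2,X_3;Y_4)$, match the paper's (terse) derivation.
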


%\subsection{Gaussian diamond channel}

\section{Achievable rates for  DM multiple-relay channels with partial feedback}\label{sec:mRelays}
%Consider a distribution $P_{U_1U_2}$, and let $u_1$ and $u_2$ be respective alphabets of $U_1$ and $U_2$. Consider also a function $f()$ that maps symbols in $u_1$ and $u_2$ to symbols $X$. We write $X^n=f^n(U_1^n,U_2^n)$ to mean that 0$X_i=f(U_{1i},U_{2i})$ for $i=1,2...,n$.

\subsection{Scheme 1A}\label{sec:scheme1A}

Define 
\begin{IEEEeqnarray*}{rCl}
&&\textbf{l}_{b}:=(l_{2,b},\ldots,l_{N,b})\\
&& \hat{\textbf{l}}_{b}:=(\hat{l}_{2,b},\ldots,\hat{l}_{N,b})
\end{IEEEeqnarray*} for $b\in[1:B+1]$. Let  ${\textbf{l}}_{0}=\textbf{1}_{[N\!-\!1]}$ and $m_{B+1}=1$.

We present a block-Markov coding scheme where a sequence of $B$ i.i.d message $m_b\in[1:2^{nR}]$, $b\in[1:B]$ is sent over $B+1$ blocks. In each block $b\in[1:B+1]$: 
\begin{itemize}
\item  After obtaining all feedback messages $\textbf{l}_{b-1}$, the transmitter sends inputs $x_{1,b}^n(m_{b}|\textbf{l}_{b-1})$.
\item Node  $k\in[2:N]$   uses compress-forward  to compress its observed  outputs $y^n_{k,b}$,   and then forwards the compression message $l_{k,b}$ through feedback pipe and sends the channel inputs $x^n_{k,b+1}(l_{k,b})$ in block $b+1$.
\item  Each Receiver $d\in\set{D}$ uses joint backward decoding to decode  source message $m_b$ and  compression messages $\textbf{l}_{b-1}$. 
\end{itemize}
Note that the transmitter here simply forwards the feedback messages $\textbf{l}_{b-1}$ and can reconstruct  Node $k$'s inputs $x^n_{k,b}(l_{k,b-1})$, for all $k\in[2:N]$. Thus we are able to superpose the transmitter's inputs  $x^n_1$ on the receivers' and relays' inputs $(x^n_2,\ldots,x^n_N)$, which attains cooperation between the transmitter and the receivers\&relays.  

The coding is explained with the help of Table  \ref{tab:CF}.

%Encoding and decoding is explained with the help of Table \ref{tab:CF}.
%\begin{figure*}[ht!]
\begin{table*}[ht!]
\begin{center}
\caption{ Coding scheme 1A for  multicast network with  feedback}
\begin{tabular}{>{\bfseries}lccccc}
\toprule
Block &1& 2& $\ldots$ & $B$ &$B+1$ \\
\midrule
$X_1$& $x^n_{1,1}(m_{1}| {\textbf{1}}_{[N-1]})$ & $x^n_{1,2}(m_{2}| {\textbf{l}}_{1})$& $\ldots$  & $x^n_{1,B}(m_{B}| {\textbf{l}}_{B-1})$ & $x^n_{1,B+1}(1| {\textbf{l}}_{B})$\\
$X_k$ &$x^n_{k,1}(1)$ &$x^n_{k,2}(l_{k,1})$ & $\ldots$   &$x^n_{k,B}(l_{k,B-1})$  & $x^n_{k,B+1}(l_{k,B})$\\
$\hat{Y}_k$&$\hat{y}^n_{k,1}(l_{k,1}|1)$ & $\hat{y}^n_{k,2}(l_{k,2}|l_{k,1})$& $\ldots$ & $\hat{y}^n_{k,B}(l_{k,B}|l_{k,B-1})$ & $\hat{y}^n_{k,B+1}(1|l_{k,B})$ \\
\midrule$Y_d$&$\hat{m}_1$& $(\hat{m}_2, \hat{\textbf{l}}_{1})$ & $\ldots$  & $\leftarrow(\hat{m}_B, \hat{\textbf{l}}_{B-1})$ & $\leftarrow\hat{\textbf{l}}_B$\\
\bottomrule
 \label{tab:CF}
\end{tabular}
\end{center}
\end{table*}

% after obtaining all feedback messages $\textbf{l}_{b-1}$, the transmitter sends inputs $x_1^n(m_{b}|\textbf{l}_{b-1})$. Node  $k\in[2:N]$   uses compress-forward  to compress its observed  outputs $Y^n_{k,b}$,   and then forwards the compression message $l_{k,b}$ through feedback pipe and sends the channel inputs $x^n_{k,b+1}(l_{k,b})$. Each Receiver $d\in\set{D}$ uses joint backward decoding to decode  source message $m'_b$ and  compression messages $\textbf{l}_{b-1}$.  The coding is explained with the help of Table  \ref{tab:CF}.

%Note that the transmitter here simply forward the feedback messages $\textbf{l}_{b-1}$ and can reconstruct  Node $k$'s input $x^n_{k,b}(l_{k,b-1})$, for all $k\in[2:N]$. Thus we are able to superpose the transmitter's inputs  $x^n_1$ on the receivers' and relays' inputs $(x^n_2,\ldots,x^n_N)$, which attains cooperation between the transmitter and the receivers\&relays.  

%Define $\textbf{l}_{b-1}:=(l_{2,b-1},\ldots,l_{N,b-1})$ and $\hat{\textbf{l}}_{b-1}:=(\hat{l}_{2,b-1},\ldots,\hat{l}_{N,b-1})$, for $b\in[1:B+1]$. Let  ${\textbf{l}}_{0}=\textbf{1}_{[N\!-\!1]}$ and $m_{B+1}=1$.

\subsubsection{Codebook}
Fix the pmf 
\begin{IEEEeqnarray}{rCl}
 \left[\prod^N_{k=2} P_{X_k}\right]\left[\prod^N_{k=2}P_{\hat{Y}_k|X_kY_k}\right]P_{X_1|X_2^N}P_{Y_1^N|X_1^N}.
\end{IEEEeqnarray}
  For each block $b\in[1:B+1]$ and $k\in[2:N]$, randomly and independently generate $2^{n\hat{R}_k}$ sequences $x_{k,b}^n(l_{k,b-1})\sim \prod^n_{i=1}P_{X_k}(x_{k,b,i})$, $l_{k,b-1}\in[1:2^{n\hat{R}_k}]$.  For each $l_{k,b-1}$, randomly and independently generate $2^{n\hat{R}_k}$ sequences $\hat{y}_{k,b}^n(l_{k,b}|l_{k,b-1})\sim \prod^n_{i=1}P_{\hat{Y}_k|X_k}(\hat{y}_{k,b,i}|x_{k,b,i})$. For each $\textbf{l}_{b-1}$, randomly and independently generate $2^{n{R}}$ sequences $x_{1,b}^n(m_b| \textbf{l}_{b-1})\sim \prod^n_{i=1}P_{X_1|X^N_2}(x_{1,b,i}|x_{2,b,i},\ldots,x_{N,b,i})$, $m_{b}\in[1:2^{n{R}}]$. 

%\end{figure*}

\subsubsection{Source encoding}
  In each block $b\in[1:B+1]$, assume that the transmitter already knows ${\textbf{l}}_{b-1}$ through feedback pipes. It sends $x^n_{1,b}(m_{b}| {\textbf{l}}_{b-1}).$ 

 To ensure that source node perfectly knows  $\textbf{l}_{b-1}$, we have
\begin{IEEEeqnarray}{rCl}\label{eq:fbrateCF}
\hat{R}_k\leq R_{\textnormal{Fb},k}. 
\end{IEEEeqnarray}

%where $\hat{\textbf{l}}_{0}=(1,\ldots,1)$.
%given ${{w}_{1b}}$, ${{w}_{2b}}$, ${{l}_{1b-1}}$, ${{l}_{2b-1}}$, try to find a pair $({{a}_{1b}},{{a}_{2b}})$ such that $(u_{1b}^{n}({{w}_{1b}},{{l}_{1b-1}},{{a}_{1b}}),u_{2b}^{n}({{w}_{2b}},{{l}_{2b-1}},{{a}_{2b}}))\in T_{\varepsilon }^{n}({{P}_{{{U}_{1}}{{U}_{2}}}})$. If there are more pairs satisfied, randomly choose one and send $X_{b}^{n}={{f}^{n}}(u_{1b}^{n},u_{2b}^{n})$. Let ${{l}_{10}}={{l}_{20}}=1$. Then, receivers get $(y_{1b}^{n},y_{2b}^{n})$ and feed them back to transmitter. After transmitter receives $(y_{1b}^{n},y_{2b}^{n})$, it tries to find a pair $({{l}_{1b}},{{l}_{2b}})$ such that

\subsubsection{Relay and receiver encoding}
Relays and receivers both use compress-forward.  In each block $b\in[1:B]$, node  $k\in[2:N]$  compresses $y_{k,b}^{n}$ by finding a unique index $l_{k,b}$ such that
\begin{IEEEeqnarray}{rCl}\label{eq:Sch1CF}
\big(x^n_{k,b}(l_{k,b\!-\!1}), \hat{y}^n_{k,b}(l_{k,b}|l_{k,b\!-\!1}),y^n_{k,b}\big) \in\mathcal{T}^n_{\epsilon/2}(P_{X_kY_k\hat{Y}_k}).\nonumber
\end{IEEEeqnarray}
Then, it sends $l_{k,b}$ through the feedback pipe at rate 
\begin{IEEEeqnarray}{rCl}
\hat{R}_k\leq R_{\textnormal{Fb},k}, \quad  \text{for $k\in [2:N]$. }
\end{IEEEeqnarray}
and in block $b+1$ sends $x^n_{k,b+1}(l_{k,b})$.

By the covering lemma \cite{Gamal'book}, this is successful with high probability if
\begin{IEEEeqnarray}{rCl}
\hat{R}_k> I(\hat{Y}_k;Y_k|X_k)+\delta(\epsilon/2),\quad  \text{for $k\in[2:N]$. }
\end{IEEEeqnarray}

\subsubsection{Decoding}
Receiver  $d\in\set{D}$ performs  joint backward decoding.  For each block $b\in[B+1,\ldots ,1]$, %assume Receiver  $d\in\set{D}$ knows $\hat{\textbf{l}}_{b}$ from block $b+1$. 
it looks for $(\hat{m}_b, \hat{\textbf{l}}_{b-1})$  such that \footnote{Receiver $d\in\set{D}$ knows $l_{d,b-1}$ since it generated this index.  Since each Receiver $d$ makes its own estimate of $m_{b}$ and $\textbf{l}_{b-1}$,  the precise notation is $(\hat{m}^{(d)}_b,\hat{\textbf{l}}^{(d)}_{b-1})$. For simplicity, we omit the superscript $(d)$.} 
\begin{IEEEeqnarray}{rCl}\label{eq:Sch1Dec}
\big(  &&{x^n_{1,b}(\hat{m}_{b}| \hat{\textbf{l}}_{b-1})},  x^n_{2,b}(\hat{l}_{2,b-1}),\ldots, x^n_{N,b}(\hat{l}_{N,b-1}),y_{d,b}^n,\nonumber\\
&&~ \hat{y}^n_{2,b}(\hat{l}_{2,b}|\hat{l}_{2,b-1}),\ldots, \hat{y}^n_{N,b}(\hat{l}_{N,b}|\hat{l}_{N,b-1})\big)\in\mathcal{T}^n_\epsilon(P_{X_1^N\hat{Y}_2^NY_d}).\nonumber
\end{IEEEeqnarray} 

By the independence of the codebooks, the Markov lemma \cite{Gamal'book}, packing lemma \cite{Gamal'book} and  induction on backward decoding,  this step is successful with high probability if
\begin{IEEEeqnarray}{rCl}\label{eq:rate3CF}
&&R+\sum_{k\in\set{T}}\hat{R}_k\nonumber\\
&&\quad<I(X_1,X(\mathcal{T});\hat{Y}(\mathcal{T}^c),Y_d|X(\mathcal{T}^c))+\sum_{k\in\mathcal{T}}H(\hat{Y}_k|X_k)\nonumber\\
&&\quad~-H(\hat{Y}(\mathcal{T})|X_1^N,\hat{Y}(\mathcal{T}^c),Y_d)-\delta(\epsilon)
%&<& I(X_1,X(\set{T});  \hat{Y}(\set{T}^c),Y_d|X(\set{T}^c))\nonumber\\
%&&~+\sum_{k\in\mathcal{T}}I(\hat{Y}_k;  \hat{Y}(\set{T}_k),X_1^N,Y_d,\hat{Y}(\set{T}^c)|X_k)-\delta(\epsilon).
\end{IEEEeqnarray} 
for all $\set{T}\subset [2:N]$ \footnote{Each receiver knows the compression message it generated, and it doesn't  need to decode all compression indices $(l_{2,b-1}, \ldots,l_{N,b-1} )$,  we therefore have $\set{T}\subset [2:N]$, rather than $\set{T}\subseteq [2:N]$.} with $\mathcal{T}^c\cap\set{D}\neq \emptyset$.

Combining (\ref{eq:fbrateCF}--\ref{eq:rate3CF}) and using Fourier-Motzkin elimination \cite{Gamal'book} to eliminate $\hat{R}_2,\ldots,\hat{R}_N$, we obtain  Theorem \ref{Them:unicast1}.

\subsection{Scheme 1B}\label{sec:scheme1B}
In  Scheme 1A above, the relays and receivers  use only compress-forward. In this subsection  we present a  scheme where the relays perform mixed compress-forward and partial decode-forward. 

Define\begin{IEEEeqnarray*}{rCl}
&&\textbf{l}_{b}:=(l_{2,b},\ldots,l_{N,b})\\
&& \hat{\textbf{l}}_{b}:=(\hat{l}_{2,b},\ldots,\hat{l}_{N,b})
\end{IEEEeqnarray*} for $b\in[1:B+1]$ and let  ${\textbf{l}}_{0}=\textbf{1}_{[N\!-\!1]}$.

Transmission takes place in $B+1$ blocks each consisting of $n$ transmissions, where  a sequence of $B$ i.i.d message $m_b\in[1:2^{nR}]$, $b\in[1:B]$ is sent over $B+1$ blocks.  Split the message $m_b$ into $(m'_b,m''_b)$, where $m'_b$ and $m''_b$ are independently and uniformly distributed over the sets $[1:2^{nR'}]$ and $[1:2^{nR^{''}}]$, respectively, where $R', R''\geq 0$ and so that  
\begin{IEEEeqnarray}{rCl}\label{eq:RateStru1B}
R=R'+R''.
\end{IEEEeqnarray}  Let   $m''_{B+1}=m'_{B+1}=1$. In each block $b\in[1:B+1]$: 
\begin{itemize}
\item  After obtaining all feedback message ${\textbf{l}}_{b-1}$, the transmitter   sends  inputs $x^n_{1,b}(m''_{b}|m'_{b}, m'_{b-1}, {\textbf{l}}_{b-1})$;
\item Each Relay  $r\in\set{R}$   first uses partial decode-forward to decode part of the source message, i.e., $m'_{b}$, and then uses compress-forward  to compress its observed  outputs $y^n_{k,b}$.  Finally, it feeds the compression index $l_{k,b}$ back to the transmitter through the feedback pipe and broadcasts $x_{r,b}^n(m'_{b-1}, l_{r,b-1})$ in  block $b+1$.
\item Each Receiver $d\in\set{D}$ first uses compress-forward  to compress its observed  outputs $y^n_{d,b}$.  Then it feeds the compression index $l_{d,b}$ back to the transmitter through the feedback pipe and broadcasts $x_{d,b}^n( l_{d,b-1})$. Finally, it uses joint backward decoding to decode  source message $({m}'_{b-1},m''_b)$ and  compression messages $\textbf{l}_{b-1}$. 
\end{itemize}

Similar to Scheme 1A, the transmitter's inputs $x^n_{1,b}$ are imposed on $x^n_{2,b},\ldots,x^n_{n,b}$ since it can reconstructs $x_{r,b}^n(m'_{b-1}, l_{r,b-1})$ and $x_{d,b}^n( l_{d,b-1})$,  which attains cooperation between the transmitter and the receivers\&relays.  

The coding is explained with the help of Table \ref{tab:PDF}.

\begin{table*}[ht!]
\begin{center}
\caption{ Coding scheme 1B for  multicast network with partial feedback}
\begin{tabular}{>{\bfseries}lcccccc}
\toprule
Block &1& 2& $\ldots$ & $B$ &$B+1$ \\
\midrule
$X_1$& $x^n_{1,1}(m''_{1}|m'_{1}, 1, {\textbf{1}_{[N-1]}})$ & $x^n_{1,2}(m''_{2}|m'_{2},m'_{1}, {\textbf{l}}_{1})$& $\ldots$   & $x^n_{1,B}(m''_{B}|m'_{B}, m'_{B-1}, {\textbf{l}}_{B-1})$ & $x^n_{1,B+1}(1|1, m'_{B}, {\textbf{l}}_{B})$\\
$X_r$ &$x^n_{r,1}(1,1)$ &$x^n_{r,2}({m}_1,l_{r,1})$ & $\ldots$   &$x^n_{r,B}({m}'_{B-1},l_{r,B-1})$  & $x^n_{r,B+1}({m}'_{B},l_{r,B})$\\
$U_r$ &$u^n_{r,1}(m'_1|1,1)$ &$u^n_{r,2}(m'_2|{m}_1,l_{r,1})$ & $\ldots$   &$u^n_{r,B}(m'_B|{m}'_{B-1},l_{r,B-1})$  & $u_{r,B+1}(1|{m}'_{B},l_{r,B})$\\
$\hat{Y}_r$ &$\hat{y}^n_{r,1}(l_{r,1}|1,1)$ &$\hat{y}^n_{r,2}(l_{r,2}|{m}_1,l_{r,1})$ & $\ldots$   &$\hat{y}^n_{r,B}(l_{r,B}|{m}'_{B-1},l_{r,B-1})$  & $\hat{y}^n_{r,B+1}(1|\hat{m}'_{B},l_{r,B})$\\

$X_d$ &$x^n_{d,1}(1)$ &$x^n_{d,2}(l_{d,1})$ & $\ldots$   &$x^n_{d,B}(l_{d,B-1})$  & $x^n_{d,B+1}(l_{d,B})$\\
%$U_d$ &$u^n_{d,1}(m'_1|1)$ &$u^n_{d,2}(m'_2|l_{d,1})$ & $\ldots$   &$u^n_{d,B}(m'_B|l_{d,B-1})$  & $u^n_{d,B+1}(1|l_{d,B})$\\
$\hat{Y}_d$&$\hat{y}^n_{d,1}(l_{d,1}|1)$ & $\hat{y}^n_{d,2}(l_{d,2}|l_{d,1})$& $\ldots$ &  $\hat{y}^n_{d,B}(l_{d,B}|l_{d,B-1})$ & $\hat{y}^n_{d,B+1}(1|l_{d,B})$ \\
\midrule$Y_d$&$
\hat{m}''_1$& $\leftarrow(\hat{m}''_2,\hat{m}'_1, \hat{\textbf{l}}_{1})$ & $\ldots$ & $\leftarrow(\hat{m}''_B, \hat{m}'_{B-1}, \hat{\textbf{l}}_{B-1})$ & $\leftarrow( \hat{m}'_{B}, \hat{\textbf{l}}_{B})$\\
\bottomrule
 \label{tab:PDF}
\end{tabular}
\end{center}
\end{table*}

\subsubsection{Codebook} 
Fix pmf 
\begin{IEEEeqnarray}{rCl}
&&  \left[\prod_{r\in\set{R}} P_{X_rU_r}\!P_{\hat{Y}_r|U_rX_rY_r}\! \right]\left[\prod_{d\in\set{D}}\!P_{X_d} P_{\hat{Y}_d|X_dY_d}\right]\nonumber\\
&&\quad\quad\times  P_{X_1|X^N_2U(\set{R})}P_{Y_1^N|X_1^N}.
\end{IEEEeqnarray}
 For  each $r\in\mathcal{R}$ and block $b\in[1:B+1]$,  randomly and independently generate $2^{n(R'+\hat{R}_r)}$ sequences $x_{r,b}^n(m'_{b-1}, l_{r,b-1})\sim \prod^n_{i=1}P_{X_r}(x_{r,b,i})$, with $m'_{b-1}\in[1:2^{nR'}]$ and $l_{r,b-1}\in[1:2^{n\hat{R}_r}]$.  For each $(m'_{b-1},{l}_{r,b-1})$, randomly and independently generate $2^{nR'}$ sequences $u_{r,b}^n(m'_{b}|m'_{b-1},{l}_{r,b-1})\sim \prod^n_{i=1}P_{U_r|X_r}(u_{r,b,i}|x_{r,b,i})$. For each $(m'_{b},m'_{b-1},l_{r,b-1})$, randomly and independently generate $2^{n\hat{R}_r}$ sequences $\hat{y}_{r,b}^n(l_{r,b}|m'_{b},m'_{b-1},l_{r,b-1})\sim \prod^n_{i=1}P_{\hat{Y}_r|U_rX_r}(\hat{y}_{r,b,i}|u_{r,b,i},x_{r,b,i})$. 
 
   For each $d\in\set{D}$ and block $b\in[1:B+1]$,  randomly and independently generate $2^{n\hat{R}_d}$ sequences $x_{d,b}^n(l_{d,b-1})\sim \prod^n_{i=1}P_{X_d}(x_{d,b,i})$,  $l_{d,b-1}\in[1:2^{n\hat{R}_d}]$. For each $l_{d,b-1}$, randomly and independently generate $2^{n\hat{R}_d}$ sequences $\hat{y}_{d,b}^n(l_{d,b}|l_{d,b-1})\sim \prod^n_{i=1}P_{\hat{Y}_d|X_d}(\hat{y}_{d,b,i}|x_{d,b,i})$. 
%\end{itemize}

For each $(m'_{b},m'_{b-1}, \textbf{l}_{b-1})$, randomly and independently generate $2^{n{R''}}$ sequences $x_{1,b}^n(m''_{b}|m'_{b}, m'_{b-1}, \textbf{l}_{b-1})\sim \prod^n_{i=1}P_{X_1|U(\set{R})X^N_2}(x_{1,b,i}|\{u_{r,b,i}\}_{r\in\set{R}},x_{2,b,i},\ldots,x_{N,b,i})$. 

%  \item For each $d\in\set{D}$ and block $b\in[1:B+1]$,  randomly and independently generate $2^{n\hat{R}_d}$ sequences $x_{d,b}^n(l_{d,b-1})\sim \prod^n_{i=1}P_{X_d}(x_{d,b,i})$,  $l_{d,b-1}\in[1:2^{n\hat{R}_d}]$.  For each $l_{d,b-1}$, randomly and independently generate $2^{n\hat{R}_d}$ sequences $\hat{y}_{d,b}^n(l_{d,b}|l_{d,b-1})\sim \prod^n_{i=1}P_{\hat{Y}_d|X_d}(\hat{y}_{d,b,i}|x_{d,b,i})$. 
%\end{itemize}

%For each $(m'_{b},m'_{b-1}, \textbf{l}_{b-1})$, randomly and independently generate $2^{n{R''}}$ sequences $x_{1,b}^n(m''_{b}|m'_{b}, m'_{b-1}, \textbf{l}_{b-1})\sim \prod^n_{i=1}P_{X_1|U_2(\set{R})X^N_2}(x_{1,b,i}|x_{2,b,i},\ldots,x_{N,b,i}, [u_{r,b,i}: r\in\set{R}])$. 

%Fix pmf $p({{u}_{1}}{{u}_{2}})p({{\hat{y}}_{1}}\left| {{y}_{1}}{{u}_{1}}{{u}_{2}}\right.)p({{\hat{y}}_{2}}\left| {{y}_{2}}{{u}_{1}}{{u}_{2}}) \right.$. For each block $b=1,\ldots ,B$, generate ${{2}^{n({{R}_{1}}+{{{\tilde{R}}}_{1}}+R_{1}^{'})}}$ codewords $u_{1b}^{n}({{w}_{1b}},{{l}_{1b-1}},{{a}_{1b}})$, ${{w}_{1b}}=1,...,{{2}^{n{{R}_{1}}}}$, ${{l}_{1b-1}}=1,...,{{2}^{n{{{\tilde{R}}}_{1}}}}$, ${{a}_{1b}}=1,...,{{2}^{nR_{1}^{'}}}$ by choosing the symbols $u_{1b,i}^{{}}$ independently using ${{P}_{{{U}_{1}}}}$. Similarly generate ${{2}^{n({{R}_{1}}+{{{\tilde{R}}}_{1}}+R_{1}^{'})}}$ codewords $u_{2b}^{n}({{w}_{2b}},{{l}_{2b-1}},{{a}_{2b}})$ using ${{P}_{{{U}_{2}}}}$.

\subsubsection{Source  encoding}
 In each block $b\in[1:B+1]$, assume that the transmitter already knows ${\textbf{l}}_{b-1}$ through the feedback pipes. It sends $
x^n_{1,b}(m''_{b}|m'_{b},m'_{b-1}, \textbf{l}_{b-1})$.

 To ensure that the transmitter perfectly knows  $\textbf{l}_{b-1}$, we have
\begin{IEEEeqnarray}{rCl}\label{eq:fbratePDF2}
\hat{R}_k\leq R_{\textnormal{Fb},k},\quad  \text{for $k\in [2:N]$. }
\end{IEEEeqnarray}

\subsubsection{Relay  encoding}
Relay  nodes perform the mixed compress-forward and decode-forward.   For each block $b\in[1:B+1]$, assume that Relay $r\in\set{R}$ already knows  $\hat{m}'_{b-1}$ from block $b-1$. It looks for a unique index $\hat{m}'_{b}$ such that \footnote{Since each Relay $r\!\in\!\set{R}$   makes its own estimate of $m'_{b}$,  the precise notation  $\hat{m}'^{(r)}_b$. For simplicity, we omit the superscript $(r)$.} 
\begin{IEEEeqnarray}{rCl}\label{eq:relayCom}
\big(  x^n_{r,b}({\hat{m}}'_{b\!-\!1},{l}_{r,b\!-\!1}),\! u^n_{r,b}(\hat{m}'_{b}|{\hat{m}}'_{b\!-\!1},{{l}}_{r,b\!-\!1}), \!y^n_{r,b}\big) \in\mathcal{T}^n_{\epsilon/4}(P_{X_rY_rU_r}).\nonumber
\end{IEEEeqnarray}
then it compresses $y_{r,b}^{n}$ by finding a unique index $l_{r,b}$ such that
\begin{IEEEeqnarray}{rCl}\label{eq:relayCpres}
\big(&&u^n_{r,b}(\hat{m}'_{b}|\hat{m}'_{b\!-\!1},{{l}}_{r,b\!-\!1}),  x^n_{r,b}(\hat{m}'_{b\!-\!1},{l}_{r,b\!-\!1}),\nonumber\\
&&\quad \hat{y}^n_{r,b}(l_{r,b}|\hat{m}'_{b},\!\hat{m}'_{b\!-\!1},{l}_{r,b\!-\!1}),\!y^n_{r,b}\big) \in\mathcal{T}^n_{\epsilon/2}(P_{U_rX_rY_r\hat{Y}_r}).\nonumber
\end{IEEEeqnarray}
Then,  it sends $l_{r,b}$ through the feedback pipe at rate $\hat{R}_r\leq R_{\textnormal{Fb},r}$ and in block $b+1$ sends $x^n_{r,b+1}(\hat{m}'_{b},l_{r,b})$.

By the covering and packing lemmas, this is successful with high probability if for $r\in\set{R}$,
\begin{subequations} \label{eq:rate1PDF}
\begin{IEEEeqnarray}{rCl}
R'&<& I(U_r;Y_r|X_r)-\delta(\epsilon/4)\\
\hat{R}_r&>&I(\hat{Y}_r;Y_r|X_r,U_r)+\delta(\epsilon/2).
\end{IEEEeqnarray}
\end{subequations}
\subsubsection{ Receiver  encoding}
Receiver $d\in\set{D}$ compresses $y_{d,b}^{n}$ by finding a unique index $l_{d,b}$ such that
\begin{IEEEeqnarray}{rCl}\label{eq:RxCpress}
\big(x^n_{d,b}(l_{d,b-1}), \hat{y}^n_{d,b}(l_{d,b}|l_{d,b-1}),y^n_{d,b}\big) \in\mathcal{T}^n_{\epsilon/2}(P_{X_dY_d\hat{Y}_d}).\nonumber
\end{IEEEeqnarray}
Then,  it sends $l_{d,b}$ through the feedback pipe at rate \[\hat{R}_d\leq R_{\textnormal{Fb},d}\]  and in block $b+1$ sends 
$
x^n_{d,b+1}(l_{d,b})
$.

By the covering lemma, this is successful with high probability if
\begin{IEEEeqnarray}{rCl}\label{eq:rate2PDF}
\hat{R}_d&>&I(\hat{Y}_d;Y_d|X_d)+\delta(\epsilon/2). 
\end{IEEEeqnarray}
%\item For the receiver $k\in\mathcal{D}$, it performs as the same way as relay node in set $\set{D}$.

\subsubsection{Decoding}
Receiver  $d\in\mathcal{D}$ performs backward decoding. For each block $b\in[B+1,\ldots ,1]$, it  looks for $(\hat{m}''_b,\hat{m}'_{b-1},\hat{\textbf{l}}_{b-1})$ such that \footnote{Receiver $d\in\set{D}$ knows $l_{d,b-1}$ since it generated this index.  Since each Receiver $d$   makes its own estimate of $({m}''_b,m'_{b-1},\textbf{l}_{b-1})$,  the precise notation is $(\hat{m}''^{(d)}_b,\hat{m}'^{(d)}_{b-1},\hat{\textbf{l}}^{(d)}_{b-1})$. For simplicity, we omit the superscript $(d)$.}
\begin{IEEEeqnarray}{rCl}\label{eq:RxCom}
\big( x^n_{1,b}(\hat{m}''_{b}&& |\hat{m}'_{b},\hat{m}'_{b},\hat{\textbf{l}}_{b-1}), \textbf{x}^n_{b}(\set{R}),\textbf{x}^n_{b}(\set{D}), \textbf{u}^n_{b}(\set{R}),\nonumber\\
&&\quad \hat{\textbf{y}}^n_b(\set{R}), \hat{\textbf{y}}^n_b(\set{D}),y^n_{d,b}\big) \in\mathcal{T}^n_\epsilon(P_{X_1^NU(\set{R})\hat{Y}_2^NY_d})\nonumber
\end{IEEEeqnarray}
where 
\begin{IEEEeqnarray*}{rCl}
&&\textbf{x}^n_{b}(\set{R}):=\{{x}^n_{r,b}(\hat{m}'_{b-1},\hat{l}_{r,b-1}): r\in\set{R}\}\\
&&\textbf{x}^n_{b}(\set{D}):=\{ x^n_{d,b}(\hat{l}_{d,b-1}) : d\in\set{D}\}\\
&&\textbf{u}^n_{b}(\set{R}):=\{u^n_{r,b}(\hat{m}'_{b}|\hat{m}'_{b-1},{\hat{l}}_{r,b-1}) : r\in\set{R}\}\\
&&\hat{\textbf{y}}^n_b(\set{R}):= \{\hat{y}^n_{r,b}(\hat{l}_{r,b}|\hat{m}'_{b},\hat{m}'_{b-1}, \hat{l}_{r,b-1}): r\in\set{R}\}\\
&&\hat{\textbf{y}}^n_b(\set{D}):= \{\hat{y}^n_{d,b}(\hat{l}_{d,b}|\hat{l}_{d,b-1}): d\in\set{D}\}.
\end{IEEEeqnarray*}

%$\textbf{x}^n_{b}(\set{R}):=[{x}^n_{r,b}(\hat{m}'_{b-1},\hat{l}_{r,b-1}): r\in\set{R}]$, $\textbf{x}^n_{b}(\set{D}):=[ x^n_{d,b}(\hat{l}_{d,b-1}) : d\in\set{D}]$,  $ \textbf{u}^n_{b}(\set{R}):=[u^n_{r,b}(\hat{m}'_{b}|\hat{m}'_{b-1},{\hat{l}}_{r,b-1}) : r\in\set{R}]$ and $\hat{\textbf{y}}^n_b(\set{R}):= [\hat{y}^n_{r,b}(\hat{l}_{r,b}|\hat{m}'_{b},\hat{m}'_{b-1}, \hat{l}_{r,b-1}): r\in\set{R}]$, $\hat{\textbf{y}}^n_b(\set{D}):= [\hat{y}^n_{d,b}(\hat{l}_{d,b}|\hat{l}_{d,b-1}): d\in\set{D}]$. 
 
By the independence of the codebooks, the Markov lemma, packing lemma and the induction on backward decoding,  the decoding is successful with high probability if
 \begin{IEEEeqnarray}{rCl}\label{eq:sumhat1}
 &&R+\hat{R}(\set{T}\cup\set{R})\nonumber\\
&&\quad< I(X_1,X(\set{T}\cup\set{R}),U(\set{R});\hat{Y}(\set{T}^c\cap\set{D}),Y_d|X(\set{T}^c\cap\set{D}))\nonumber\\
&&\quad\quad+ \sum_{k\in\set{R}} H(\hat{Y}_k|U_k,X_k) + \sum_{j\in\set{D}\cap\set{T}} H(\hat{Y}_j|X_j)  \nonumber\\
&&\quad\quad- H(\hat{Y}(\set{T}\cup\set{R})|X_2^N,U_2^N,\hat{Y}(\set{T}^c\cap\set{D}),Y_d)-\delta(\epsilon)
%&&R+\hat{R}(\set{T})<\nonumber\\
%&&\quad I(X_1,X(\set{T}),U(\set{R});\hat{Y}(\set{T}^c),Y_d|X(\set{T}^c)))\nonumber\\
%&&\quad\quad+ \sum_{k\in\set{R}\cap\set{T}} H(\hat{Y}_k|U_k,X_k) + \sum_{j\in\set{D}\cap\set{T}} H(\hat{Y}_j|X_j)  \nonumber\\
%&&\quad\quad\quad- H(\hat{Y}(\set{T})|X_2^N,U(\set{R}),\hat{Y}(\set{T}^c),Y_d)-\delta(\epsilon)
\end{IEEEeqnarray}
and
  \begin{IEEEeqnarray}{rCl}\label{eq:sumhat2}
R''&+&\hat{R}(\set{T})\nonumber\\
&<&  I(X_1,X(\set{T}),U(\set{T});\hat{Y}(\set{T}^c),Y_d|X(\set{T}^c)))\nonumber\\
&&\quad+ \sum_{k\in\set{R}\cap\set{T}} H(\hat{Y}_k|U_k,X_k) + \sum_{j\in\set{D}\cap\set{T}} H(\hat{Y}_j|X_j)  \nonumber\\
&&\quad- H(\hat{Y}(\set{T})|X_2^N,U_2^N,\hat{Y}(\set{T}^c),Y_d)-\delta(\epsilon)
%&&\quad I(X_1,X(\set{T}),U(\set{T}_\set{R});\hat{Y}(\set{T}^c),Y_d|X(\set{T}^c),U(\set{T}_\set{R}^c))\nonumber\\
%&&\quad\quad+ \sum_{k\in\set{R}\cap\set{T}} H(\hat{Y}_k|U_k,X_k) + \sum_{j\in\set{D}\cap\set{T}} H(\hat{Y}_j|X_j)  \nonumber\\
%&&\quad\quad\quad- H(\hat{Y}(\set{T})|X_2^N,U(\set{R}),\hat{Y}(\set{T}^c),Y_d)-\delta(\epsilon)
\end{IEEEeqnarray}
for all $\set{T}\subset [2:N]$ with $\mathcal{T}^c\cap\set{D}\neq \emptyset$, where $\mathcal{T}^c$ is the complement of $T$ in $[2:N]$, and $U_d=\emptyset$, for $d\in\set{D}$.
%for all $\set{T}\subset [2:N]$ with $\mathcal{T}^c\cap\set{D}\neq \emptyset$. %, where $\set{T}_\set{R}=\set{T}\cap\set{R}$ and $\set{T}^c_\set{R}$ is the complement of $\set{T}_\set{R}$ in $\set{R}$.%$\set{T}_\set{R}\cup\set{T}^c_\set{R}=\set{R}$. 

Combining (\ref{eq:RateStru1B}--\ref{eq:sumhat2}), and  using Fourier-Motzkin elimination to eliminate $R',R'', \hat{R}_2,\ldots,\hat{R}_N$, we obtain Theorem \ref{Them:unicast2}. 

\subsection{Scheme 1C}\label{sec:scheme1C}
{In Scheme 1B, all relay nodes decode the same part of the source message, which may lead to low achievable rates if some relay's observed outputs are very bad. In this subsection, we describe another scheme (Scheme 1C), which  allows different relays to decode different parts of the source message. This new scheme  can  achieve higher rates than Scheme 1A and 1B.}

Transmission takes place in $B+1$ blocks each consisting of $n$ transmissions, where  a sequence of $B$ i.i.d message $m_b\in[1:2^{nR}]$, $b\in[1:B]$ is sent over $B+1$ blocks.  Split the message $m_b$ into \[\big(m'_{0,b},\{m'_{r,b}: r\in\set{R}\}, m''_b\big).\] where  
  messages $m'_{0,b}$, $m'_{r,b}$ and $m''_b$ are independently and uniformly distributed over the sets $[1:2^{nR'_0}]$, $[1:2^{nR'_r}]$ and $[1:2^{nR''}]$, respectively, where $R'_0, R'_r, R''\geq 0$ and so that  
    \begin{IEEEeqnarray}{rCl}\label{eq:RateStru1C}
  R=R'_0+\sum_{r\in\set{R}}R'_r+R''.
  \end{IEEEeqnarray}

\begin{table*}[ht!]
\begin{center}
\caption{ Coding scheme 1C for  multicast network with  feedback}
\begin{tabular}{>{\bfseries}lcccccc}
\toprule
Block &1& $\ldots$ & $B$ &$B+1$ \\
\midrule
$X_1$& $x^n_{1,1}({m}''_{1}|\textbf{m}'_{1}, \textbf{1}_{[|\set{R}|+1]}, {\textbf{1}}_{[N-1]})$ & $\ldots$   & $x^n_{1,B}(m''_{B}|\textbf{m}'_{B}, \textbf{m}'_{B-1}, {\textbf{l}}_{B-1})$ & $x^n_{1,B+1}(1|\textbf{1}_{[|\set{R}|+1]}, \textbf{m}'_{B}, {\textbf{l}}_{B})$\\
$V_0$ & $v_{0,1}^n(1)$ & $\ldots$& $v_{0,B}^n(m'_{0,B-1})$ &  $v_{0,B}^n(m'_{0,B})$\\
$U_0$ & $u_{0,1}^n(m'_{0,1}|1)$ & $\ldots$ & $u_{0,B}^n(m'_{0,B}|m'_{0,B-1})$ & $u_{0,B+1}^n(m'_{0,B}|m'_{0,B-1})$\\
$X_r$ &$x^n_{r,1}(1,1|1)$ & $\ldots$   &$x^n_{r,B}({m}'_{r,B-1},l_{r,B-1}|{m}'_{0,B-1})$  & $x^n_{r,B+1}({m}'_{r,B},l_{r,B}|{m}'_{0,B})$\\
$U_r$ &$u^n_{r,1}(m'_{r,1}|m'_{0,1},1,1,1)$  & $\ldots$   &$u^n_{r,B}(m'_{r,B}|{m}'_{0,B},m'_{0,B\!-\!1},m'_{r,B\!-\!1},{l}_{r,B\!-\!1})$  & $u^n_{r,B+1}(1|1,m'_{0,B},m'_{r,B},{l}_{r,B})$\\

$\hat{Y}_r$ &$\hat{y}_{r,1}^n(l_{r,1}|{m}'_{0,1},m'_{r,1},1,1,1)$ & $\ldots$   &$\hat{y}_{r,B}^n(l_{r,B}|{m}'_{0,B},m'_{r,B},m'_{0,B\!-\!1},m'_{r,B\!-\!1},l_{r,B\!-\!1})$  & $\hat{y}_{r,B+1}^n(1|1,m'_{r,B},m'_{0,B},m'_{r,B},l_{r,B})$\\

$X_d$ &$x^n_{d,1}(1)$  & $\ldots$   &$x^n_{d,B}(l_{d,B-1})$  & $x^n_{d,B+1}(l_{d,B})$\\
%$U_d$ &$u_{d,1}(m'_1|1)$ &$u_{d,2}(m'_2|l_{d,1})$ & $\ldots$   &$u_{d,B}(m'_B|l_{d,B-1})$  & $u_{d,B+1}(1|l_{d,B})$\\
$\hat{Y}_d$&$\hat{y}^n_{d,1}(l_{d,1}|1)$ &$\ldots$ &  $\hat{y}^n_{d,B}(l_{d,B}|l_{d,B-1})$ & $\hat{y}^n_{d,B+1}(1|l_{d,B})$ \\
\midrule$Y_d$&$
\hat{m}''_1$ & $\ldots$ & $\leftarrow(\hat{m}''_B, \hat{\textbf{m}}'_{B-1}, \hat{\textbf{l}}_{B-1})$ & $\leftarrow( \hat{\textbf{m}}'_{B}, \hat{\textbf{l}}_{B})$\\
\bottomrule
 \label{tab:PDFa}
\end{tabular}
\end{center}
\end{table*}

Define
\begin{IEEEeqnarray*}{rCl}
&&\textbf{m}'_{b}:=\big(m'_{0,b},\{m'_{r,b}: r\in\set{R}\}\big)\\
&&\textbf{l}_{b}:=(l_{2,b},\ldots,l_{N,b})\\
&& \hat{\textbf{l}}_{b}:=(\hat{l}_{2,b},\ldots,\hat{l}_{N,b})\\
&&\hat{\textbf{m}}'_b=\big(\hat{m}'_{0,b},\{\hat{m}'_{r,b}: r\in\set{R}\}\big)\\
\end{IEEEeqnarray*}
for $b\in[1:B+1]$. Let   $m''_{B+1}=m'_{r,B+1}=m'_{r,0}=m'_{0,0}=1$ and ${\textbf{l}}_{0}=\textbf{1}_{[N\!-\!1]}$. In each block $b\in[1:B+1]$:
\begin{itemize}
\item After obtaining all feedback messages $\textbf{l}_{b-1}$, the transmitter sends $x^n_{1,b}(m''_{b}|\textbf{m}'_{b}, \textbf{m}'_{b-1}, {\textbf{l}}_{b-1})$. %broadcasts the feedback messages $\textbf{l}_{b-1}$ and source message $m_{b}$.
\item Each Relay  $r\in\set{R}$  decodes $(m'_{0,b},m'_{r,b})$ and generates the compression message $l_{r,b}$ by compressing its channel outputs $y^n_{r,b}$. Then, it forwards $l_{r,b}$ to the transmitter over the feedback pipe and sends the channel inputs $x^n_{r,b+1}(m'_{r,b},l_{r,b}|m'_{0,b})$ in block $b+1$. 
\item Each Receiver $d\in\set{D}$ first compresses its channel outputs $y^n_{d,b}$ and then forwards the compression message $l_{d,b}$ through feedback pipe and sends the channel inputs $x^n_{d,b+1}(l_{r,b})$ in block $b+1$. Finally, it uses joint backward decoding to decode  source message $(\textbf{m}'_{b-1},m''_b)$ and  compression messages $\textbf{l}_{b-1}$.
\end{itemize}
Similar to Scheme 1A, the transmitter's inputs $x^n_{1,b}$ are superposed on $(x^n_{2,b},\ldots,x^n_{n,b})$ since it can reconstruct $x_{k,b}^n$, for all  $k\in[2:N]$,  which attains cooperation between the transmitter and the receivers\&relays.  

 The coding is explained with the help of Table \ref{tab:PDFa}.

 % after obtaining all feedback messages $\textbf{l}_{b-1}$, the transmitter broadcasts the feedback messages $\textbf{l}_{b-1}$ and source message $m_{b}$. Each Relay  $r\in\set{R}$  decodes $(m'_{0,b},m'_{r,b})$ and generates the compression message $l_{r,b}$ by compressing its channel outputs $Y^n_{r,b}$. Then, it forwards $l_{r,b}$ to the transmitter over the feedback pipe and sends the channel inputs $x^n_{r,b+1}$ consisting of messages $(m'_{0,b},m'_{r,b},l_{r,b})$. Each Receiver $d\in\set{D}$ first compresses its channel outputs $Y^n_{d,b}$ and then forwards the compression message $l_{d,b}$ through feedback pipe and sends the channel inputs $x^n_{d,b+1}(l_{r,b})$. Finally, it uses joint backward decoding to decode  source message $(\textbf{m}'_{b-1},m''_b)$ and  compression messages $\textbf{l}_{b-1}$.  The coding is explained with the help of Table \ref{tab:PDFa}.

\subsubsection{Codebook} 
Fix pmfs  
\begin{IEEEeqnarray}{rCl}
&&P_{V_0}P_{U_0|V_0}\left[\prod_{r\in\set{R}} P_{X_r|V_0}P_{U_r|V_0U_0X_r}\!P_{\hat{Y}_r|V_0U_0U_rX_rY_r}\! \right] \nonumber\\
&&\qquad\times\left[\prod_{d\in\set{D}}\! P_{X_d}P_{\hat{Y}_d|X_dY_d}\right]P_{X_1|X^N_2U(\set{R})}.
\end{IEEEeqnarray}
 For  each  block $b\in[1:B+1]$, randomly and independently generate $2^{nR'_0}$ sequences $v_{0,b}^n(m'_{0,b-1})\sim \prod^n_{i=1}P_{V_0}(v_{b,i})$, with $m'_{0,b-1}\in[1:2^{nR'_0}]$. For each $m'_{0,b-1}$, randomly and independently generate $2^{nR'_0}$ sequences $u_{0,b}^n(m'_{0,b}|m'_{0,b-1})\sim \prod^n_{i=1}P_{U_0|V_0}(u_{0,b,i}|v_{0,b,i})$.
  
 For  each $r\in\mathcal{R}$ and each $m'_{0,b-1}$,   randomly and independently generate $2^{n(R_r'+\hat{R}_r)}$ sequences \\$x_{r,b}^n(m'_{r,b-1}, l_{r,b-1}|m'_{0,b-1})\sim \prod^n_{i=1}P_{X_r|V_0}(x_{r,b,i}|v_{0,b,i})$, with $m'_{r,b-1}\in[1:2^{nR'_r}]$ and $l_{r,b-1}\in[1:2^{n\hat{R}_r}]$.  For each $(m'_{r,b-1},{m}'_{0,b}, m'_{0,b-1}, l_{r,b-1}),$~ randomly and independently generate $2^{nR'_r}$ sequences $u_{r,b}^n(m'_{r,b}|{m}'_{0,b}$, $m'_{0,b-1},m'_{r,b-1},{l}_{r,b-1})\sim \prod^n_{i=1}P_{U_r|V_0U_0X_r}(u_{r,b,i}|v_{0,b,i},u_{0,b,i},x_{r,b,i})$. For each $({m}'_{0,b},m'_{0,b-1},m'_{r,b},m'_{r,b-1},l_{r,b-1})$, randomly and independently generate $2^{n\hat{R}_r}$ sequences $\hat{y}_{r,b}^n(l_{r,b}|{m}'_{0,b},m'_{r,b},m'_{0,b-1},m'_{r,b-1},l_{r,b-1})\sim$~\\$\prod^n_{i=1}P_{\hat{Y}_r|V_0U_0U_rX_r}(\hat{y}_{r,b,i}|v_{0,b,i},u_{0,b,i},u_{r,b,i},x_{r,b,i})$. 
 
   For each $d\in\set{D}$ and block $b\in[1:B+1]$,  randomly and independently generate $2^{n\hat{R}_d}$ sequences $x_{d,b}^n(l_{d,b-1})\sim \prod^n_{i=1}P_{X_d}(x_{d,b,i})$,  $l_{d,b-1}\in[1:2^{n\hat{R}_d}]$. For each $l_{d,b-1}$, randomly and independently generate $2^{n\hat{R}_d}$ sequences $\hat{y}_{d,b}^n(l_{d,b}|l_{d,b-1})\sim \prod^n_{i=1}P_{\hat{Y}_d|X_d}(\hat{y}_{d,b,i}|x_{d,b,i})$. 
%\end{itemize}

For each $(\textbf{m}'_{b},\textbf{m}'_{b-1}, \textbf{l}_{b-1})$, randomly and independently generate $2^{n{R''}}$ sequences $x_{1,b}^n(m''_{b}|\textbf{m}'_{b}, \textbf{m}'_{b-1}, \textbf{l}_{b-1})\sim \prod^n_{i=1}P_{X_1|V_0U_0U(\set{R})X^N_2}(x_{1,b,i}|v_{0,b,i},u_{0,b,i},x_{2,b,i},\ldots,x_{N,b,i}$, $\{u_{r,b,i}: r\in\set{R}\})$. 

%  \item For each $d\in\set{D}$ and block $b\in[1:B+1]$,  randomly and independently generate $2^{n\hat{R}_d}$ sequences $x_{d,b}^n(l_{d,b-1})\sim \prod^n_{i=1}P_{X_d}(x_{d,b,i})$,  $l_{d,b-1}\in[1:2^{n\hat{R}_d}]$.  For each $l_{d,b-1}$, randomly and independently generate $2^{n\hat{R}_d}$ sequences $\hat{y}_{d,b}^n(l_{d,b}|l_{d,b-1})\sim \prod^n_{i=1}P_{\hat{Y}_d|X_d}(\hat{y}_{d,b,i}|x_{d,b,i})$. 
%\end{itemize}

%For each $(m'_{b},m'_{b-1}, \textbf{l}_{b-1})$, randomly and independently generate $2^{n{R''}}$ sequences $x_{1,b}^n(m''_{b}|m'_{b}, m'_{b-1}, \textbf{l}_{b-1})\sim \prod^n_{i=1}P_{X_1|U_2(\set{R})X^N_2}(x_{1,b,i}|x_{2,b,i},\ldots,x_{N,b,i}, [u_{r,b,i}: r\in\set{R}])$. 

%The coding is explained with the help of Table \ref{tab:PDF}.

%Fix pmf $p({{u}_{1}}{{u}_{2}})p({{\hat{y}}_{1}}\left| {{y}_{1}}{{u}_{1}}{{u}_{2}}\right.)p({{\hat{y}}_{2}}\left| {{y}_{2}}{{u}_{1}}{{u}_{2}}) \right.$. For each block $b=1,\ldots ,B$, generate ${{2}^{n({{R}_{1}}+{{{\tilde{R}}}_{1}}+R_{1}^{'})}}$ codewords $u_{1b}^{n}({{w}_{1b}},{{l}_{1b-1}},{{a}_{1b}})$, ${{w}_{1b}}=1,...,{{2}^{n{{R}_{1}}}}$, ${{l}_{1b-1}}=1,...,{{2}^{n{{{\tilde{R}}}_{1}}}}$, ${{a}_{1b}}=1,...,{{2}^{nR_{1}^{'}}}$ by choosing the symbols $u_{1b,i}^{{}}$ independently using ${{P}_{{{U}_{1}}}}$. Similarly generate ${{2}^{n({{R}_{1}}+{{{\tilde{R}}}_{1}}+R_{1}^{'})}}$ codewords $u_{2b}^{n}({{w}_{2b}},{{l}_{2b-1}},{{a}_{2b}})$ using ${{P}_{{{U}_{2}}}}$.

\subsubsection{Source  encoding}
 In each block $b\in[1:B+1]$, assume that the transmitter already knows ${\textbf{l}}_{b-1}$ through the feedback pipes. It sends $
x^n_{1,b}(m''_{b}|\textbf{m}'_{b},\textbf{m}'_{b-1}, \textbf{l}_{b-1})$.

 To ensure that the transmitter perfectly knows  $\textbf{l}_{b-1}$, we have
\begin{IEEEeqnarray}{rCl}\label{eq:fbratePDF}
\hat{R}_k\leq R_{\textnormal{Fb},k},\quad  \text{for $k\in [2:N]$. }
\end{IEEEeqnarray}

\subsubsection{Relay  encoding}
Relay  nodes perform hybrid compress-forward and decode-forward.   For each block $b\in[1:B+1]$, assume that Relay $r\in\set{R}$ already knows  $(\hat{m}'_{0,b-1},\hat{m}'_{r,b-1})$ from block $b-1$. It looks for a unique index $\hat{m}'_{0,b}$ such that  
\begin{IEEEeqnarray}{rCl}\label{eq:relayCom}
&&\big(v^n_{0,b}(\hat{m}'_{0,b-1}),u^n_{0,b}(\hat{m}'_{0,b}|\hat{m}'_{0,b-1}),  y^n_{r,b},\nonumber\\
&&\quad  x^n_{r,b}(\hat{{m}}'_{r,b-1},{l}_{r,b-1}|\hat{m}'_{0,b-1}) \big) \in\mathcal{T}^n_{\epsilon/8}(P_{V_0U_0X_rY_r}).\nonumber
\end{IEEEeqnarray}
Then  it looks for $\hat{m}'_{r,b}$ such that 
\begin{IEEEeqnarray}{rCl}\label{eq:relayCom}
&&\big(v^n_{0,b}(\hat{m}'_{0,b-1}),u^n_{0,b}(\hat{m}'_{0,b}|\hat{m}'_{0,b-1}),\nonumber\\
&&\quad u^n_{r,b}(\hat{m}'_{r,b}|\hat{m}'_{0,b},\hat{m}'_{0,b-1},{{m}}'_{r,b-1},{{l}}_{r,b-1})\nonumber\\
&&\quad\quad  x^n_{r,b}(\hat{{m}}'_{r,b-1},{l}_{r,b-1}|\hat{m}'_{0,b-1}), y^n_{r,b} \big) \in\mathcal{T}^n_{\epsilon/4}(P_{V_0U_0X_rY_rU_r}).\nonumber
\end{IEEEeqnarray}
After decoding $(\hat{m}'_{0,b},\hat{m}'_{r,b})$, it compresses $y_{r,b}^{n}$ by finding a unique index $l_{r,b}$ such that
\begin{IEEEeqnarray}{rCl}\label{eq:relayCpres}
&&\big(v^n_{0,b}(\hat{m}'_{0,b-1}),u^n_{0,b}(\hat{m}'_{0,b}|\hat{m}'_{0,b-1}),\nonumber\\
&&\quad u^n_{r,b}(\hat{m}'_{r,b}|\hat{m}'_{0,b},\hat{m}'_{0,b-1},{{m}}'_{r,b-1},{{l}}_{r,b-1})\nonumber\\
&&\quad\quad \hat{y}^n_{r,b}(l_{r,b}|\hat{m}'_{0,b},\hat{m}'_{0,b-1},\hat{m}'_{r,b},\!\hat{m}'_{r,b\!-\!1},{l}_{r,b\!-\!1}),y^n_{r,b},\nonumber\\
&& \qquad\quad x^n_{r,b}(\hat{{m}}'_{r,b-1},{l}_{r,b-1}|\hat{m}'_{0,b-1}) \big) \in\mathcal{T}^n_{\epsilon/2}(P_{V_0U_0U_rX_rY_r\hat{Y}_r}).\nonumber
\end{IEEEeqnarray}
Finally,  it sends $l_{r,b}$ through the feedback pipe at rate 
\begin{IEEEeqnarray}{rCl}\label{eq:MNFb3}
\hat{R}_r\leq R_{\textnormal{Fb},r}
\end{IEEEeqnarray}
 and in block $b+1$ it sends $x^n_{r,b+1}(\hat{m}'_{r,b},l_{r,b}|\hat{m}'_{0,b})$.

By the covering and packing lemmas, these are successful with high probability if for $r\in\set{R}$,
\begin{subequations}\label{eq:rate1PDF}
\begin{IEEEeqnarray}{rCl} 
R_0'&<& I(U_0;Y_r|V_0,X_r)-\delta(\epsilon/8)\\
R_r'&<& I(U_r;Y_r|X_r,U_0,V_0)-\delta(\epsilon/4)\\
%R_0'+R_r'&<& I(U_0,U_r;Y_r|V,X_r)-\delta(\epsilon/4)\nonumber\\
\hat{R}_r&>&I(\hat{Y}_r;Y_r|U_0,X_r,U_r,V_0)+\delta(\epsilon/2)
\end{IEEEeqnarray}
\end{subequations}

\subsubsection{ Receiver  encoding}
Receiver $d\in\set{D}$ compresses $y_{d,b}^{n}$ by finding a unique index $l_{d,b}$ such that
\begin{IEEEeqnarray}{rCl}\label{eq:RxCpress}
\big(x^n_{d,b}(l_{d,b-1}), \hat{y}^n_{d,b}(l_{d,b}|l_{d,b-1}),y^n_{d,b}\big) \in\mathcal{T}^n_{\epsilon/2}(P_{X_dY_d\hat{Y}_d}).\nonumber
\end{IEEEeqnarray}
Then,  it sends $l_{d,b}$ through the feedback pipe at rate 
\begin{IEEEeqnarray}{rCl}
\hat{R}_d\leq R_{\textnormal{Fb},d}
\end{IEEEeqnarray}
  and in block $b+1$ sends 
$
x^n_{d,b+1}(l_{d,b})
$.

By the covering lemma, this is successful with high probability if
\begin{IEEEeqnarray}{rCl}\label{eq:rate2PDF}
\hat{R}_d&>&I(\hat{Y}_d;Y_d|X_d)+\delta(\epsilon/2),\quad \text{for $d\in\set{D}$}.~
\end{IEEEeqnarray}
%\item For the receiver $k\in\mathcal{D}$, it performs as the same way as relay node in set $\set{D}$.

\subsubsection{Decoding}
Receiver  $d\in\mathcal{D}$ performs backward decoding. For each block $b\in[B+1,\ldots ,1]$, %assume Receiver  $d$  already knows $(\hat{\textbf{m}}'_{b},\hat{\textbf{l}}_{b})$ from block $b+1$. 
it  looks for $(\hat{m}''_b,\hat{\textbf{m}}'_{b-1},\hat{\textbf{l}}_{b-1})$ such that \footnote{Receiver $d\in\set{D}$ knows $l_{d,b-1}$ since it generated this index.  Since each Receiver $d$   makes its own estimate of $({m}''_b,\textbf{m}'_{b-1},\textbf{l}_{b-1})$,  the precise notation is $(\hat{m}''^{(d)}_b,\hat{\textbf{m}}'^{(d)}_{b-1},\hat{\textbf{l}}^{(d)}_{b-1})$. For simplicity, we omit the superscript $(d)$.}
\begin{IEEEeqnarray}{rCl}\label{eq:RxCom}
\big( x^n_{1,b}&&(\hat{m}''_{b} |\hat{\textbf{m}}'_{b},\hat{\textbf{m}}'_{b-1},\hat{\textbf{l}}_{b-1}), \textbf{x}^n_{b}(\set{R}),\textbf{x}^n_{b}(\set{D}),\nonumber\\
&& v^n_{0,b}(\hat{m}'_{0,b-1}),u^n_{0,b}(\hat{m}'_{0,b}|\hat{m}'_{0,b-1}),\textbf{u}^n_{b}(\set{R}),\nonumber\\
&& \qquad \hat{\textbf{y}}^n_b(\set{R}), \hat{\textbf{y}}^n_b(\set{D}),y^n_{d,b}\big) \in\mathcal{T}^n_\epsilon(P_{V_0U_0X_1^NU(\set{R})\hat{Y}_2^NY_d})\nonumber
\end{IEEEeqnarray}
where 
\begin{IEEEeqnarray*}{rCl}
&& \textbf{x}^n_{b}(\set{R}):=\{{x}^n_{r,b}(\hat{m}'_{r,b-1},\hat{l}_{r,b-1}|\hat{m}'_{0,b-1}): r\in\set{R}\}\\
&&\textbf{x}^n_{b}(\set{D}):=\{ x^n_{d,b}(\hat{l}_{d,b-1}) : d\in\set{D}\} \\
&&\textbf{u}^n_{b}(\set{R}):=\{u^n_{r,b}(\hat{m}'_{r,b}|\hat{m}'_{0,b},\hat{m}'_{0,b-1},\hat{m}'_{r,b-1},{\hat{l}}_{r,b-1}) : r\in\set{R}\}\\
&& \hat{\textbf{y}}^n_b(\set{R}):= \{\hat{y}^n_{r,b}(\hat{l}_{r,b}|\hat{m}'_{0,b},\hat{m}'_{r,b},\hat{m}'_{0,b\!-\!1},\hat{m}'_{r,b\!-\!1}, \hat{l}_{r,b\!-\!1}): r\in\set{R}\}\\
&& \hat{\textbf{y}}^n_b(\set{D}):= \{\hat{y}^n_{d,b}(\hat{l}_{d,b}|\hat{l}_{d,b-1}): d\in\set{D}\}.
\end{IEEEeqnarray*}
%$\textbf{x}^n_{b}(\set{R}):=[{x}^n_{r,b}(\hat{m}'_{r,b-1},\hat{l}_{r,b-1}): r\in\set{R}]$, $\textbf{x}^n_{b}(\set{D}):=[ x^n_{d,b}(\hat{l}_{d,b-1}) : d\in\set{D}]$,  $ \textbf{u}^n_{b}(\set{R}):=[u^n_{r,b}(\hat{m}'_{r,b}|\hat{m}'_{r,b-1},{\hat{l}}_{r,b-1}) : r\in\set{R}]$,  and $\hat{\textbf{y}}^n_b(\set{R}):= [\hat{y}^n_{r,b}(\hat{l}_{r,b}|\hat{m}'_{r,b},\hat{m}'_{r,b-1}, \hat{l}_{r,b-1}): r\in\set{R}]$, $\hat{\textbf{y}}^n_b(\set{D}):= [\hat{y}^n_{d,b}(\hat{l}_{d,b}|\hat{l}_{d,b-1}): d\in\set{D}]$. 
 
By the independence of the codebooks, the Markov lemma, packing lemma and the induction on backward decoding,  the decoding is successful with high probability if
 \begin{IEEEeqnarray}{rCl}\label{eq:sumhatTh2.1}
 &&R''+\sum_{k\in\set{T}}\hat{R}_k+\sum_{k\in\set{T}\cap\set{R}}R'_k\nonumber\\
&&\quad< I(X_1,X(\set{T}),U(\set{T});\hat{Y}(\set{T}^c),Y_d|V_0,U_0,X(\set{T}^c),U({\set{T}^c}))\nonumber\\
&&\quad\quad+ \sum_{k\in\set{R}\cap\set{T}} H(\hat{Y}_k|V_0,U_0,U_k,X_k) + \sum_{j\in\set{D}\cap\set{T}} H(\hat{Y}_j|X_j)  \nonumber\\
&&\quad\quad- H(\hat{Y}(\set{T})|V_0,U_0,X_1^N,U_2^N,\hat{Y}(\set{T}^c),Y_d)-\delta(\epsilon)
\end{IEEEeqnarray}
and
 \begin{IEEEeqnarray}{rCl}\label{eq:sumhatTh2.2}
 &&R+\sum_{k\in\set{T}\cup\set{R}}\hat{R}_k\nonumber\\
&&\quad< I(V,U_0,X_1,X(\set{T}\cup\set{R}),U(\set{R});\nonumber\\
&&\hspace{4cm}\hat{Y}(\set{T}^c\cap\set{D}),Y_d|X(\set{T}^c\cap\set{D}))\nonumber\\
&&\quad\quad+ \sum_{k\in\set{R}} H(\hat{Y}_k|V_0,U_0,U_k,X_k) + \sum_{j\in\set{D}\cap\set{T}} H(\hat{Y}_j|X_j)  \nonumber\\
&&\quad\quad- H(\hat{Y}(\set{T}\cup\set{R})|V_0,U_0,X_1^N,U_2^N,\hat{Y}(\set{T}^c\cap\set{D}),Y_d)-\delta(\epsilon)\nonumber\\
\end{IEEEeqnarray}
for all $\set{T}\subset [2:N]$ with $\mathcal{T}^c\cap\set{D}\neq \emptyset$, where $\mathcal{T}^c$ is the complement of $T$ in $[2:N]$, and $U_d=\emptyset$, for $d\in\set{D}$. %, where $\set{T}_\set{R}=\set{T}\cap\set{R}$ and $\set{T}^c_\set{R}$ is the complement of $\set{T}_\set{R}$ in $\set{R}$.%$\set{T}_\set{R}\cup\set{T}^c_\set{R}=\set{R}$. 

Combining (\ref{eq:RateStru1C}--\ref{eq:sumhatTh2.2}), and  using Fourier-Motzkin elimination to eliminate $\hat{R}_2,\ldots,\hat{R}_N, R''$ and $R'_r$, for $r\in\set{R}$, we obtain Theorem \ref{Them:unicast2.1}.

\section{Discrete Memoryless Multicast Network}\label{Sec:extension}
In Section \ref{sec:mRelays} we proposed   block-Markov coding schemes for  DM-MN with instantaneous,  rate-limited  and noisy-free  feedback. Recall the NNC scheme \cite{Yassaee'08, Lim'11, Hou'13} for  DM-MN without feedback, where each node (including the transmitter)  compresses its observed signals  and sends the corresponding compression message in the next block.  Comparing our  coding schemes with NNC, we observe that both schemes involve block-Markov coding, compressing channel outputs and sending compression messages. However, our scheme allows hybrid relaying strategies at relay nodes, and  in each block,  instead of creating a new compression index, the transmitter forwards all compression indices generated by the receivers and relays. In our scheme, the transmitter and the relays\&receivers cooperate with each other through feedback, and different nodes  operate differently according to the features of the network, which leads to a larger achievable rate than NNC, as shown by examples in Section \ref{sec:Eg}.  

%One must be cautious that when in absence of feedback, the transmitter's input cannot be superposed on  the receiver's and relays inputs any more, because it has to wait one block to observe the channel outputs, and by which decodes the compression messages, which means that in each block, the transmitter only knows the relays' and receivers' inputs of previous block. To ensure the cooperation between the transmitter and the receiver\&relays, we introduce some auxiliary random variables  representing the dependence between the inputs of the transmitter and  receivers/relays, which  attains indirect cooperation transmitter and  receivers\&relays.

Motivated by our feedback coding scheme, we propose another  scheme  for  DM-MN \emph{without} feedback. The key idea is that in each block, instead of obtaining the compression messages  through feedback pipes, the transmitter decodes them based  on its observed channel outputs. One must be cautious that when in absence of feedback, the transmitter's inputs $x^n_{1,b}$ cannot be superposed on  the receivers' and relays' inputs $(x^n_{2,b},\ldots,x^n_{N,b})$ like the feedback case. This is because, in each block $b$, Node $k$  creates compression message $l_{k,b}$ by compressing compressing $y^n_{k,b}$, and  sends $x_{k,b+1}(\cdot,l_{k,b})$ in block $b+1$. The transmitter has to wait to observe the channel outputs $y^n_{1,b+1}$ and then decodes $l_{k,b}$,  which means that at the beginning of each block $b$, the transmitter can only reconstruct $x^n_{k,b-1}$ before the transmission. 

To ensure the cooperation between the transmitter and the receivers\&relays, we made the following modification:  Transmission takes place in $B+2$ blocks each consisting of $n$ transmissions. In each block $b$, each Node $k\in [2:N]$   creates a compression index $l_{k,b-1}$ and sends   $(l_{k,b-1},l_{k,b-2})$. The transmitter, after observing $y^n_{1,b}$,  first decodes compression indices $\textbf{l}_{b-1}$, which is in essence a coding problem on a multiple access channel $P_{Y_1|X_2,\ldots,X_N}$ with side information $x^n_{1,b}$.  Then in block $b+1$, the transmitter sends compression messages $\textbf{l}_{b-1}$ with source message $m_{b+1}$.  The coding is explained with the help of  \ref{tab:PDFNoisy}.%In this section we  extend  Scheme 1B  to  DM-MN. Similar extensions can be applied to Scheme 1A. 

\begin{table*}[ht!]
\begin{center}
\caption{ Coding scheme  for  multicast network without feedback}
\begin{tabular}{>{\bfseries}lcccccc}
\toprule
Block &1& $\ldots$ & $B$ &$B+1$ &$B+2$ \\
\midrule
$X_1$& $x^n_{1,1}(m''_{1}|m'_{1}, 1, {\textbf{1}})$ 
%& $x_{1,2}(m''_{2}|m'_{2},m'_{1}, {\textbf{1}})$
& $\ldots$   & $x^n_{1,B}(m''_{B}|m'_{B}, m'_{B\!-\!1}, {\textbf{l}}_{B\!-\!2})$ & $x^n_{1,B+1}(1|1, m'_{B}, {\textbf{l}}_{B\!-\!1})$ & $x^n_{1,B+2}(1|1, 1, {\textbf{l}}_{B})$ \\

$V_r$ &$v_{r,1}(1,1)$ 
%&$v_{r,2}({m}'_1,l_{r,0})$ 
& $\ldots$ &$v^n_{r,B}({m}'_{B\!-\!1},l_{r,B\!-\!2})$  & $v^n_{r,B+1}({m}'_{B},l_{r,B\!-\!1})$ &$v^n_{r,B+2}(1,l_{r,B})$\\

$X_r$ &$x^n_{r,1}(1|1,1)$ 
%&$x_{r,2}(l_{r,1}|{m}'_1,l_{r,0})$ 
& $\ldots$ &$x^n_{r,B}(l_{r,B\!-\!1}|{m}'_{B\!-\!1},l_{r,B\!-\!2})$  & $x^n_{r,B+1}(l_{r,B}|\hat{m}'_{B},l_{r,B\!-\!1})$ &$x^n_{r,B+2}(1|1,l_{r,B})$\\
$U_r$ &$u_{r,1}(m'_1|1,1)$ 
%&$u_{r,2}(m'_2|{m}'_1,l_{r,0})$ 
& $\ldots$   &$u^n_{r,B}(m'_B|{m}'_{B\!-\!1},l_{r,B\!-\!2})$  & $u^n_{r,B+1}(1|{m}'_{B},l_{r,B-1})$&$u^n_{r,B+2}(1|1,l_{r,B})$\\
$\hat{Y}_r$ &$\hat{y}^n_{r,1}(l_{r,1}|m'_1,1,1,1)$
% &$\hat{y}_{r,2}(l_{r,2}|{m}'_1,m'_2,l_{r,0},l_{r,1})$
 & $\ldots$   &$\hat{y}^n_{r,B}(l_{r,B}|{m}'_{B\!-\!1},m'_B,l_{r,B\!-\!2},l_{r,B\!-\!1})$  & $\hat{y}_{r,B+1}(1|{m}'_{B},1,l_{r,B-1},l_{r,B})$ & $\hat{y}^n_{r,B+2}(1|1,1,l_{r,B},1)$\\

$V_d$ &$v^n_{d,1}(1)$
% &$v_{d,2}(1)$
 & $\ldots$   &$v^n_{d,B}(l_{d,B\!-\!2})$  & $v^n_{d,B+1}(l_{d,B\!-\!1})$ &$v^n_{d,B+2}(l_{d,B})$\\
%$U_d$ &$u^n_{d,1}(m'_1|1)$ & $\ldots$   &$u^n_{d,B}(m'_B|l_{d,B\!-\!2})$  & $u^n_{d,B+1}(1|l_{d,B-1})$& $u^n_{d,B+2}(1|l_{d,B})$\\
$X_d$ &$x^n_{d,1}(1|1)$
% &$x_{d,2}(l_{d,1}|1)$
 & $\ldots$   &$x^n_{d,B}(l_{d,B\!-\!1}|l_{d,B\!-\!2})$  & $x^n_{d,B+1}(l_{d,B}|l_{d,B\!-\!1})$&$x^n_{d,B+2}(1|l_{d,B})$\\
$\hat{Y}_d$&$\hat{y}^n_{d,1}(l_{d,1}|1,1)$ 
%& $\hat{y}_{d,2}(l_{d,2}|1,l_{d,1})$
& $\ldots$ &  $\hat{y}^n_{d,B}(l_{d,B}|l_{d,B\!-\!2},l_{d,B\!-\!1})$ & $\hat{y}^n_{d,B+1}(1|l_{d,B\!-\!1},l_{d,B})$&  $\hat{y}^n_{d,B+2}(1|l_{d,B},1)$ \\
\midrule
$Y_d$&$
\hat{m}''_1$& $\ldots$ & $\leftarrow(\hat{m}''_B, \hat{m}'_{B\!-\!1}, \hat{\textbf{l}}_{B\!-\!2})$ & $\leftarrow( \hat{m}'_{B}, \hat{\textbf{l}}_{B-1})$ & $\leftarrow \hat{\textbf{l}}_{B}$\\
\bottomrule
%A\vspace{-3ex}
 \label{tab:PDFNoisy}
\end{tabular}
\end{center}
\end{table*}

%Given that at time $i$,  node $j\in[1:N]$ sends symbol $x_{j,i}\in\set{X}_j$ to the channel, and at the end of time $i$ it observes channel output $y_{j,i}\in\set{Y}_j$, with probability law $P_{Y^N_1|X_1^N}(y_1,\ldots,y_N|x_1,\ldots,x_N)$. 

%\begin{figure}[h!]
%\centering
%\includegraphics[width=0.45\textwidth]{}
%\caption{Multicast network without feedback} \label{fig:mRelaysNoisy}
%\end{figure}
\textit{1) Codebook:} 
Fix the pmf \begin{IEEEeqnarray}{rCl}
&& \left[\prod_{k=2}^N P_{V_k}P_{X_k|V_k}P_{U_k|V_k} \right]\left[\prod_{r\in\set{R}}\!  P_{\hat{Y}_r|U_rV_rX_rY_r}\right]\nonumber\\
&&\quad~\times \left[\prod_{d\in\set{D}} \!P_{\hat{Y}_d|V_dX_dY_d}\right]P_{X_1|V^N_2U(\set{R})}P_{Y_1^N|X_1^N}.
\end{IEEEeqnarray}  For block $b\in[1:B]$, split the message $m_b\in[1:2^{nR}]$ into $(m'_b,m''_b)$, where $m'_b$ and $m''_b$ are independently and uniformly distributed over the sets $\in[1:2^{nR'}]$ and $[1:2^{nR^{''}}]$, respectively, where $R', R''\geq 0$ and so that  
\begin{IEEEeqnarray}{rCl}\label{eq:RateStru2}
R=R'+R''.
\end{IEEEeqnarray} 
  Let  ${\textbf{l}}_{-1}=\textbf{l}_{0}=\textbf{1}_{[N\!-\!1]}$ and $m''_{B+1}=m'_{B+1}=m''_{B+2}=m'_{B+2}=1$.

 For  each $r\in\mathcal{R}$ and block $b\in[1:B+2]$,  randomly and independently generate $2^{n(R'+\hat{R}_r)}$ sequences $v_{r,b}^n(m'_{b-1}, l_{r,b-2})\sim \prod^n_{i=1}P_{V_r}(v_{r,b,i})$, with $m'_{b-1}\in[1:2^{nR'}]$ and $l_{r,b-2}\in[1:2^{n\hat{R}_r}]$.  For each $(m'_{b-1},{l}_{r,b-2})$, randomly and independently generate $2^{n\hat{R}_r}$ sequences $x_{r,b}^n(l_{r,b-1}|m'_{b-1}, l_{r,b-2})\sim \prod^n_{i=1}P_{X_r|V_r}(x_{r,b,i}|v_{r,b,i})$.  For each pair $(m'_{b-1},{l}_{r,b-2})$, randomly and independently generate $2^{nR'}$ sequences $u_{r,b}^n(m'_{b}|m'_{b-1},{l}_{r,b-2})\sim \prod^n_{i=1}P_{U_r|V_r}(u_{r,b,i}|v_{r,b,i})$. For each $(m'_{b},m'_{b-1},l_{r,b-2},l_{r,b-1})$, randomly and independently generate $2^{n\hat{R}_r}$ sequences $\hat{y}_{r,b}^n(l_{r,b}|m'_{b},m'_{b-1},l_{r,b-2},l_{r,b-1})\sim \prod^n_{i=1}P_{\hat{Y}_r|U_rX_rV_r}(\hat{y}_{r,b,i}|u_{r,b,i},x_{r,b,i},v_{r,b,i})$. 
 
  For each $d\in\set{D}$ and block $b\in[1:B+2]$,  randomly and independently generate $2^{n\hat{R}_d}$ sequences $v_{d,b}^n(l_{d,b-2})\sim \prod^n_{i=1}P_{V_d}(v_{d,b,i})$,  with $l_{d,b-2}\in[1:2^{n\hat{R}_d}]$. For each $l_{d,b-2}$,  randomly and independently generate $2^{n\hat{R}_d}$ sequences $x_{d,b}^n(l_{d,b-1}|l_{d,b-2})\sim \prod^n_{i=1}P_{X_d|V_d}(x_{d,b,i}|v_{d,b,i})$.  For each $(l_{d,b-2},l_{d,b-1})$, randomly and independently generate $2^{n\hat{R}_d}$ sequences $\hat{y}_{d,b}^n(l_{d,b}|l_{d,b-2},l_{d,b-1})\sim \prod^n_{i=1}P_{\hat{Y}_d|X_dV_d}(\hat{y}_{d,b,i}|x_{d,b,i},v_{d,b,i})$. 

For each $(m'_{b},m'_{b-1}, \textbf{l}_{b-2})$, randomly and independently generate $2^{n{R''}}$ sequences $x_{1,b}^n(m''_{b}|m'_{b}, m'_{b-1}, \textbf{l}_{b-2})\sim \prod^n_{i=1}P_{X_1|U(\set{R})V^N_2}(x_{1,b,i}|v_{2,b,i},\ldots,v_{N,b,i},\{u_{r,b,i}: r\in\set{R}\})$.

%  \item For each $d\in\set{D}$ and block $b\in[1:B+2]$,  randomly and independently generate $2^{n\hat{R}_d}$ sequences $v_{d,b}^n(l_{d,b-2})\sim \prod^n_{i=1}P_{V_d}(v_{d,b,i})$,  with $l_{d,b-2}\in[1:2^{n\hat{R}_d}]$. For each $l_{d,b-2}$,  randomly and independently generate $2^{n\hat{R}_d}$ sequences $x_{d,b}^n(l_{d,b-1}|l_{d,b-2})\sim \prod^n_{i=1}P_{X_d|V_d}(x_{d,b,i}|v_{d,b,i})$.  For each $(l_{d,b-2},l_{d,b-1})$, randomly and independently generate $2^{n\hat{R}_d}$ sequences $\hat{y}_{d,b}^n(l_{d,b}|l_{d,b-2},l_{d,b-1})\sim \prod^n_{i=1}P_{\hat{Y}_d|X_dV_d}(\hat{y}_{d,b,i}|x_{d,b,i},v_{d,b,i})$. 
%\end{itemize}

%For each $(m'_{b},m'_{b-1}, \textbf{l}_{b-2})$, randomly and independently generate $2^{n{R''}}$ sequences $x_{1,b}^n(m''_{b}|m'_{b}, m'_{b-1}, \textbf{l}_{b-2})\sim \prod^n_{i=1}P_{X_1|V^N_2U(\set{R})}(x_{1,b,i}|v_{2,b,i},\ldots,v_{N,b,i}, [u_{r,b,i}: r\in\set{R}]))$. 

%Encoding and decoding is explained with the help of Table \ref{tab:PDFNoisy}. 

Let \begin{IEEEeqnarray*}{rCl}
&&\textbf{v}'^n_{b}(\set{R}):=\{v^n_{r,b}(\hat{m}'_{b-1},\hat{l}_{r,b-2}): r\in\set{R}\}\\
&&\textbf{v}'^n_{b}(\set{D}):=\{v^n_{d,b}(\hat{l}_{d,b-2}): d\in\set{D}\}\\
&&\textbf{x}'^n_{b}(\set{R}):=\{{x}^n_{r,b}(\hat{l}_{r,b-1}|\hat{m}'_{b-1},\hat{l}_{r,b-2}): r\in\set{R}\}\\
&&\textbf{x}'^n_{b}(\set{D}):=\{ x^n_{d,b}(\hat{l}_{d,b-1}|\hat{l}_{d,b-2}) : d\in\set{D}\}\\
&&\textbf{u}'^n_{b}(\set{R}):=\{u^n_{r,b}(\hat{m}'_{b}|\hat{m}'_{b-1},{\hat{l}}_{r,b-2}): r\in\set{R}\}\\
%&&\textbf{u}'^n_{b} (\set{D}):=[u_{d,b}^n(\hat{m}'_{b}|{\hat{l}}_{d,b-2}) : d\in\set{D}]\\
&&\hat{\textbf{y}}'^n_b(\set{R}):= \{\hat{y}^n_{r,b}(\hat{l}_{r,b}|\hat{m}'_{b},\hat{m}'_{b-1}, \hat{l}_{r,b-2},\hat{l}_{r,b-1}): r\in\set{R}\}\\
&&\hat{\textbf{y}}'^n_b(\set{D}):= \{\hat{y}^n_{d,b}(\hat{l}_{d,b}|\hat{l}_{d,b-2},\hat{l}_{d,b-1}): d\in\set{D}\}.
\end{IEEEeqnarray*}
%\begin{IEEEeqnarray*}{rCl}
%&&\textbf{v}'^n_{b}(\set{R}):=[v^n_{r,b}(\hat{m}'_{b-1},\hat{l}_{r,b-2}), r\in\set{R}]\\
%&&\textbf{v}'^n_{b}(\set{D}):=[v^n_{d,b}(\hat{l}_{d,b-2}), d\in\set{D}]\\
%&&\textbf{x}'^n_{b}(\set{R}):=[{x}^n_{r,b}(\hat{l}_{r,b-1}|\hat{m}'_{b-1},\hat{l}_{r,b-2}): r\in\set{R}]\\
%&&\textbf{x}'^n_{b}(\set{D}):=[ x^n_{d,b}(\hat{l}_{d,b-1}|\hat{l}_{d,b-2}) : d\in\set{D}]\\
%&&\textbf{u}'^n_{b}(\set{R}):=[u^n_{r,b}(\hat{m}'_{b}|\hat{m}'_{b-1},{\hat{l}}_{r,b-2}): r\in\set{R}]\\
%&&\textbf{u}'^n_{b} (\set{D}):=[u_{d,b}^n(\hat{m}'_{b}|{\hat{l}}_{d,b-2}) : d\in\set{D}]\\
%&&\hat{\textbf{y}}'^n_b(\set{R}):= [\hat{y}^n_{r,b}(\hat{l}_{r,b}|\hat{m}'_{b},\hat{m}'_{b-1}, \hat{l}_{r,b-2},\hat{l}_{r,b-1}): r\in\set{R}]\\
%&&\hat{\textbf{y}}'^n_b(\set{D}):= [\hat{y}^n_{d,b}(\hat{l}_{d,b}|\hat{l}_{d,b-2},\hat{l}_{d,b-1}): d\in\set{D}].
%\end{IEEEeqnarray*}

%Fix pmf $p({{u}_{1}}{{u}_{2}})p({{\hat{y}}_{1}}\left| {{y}_{1}}{{u}_{1}}{{u}_{2}}\right.)p({{\hat{y}}_{2}}\left| {{y}_{2}}{{u}_{1}}{{u}_{2}}) \right.$. For each block $b=1,\ldots ,B$, generate ${{2}^{n({{R}_{1}}+{{{\tilde{R}}}_{1}}+R_{1}^{'})}}$ codewords $u_{1b}^{n}({{w}_{1b}},{{l}_{1b-1}},{{a}_{1b}})$, ${{w}_{1b}}=1,...,{{2}^{n{{R}_{1}}}}$, ${{l}_{1b-1}}=1,...,{{2}^{n{{{\tilde{R}}}_{1}}}}$, ${{a}_{1b}}=1,...,{{2}^{nR_{1}^{'}}}$ by choosing the symbols $u_{1b,i}^{{}}$ independently using ${{P}_{{{U}_{1}}}}$. Similarly generate ${{2}^{n({{R}_{1}}+{{{\tilde{R}}}_{1}}+R_{1}^{'})}}$ codewords $u_{2b}^{n}({{w}_{2b}},{{l}_{2b-1}},{{a}_{2b}})$ using ${{P}_{{{U}_{2}}}}$.

\textit{2) Source  encoding:}
  At each block $b\in[1:B+1]$, after observing $y^n_{1,b}$, it looks for $\hat{\textbf{l}}_{b-1}$ such that
\begin{IEEEeqnarray}{rCl}\label{eq:MNDecodeCF}
 \big(&& x^n_{1,b}({m}''_{b}|{m}'_{b},{m}'_{b-1},\hat{\textbf{l}}_{b-2}),  \textbf{v}'^n_{b}(\set{R}), \textbf{v}'^n_{b}(\set{D}), \textbf{x}'^n_{b}(\set{R}),\nonumber\\&&\quad \textbf{x}'^n_{b}(\set{D}),\textbf{u}'^n_{b}(\set{R}),  y^n_{1,b}\big) \!\in\!\mathcal{T}^n_{\epsilon/8}(P_{V_2^NX_1^NU_2^NY_1})\nonumber
\end{IEEEeqnarray}
where $\hat{m}''_{b}={m}''_{b}$, $\hat{m}'_{b}={m}'_{b}$ and $\hat{m}'_{b-1}={m}'_{b-1}$, since the transmitter knows the source message it sent.

%$\textbf{v}^n_{b}(\set{R})=[v^n_{r,b}(m'_{b-1},\hat{l}_{b-2}), r\in\set{R}]$, $\textbf{v}^n_{b}(\set{D})=[v^n_{d,b}(\hat{l}_{b-2}), d\in\set{D}]$, $\textbf{x}^n_{b}(\set{R})=[{x}^n_{r,b}(\hat{l}_{r,b-1}|{m}'_{b-1},\hat{l}_{r,b-2}): r\in\set{R}]$, $\textbf{x}^n_{b}(\set{D})=[ x^n_{d,b}(\hat{l}_{r,b-1}|\hat{l}_{d,b-2}) : d\in\set{D}]$,  $ \textbf{u}^n_{b}(\set{R})=[u^n_{r,b}({m}'_{b}|{m}'_{b-1},{\hat{l}}_{r,b-2},{\hat{l}}_{r,b-1}) : r\in\set{R}]$, $\textbf{u}^n_{b} (\set{D})=[u_{d,b}^n({m}'_{b}|{\hat{l}}_{d,b-2},{\hat{l}}_{d,b-1}) : d\in\set{D}]$ and $\hat{\textbf{y}}^n_b(\set{R})= [\hat{y}^n_{r,b}(\hat{l}_{r,b}|{m}'_{b},{m}'_{b-1}, \hat{l}_{r,b-2},\hat{l}_{r,b-1}): r\in\set{R}]$, $\hat{\textbf{y}}^n_b(\set{D})= [\hat{y}^n_{d,b}(\hat{l}_{d,b}|\hat{l}_{d,b-2},\hat{l}_{d,b-1}): d\in\set{D}]$. 

After finding  compression indices $\hat{\textbf{l}}_{b-1}$,  in block $b+1$ the transmitter sends $
x^n_{1,b+1}({m}''_{b+1}|{m}'_{b+1},{m}'_{b},\hat{\textbf{l}}_{b-1})
$.

By the  packing lemma, this step is successful with high probability if for all  subset $\set{J}\subseteq[2:N]$,
\begin{IEEEeqnarray}{rCl}\label{eq:MNfbratePDF}
\hat{R}(\set{J})&<& I(X(\set{J}); Y_1|X(\set{J}^c),V_2^N,U(\set{R}),X_1)-\delta(\epsilon/8).\quad%\sum_{k\in\set{T}}H(X_k|V_k) - H(X(T)|X(T^c),V_2^N,U_2^N,X_1,Y_1)-\delta(\epsilon/8)\nonumber\\
%&=&I(X(T); Y_1|X(T^c),V_2^N,U_2^N,X_1)-\delta(\epsilon/8)
\end{IEEEeqnarray}

\textit{3) Relay  encoding:}
Relay  nodes perform mixed compress-forward and partial decode-forward.   In each block $b\in[1:B+1]$, assume Relay $r\in\set{R}$ already knows $\hat{m}'_{b-1}$ from previous block. It looks for a unique index $\hat{m}'_{b}$ such that \footnote{Since each Relay $r\in\set{R}$   makes its own estimate of $m'_{b}$,  the precise notation should be $\hat{m}'^{(r)}_b$. For simplicity, we omit the superscript $(r)$.} 
\begin{IEEEeqnarray}{rCl}\label{eq:MNCom}
\big(&&v^n_{r,b}(\hat{m}'_{b-1},l_{r,b-2}),  x^n_{r,b}({l}_{r,b-1}|{\hat{m}}'_{b-1},{l}_{r,b-2}), \nonumber\\&&\quad u^n_{r,b}(\hat{m}'_{b}|{\hat{m}}'_{b-1},{{l}}_{r,b-2}), y^n_{r,b}\big) \in\mathcal{T}^n_{\epsilon/6}(P_{X_rY_rU_rV_r}),\nonumber
\end{IEEEeqnarray}
then it compresses $y_{r,b}^{n}$ by finding a unique index $l_{r,b}$ such that
\begin{IEEEeqnarray}{rCl}\label{eq:MNCpres}
%\big(v^n_{r,b}(\hat{m}'_{b-1},l_{r,b-2}),u^n_{r,b}(\hat{m}'_{b}|\hat{m}'_{b-1},{{l}}_{r,b-2},{{l}}_{r,b-1}),  x^n_{r,b}(l_{r,b-1}|\hat{m}'_{b-1},{l}_{r,b-2}),\hat{y}^n_{r,b}(l_{r,b}|\hat{m}'_{b},\hat{m}'_{b-1},{l}_{r,b-2},{l}_{r,b-1}),y^n_{r,b}\big) \in\mathcal{T}^n_\epsilon(P_{V_rU_rX_rY_r\hat{Y}_r}).
\big(&&v^n_{r,b},u^n_{r,b},  x^n_{r,b},y^n_{r,b},\nonumber\\
&&\quad \hat{y}^n_{r,b}(l_{r,b}|\hat{m}'_{b},\hat{m}'_{b-1},{l}_{r,b-2},{l}_{r,b-1})\big) \in\mathcal{T}^n_{\epsilon/4}(P_{V_rU_rX_rY_r\hat{Y}_r}).\nonumber
\end{IEEEeqnarray}
Then, in block $b+1$ it sends $
x^n_{r,b+1}(l_{r,b}|\hat{m}'_{b},l_{r,b-1})$.

By the covering and  packing lemma, this step is successful with high probability if for $r\in\set{R}$,
\begin{subequations}\label{eq:MNrate1PDF}
\begin{IEEEeqnarray}{rCl} 
R'&<& I(U_r;Y_r|V_r,X_r)-\delta(\epsilon/6)\\
\hat{R}_r&>&I(\hat{Y}_r;Y_r|V_r,X_r,U_r)+\delta(\epsilon/4).
\end{IEEEeqnarray}
\end{subequations}

\textit{4) Receiver  encoding:}
Receiver $d\in\set{D}$ compresses $y_{d,b}^{n}$ by finding a unique index $l_{d,b}$ such that
\begin{IEEEeqnarray}{rCl}\label{eq:MNRxCpress}
\big(&&v^n_{d,b}(l_{d,b-2}), x^n_{d,b}(l_{d,b-1}|l_{d,b-2}), \nonumber\\&&\quad\hat{y}^n_{d,b}(l_{d,b}|l_{d,b-2},l_{d,b-1}),y^n_{d,b}\big) \in\mathcal{T}^n_{\epsilon/4}(P_{V_dX_dY_d\hat{Y}_d}).\nonumber
\end{IEEEeqnarray}
Then, in block $b+1$ it sends $
x^n_{d,b+1}(l_{d,b}|l_{d,b-1})$.
%\item For the receiver $k\in\mathcal{D}$, it performs as the same way as relay node in set $\set{D}$.

By the covering and  packing lemmas, this step is successful with high probability if
\begin{IEEEeqnarray}{rCl}\label{eq:MNrate2PDF}
\hat{R}_d&>&I(\hat{Y}_d;Y_d|V_d,X_d)+\delta(\epsilon/4),\quad \text{for $d\in\set{D}$}.~~
\end{IEEEeqnarray}

\textit{5) Decoding:}
Receiver  $d\in\mathcal{D}$ performs backward decoding. For each block $b\in[B+2,\ldots ,1]$, it  looks for $(\hat{m}''_b,\hat{m}'_{b-1},\hat{\textbf{l}}_{b-2})$ such that \footnote{Receiver $d\in\set{D}$ knows $l_{d,b-2}$ since it generated itself.  Since each Receiver $d\in$ makes its own estimate of $({m}''_b,m'_{b-1},\textbf{l}_{b-2})$,  the  precise notation is $(\hat{m}''^{(d)}_b,\hat{m}'^{(d)}_{b-1},\hat{\textbf{l}}^{(d)}_{b-2})$. For simplicity, we omit the superscript $(d)$.} 
\begin{IEEEeqnarray}{rCl}\label{eq:MNDecode}
 \big(&& x^n_{1,b}(\hat{m}''_{b}|\hat{m}'_{b},\hat{m}'_{b-1},\hat{\textbf{l}}_{b-2}),  \textbf{v}'^n_{b}(\set{R}), \textbf{v}'^n_{b}(\set{D}), \textbf{x}'^n_{b}(\set{R}),\textbf{x}'^n_{b}(\set{D}), \nonumber\\&&\quad\textbf{u}'^n_{b}(\set{R}), \hat{\textbf{y}}'^n_b(\set{R}),\!\hat{\textbf{y}}'^n_b(\set{D}),\!y^n_{d,b}\big)\! \in\!\mathcal{T}^n_\epsilon(\!P_{V_2^NX_1^NU(\set{R})\hat{Y}_2^NY_d}\!).\nonumber
\end{IEEEeqnarray}

By the independence of the codebooks, the Markov lemma, packing lemma and  induction on backward decoding,  the decoding is successful with high probability if
 \begin{IEEEeqnarray}{rCl}\label{eq:MNsumhat1}
R&+&\hat{R}(\set{T}\cup\set{R})\nonumber\\
&<&  I(X_1,V(\set{T}\cup\set{R}),\!U(\set{R}),\!X(\set{T}\cup\set{R});\!\hat{Y}(\set{T}^c\cap\set{D}),\!Y_d|\nonumber\\
&&\hspace{3cm}V(\set{T}^c),\!X(\set{T}^c\cap\set{D}))\nonumber\\
&&+ \sum_{k\in\set{R}} H(\hat{Y}_k|X_k,U_k,V_k)+\sum_{j\in\set{D}\cap\set{T}} H(\hat{Y}_j|X_j,V_j)\nonumber\\
&&-H(\hat{Y}(\set{T}\cup\set{R})|V_2^N,X_1^N,U_2^N,\hat{Y}(\set{T}^c\cap\set{D}),Y_d)-\delta(\epsilon)\nonumber\\
%&&~   I(X_1,V(\set{T}),U(\set{R}),X(\set{T});\hat{Y}(\set{T}^c),Y_d|V(\set{T}^c),X(\set{T}^c)))\nonumber\\
%&&\quad+ \sum_{k\in\set{R}\cap\set{T}} H(\hat{Y}_k|X_k,U_k,V_k)+\sum_{j\in\set{D}\cap\set{T}} H(\hat{Y}_j|X_j,V_j)\nonumber\\
%&&\quad\quad-H(\hat{Y}(\set{T})|V_2^N,X_1^N,U(\set{R}),\hat{Y}(\set{T}^c),Y_d)-\delta(\epsilon),
\end{IEEEeqnarray}
and
  \begin{IEEEeqnarray}{rCl}\label{eq:MNsumhat2}
R''&+&\hat{R}(\set{T})\nonumber\\
%&&\!~<  \! I(X_1,\!V(\!\set{T}\!),\!U(\!\set{T}_\set{R}\!),\!X(\!\set{T}\!);\hat{Y}(\!\set{T}^c\!),\!Y_d|V(\!\set{T}^c\!),\!X(\!\set{T}^c),\!U(\!\set{T}_\set{R}^c))\nonumber\\
&<&  I(X_1,\!V(\set{T}),\!U(\set{T}),\!X(\set{T});\!\hat{Y}(\set{T}^c),\!Y_d|\nonumber\\
&&\hspace{4cm}
V(\set{T}^c),\!X(\set{T}^c),\!U(\set{T}^c))\nonumber\\
&&+ \sum_{k\in\set{R}\cap\set{T}} H(\hat{Y}_k|X_k,U_k,V_k)+\sum_{j\in\set{D}\cap\set{T}} H(\hat{Y}_j|X_j,V_j)\nonumber\\
&&-H(\hat{Y}(\set{T})|V_2^N,X_1^N,U_2^N,\hat{Y}(\set{T}^c),Y_d)-\delta(\epsilon)
%
%&&~   I(X_1,\!V(\set{T}),\!U(\set{T}_\set{R}),\!X(\set{T});\!\hat{Y}(\set{T}^c),\!Y_d|V(\set{T}^c),\!X(\set{T}^c),\!U(\set{T}_\set{R}^c))\nonumber\\
%&&\quad+ \sum_{k\in\set{R}\cap\set{T}} H(\hat{Y}_k|X_k,U_k,V_k)+\sum_{j\in\set{D}\cap\set{T}} H(\hat{Y}_j|X_j,V_j)\nonumber\\
%&&\quad\quad-H(\hat{Y}(\set{T})|V_2^N,X_1^N,U(\set{R}),\hat{Y}(\set{T}^c),Y_d)-\delta(\epsilon)
\end{IEEEeqnarray}
for all $\set{T}\subset [2:N]$  with $\mathcal{T}^c\cap\set{D}\neq \emptyset$ and $U_d=\emptyset$, for all $d\in\set{D}$.%, where $\set{T}_\set{R}=\set{T}\cap\set{R}$ and $\set{T}^c_\set{R}$ is the complement of $\set{T}_\set{R}$ in $\set{R}$. 

Combining (\ref{eq:RateStru2}--\ref{eq:MNsumhat2}), and  using Fourier-Motzkin elimination to eliminate $R',R'', \hat{R}_2,\ldots,\hat{R}_N$, we obtain Theorem \ref{Them:unicast3}.

\section{conclusion}

\end{document}